\documentclass[11pt,a4paper, reqno]{amsart}

\usepackage[utf8]{inputenc}
\usepackage[T1]{fontenc} 
\usepackage[english]{babel} 
\usepackage{amsmath,amssymb,amsthm}
\usepackage{graphicx}
\usepackage{geometry}
\usepackage{xcolor}
\usepackage{enumitem}
\usepackage{hyperref}
\usepackage{braket,stix} 
\usepackage{dsfont}

\usepackage[normalem]{ulem}

\usepackage{amssymb,amsthm,amsmath,amsxtra,dsfont,bbm,stix,color}
\usepackage{hyperref,cleveref} 

\usepackage{environ} % for math environment
\usepackage{pgfkeys} % Environment parameter keys

\usepackage{mathtools} % enhance math
\usepackage{stmaryrd} % a font
\usepackage{braket} % Bracket for quantum mechanics
\usepackage{enumitem}
\usepackage{mathrsfs}

\allowdisplaybreaks

\numberwithin{equation}{section}
\makeatletter

\makeatletter
\def\@tocline#1#2#3#4#5#6#7{\relax
  \ifnum #1>\c@tocdepth % then omit
  \else
    \par \addpenalty\@secpenalty\addvspace{#2}%
    \begingroup \hyphenpenalty\@M
    \@ifempty{#4}{%
      \@tempdima\csname r@tocindent\number#1\endcsname\relax
    }{%
      \@tempdima#4\relax
    }%
    \parindent\z@ \leftskip#3\relax \advance\leftskip\@tempdima\relax
    \rightskip\@pnumwidth plus4em \parfillskip-\@pnumwidth
    #5\leavevmode\hskip-\@tempdima
      \ifcase #1
       \or\or \hskip 1em \or \hskip 2em \else \hskip 3em \fi%
      #6\nobreak\relax
      \dotfill
      \hbox to\@pnumwidth{\@tocpagenum{#7}}
    \par
    \nobreak
    \endgroup
  \fi}
\makeatother

\newcommand{\tr}[1]{\textnormal{Tr}\left[ #1 \right]}
\newcommand{\ps}[2]{\left\langle #1 , #2 \right\rangle}

\newcommand{\dt}{\frac{\textnormal{d}}{\textnormal{d}t}}
\newcommand{\com}[2]{\left[#1 , #2 \right]}
\newcommand{\Ree}{\textnormal{Re}}
\newcommand{\Imm}{\textnormal{Im}}
\newcommand{\norm}[1]{\left \lVert #1 \right \lVert}

\usepackage[
  backend=biber,
  style=numeric,
  sorting=nyt,
  maxbibnames=99,
  doi=false,
  url=false,
  isbn=false,
  eprint=true,
  giveninits=true
]{biblatex}

\AtEveryBibitem{%
  \ifentrytype{article}{%
    \iffieldundef{journaltitle}
      {}
      {\clearfield{eprint}%
       \clearfield{eprinttype}%
       \clearfield{eprintclass}}
  }{}
}

\DeclareFieldFormat{title}{\mkbibemph{#1\isdot}}
\DeclareFieldFormat{booktitle}{\mkbibemph{#1\isdot}}
\DeclareFieldFormat{maintitle}{\mkbibemph{#1\isdot}}
\DeclareFieldFormat{journaltitle}{#1\isdot}

\newtheorem{theorem}{Theorem}[section]
\newtheorem{corollary}[theorem]{Corollary}
\newtheorem{proposition}[theorem]{Proposition}
\newtheorem{remark}[theorem]{Remark}
\newtheorem{lemma}[theorem]{Lemma}
\newtheorem{definition}[theorem]{Definition}
\newtheorem{assumption}[theorem]{Assumption}
\labelformat{definition}{definition}

\calclayout

\usepackage[colorinlistoftodos]{todonotes}

\usepackage{ulem}

\title{Classical limit of the Pauli--Fierz dynamics}

\author[L. Jougla]{Lucas Jougla}
\address{Constructor university Bremen}
\email{ljougla@constructor.university}

\author[N. Leopold]{Nikolai Leopold}
\address{Constructor university Bremen}
\email{nleopold@constructor.university}

\begin{document}

\maketitle

\begin{abstract}
\noindent We study the Schrödinger evolution generated by the Pauli--Fierz Hamiltonian, a model for nonrelativistic quantum electrodynamics,  in the classical limit $\hbar \rightarrow 0$.  In this regime, we rigorously derive the Newton--Maxwell equations of classical electrodynamics as effective dynamics approximating the time evolution. Our result complements prior work by an alternative derivation that provides explicit estimates on the rate of convergence, justifying the validity of the approximation for a special class of initial data.
\end{abstract}

\tableofcontents

\section{Introduction}

The invention of quantum electrodynamics (QED) led to a new description of the electromagnetic interaction that captures effects beyond classical electromagnetism. In many situations, however, these effects are negligible and classical electromagnetism remains a good approximation.
The goal of this work is to prove with mathematical rigor that, in the classical limit and for suitably chosen initial data, classical electromagnetism emerges as an approximate macroscopic theory from QED.
For the classical model, we take the Newton--Maxwell equations, also known as the Abraham model.
For a mathematically rigorous quantum model, we consider the spinless Pauli--Fierz Hamiltonian, which is obtained by canonical quantization of the Abraham model (see \cite{Spohn_2004}).
Our aim is to rigorously establish the Bohr correspondence principle, which states that the classical theory is recovered in situations where Planck's constant is effectively small and quantum effects are therefore negligible.

The problem has already been investigated in \cite{ammari2022derivationclassicalelectrodynamicscharges,knowles2009limiting}. We believe that the merit of this article is twofold.  On the one hand, we provide, in comparison to known results, explicit error estimates for the validity of the approximation, albeit for a smaller class of initial data.  
On the other hand, our approach -- viewing the system as a collection of tracer particles immersed in a coherent state of photons and proving the result by adapting techniques from \cite{ CARDENAS2022109403,Deckert2014}  -- is very straightforward. This suggests that it might be a good starting point to approach the more challenging open problem of the classical limit of the renormalized Nelson model.  \\

\paragraph{\textbf{Newton--Maxwell equations.}} 

We consider a system of $N \geq 1$ Newtonian particles in $\mathbb{R}^3$ which interact with the electromagnetic field, in the \textit{Coulomb gauge}. For simplicity, we assume the particles masses to be identical and equal to $1/2$. Moreover, their density is described by an even charge distribution, denoted by $\kappa$, for which more precise assumptions will be given later. The dynamics of the $j^{\textnormal{th}}$ particle is described by its position $p_j \in \mathbb{R}^3$ and momentum $q_j \in \mathbb{R}^3$. We  note $p = (p_1, \dots, p_N) \in \mathbb{R}^{3N}$ and $q = (q_1, \dots, q_N) \in \mathbb{R}^{3N}$. The interaction between the particles and the electromagnetic field is described by the vector potential
\begin{equation} \label{eq : def classical field}
\mathbf{A}_{\alpha}(x) = \sum_{\lambda = 1,2} \int_{\mathbb{R}^3} \frac{(2 \pi)^{-3/2}}{\sqrt{2|k|}} \boldsymbol{\epsilon}_\lambda(k)  \left(e^{ik\cdot x} \alpha(k,\lambda) + e^{-ik\cdot x}\overline{\alpha(k,\lambda)} \right) \textnormal{d}k \ . \\
\end{equation}
We also define the electric field by :
\begin{equation} \label{eq : classical electric field}
    \mathbf{E}_{\alpha}(x) = i \sum_{\lambda= 1,2} \int_{\mathbb{R}^3} (2 \pi)^{-3/2}\sqrt{\frac{|k|}{2}} \boldsymbol{\epsilon}_\lambda(k) \left( e^{ik \cdot x} \alpha(k,\lambda) - e^{-ik \cdot x} \overline{\alpha(k,\lambda)} \right) \textnormal{d}k \ . 
\end{equation}
Here, $\alpha : \mathbb{R}^3 \times \mathbb{C}^2 \rightarrow \mathbb{C}$ is a complex field. The two real polarization vectors $\left\{ \boldsymbol{\varepsilon}_\lambda(k) \right\}_{\lambda = 1,2}$ implement the \textit{Coulomb gauge}, that is $\nabla \cdot \mathbf{A}_\alpha = 0$, by requiring that 
\begin{equation} \label{eq : polarization vectors ONB}
    \left\{ \boldsymbol{\varepsilon}_\lambda(k), \frac{k}{|k|} \right\}_{\lambda = 1,2}
\end{equation}
is an orthonormal basis of $\mathbb{R}^3$. In this gauge, the particles interact with each other via the (smeared) Coulomb potential :
\begin{equation} \label{eq : def coulomb potential}
    \begin{split}
        V(x) & = \kappa \ast \kappa \ast |\cdot|^{-1} (x) \ . \\
    \end{split}
\end{equation}
The Newton--Maxwell equations of motion are given by 
\begin{equation} \label{eq : newton motion eqs PF}
\begin{split}
        \partial_t q_j & = 2 \left( p_j - \kappa \ast \mathbf{A}_{\alpha_t}(q_j) \right) \\
        \partial_t p_j & = 2 \sum_{\ell=1}^3 \left( p^\ell_j - \kappa \ast \mathbf{A}^\ell_{\alpha_t}(q_j) \right) \nabla \left( \kappa \ast \mathbf{A}^\ell_{\alpha_t}\right) (q_j) - \sum_{k \neq j} \left( \nabla V \right) \left(q_j - q_k \right) 
        \\
        i \partial_t \alpha_t(k,\lambda) & = |k| \alpha_t(k,\lambda) -2 \sum_{j=1}^N \frac{\mathcal{F}[\kappa](k)}{\sqrt{2 |k|}}  e^{-i k \cdot q_j} \boldsymbol{\epsilon}_{\lambda}(k) \cdot\left( p_j - \kappa \ast \mathbf{A}_{\alpha_t}(q_j) \right) \ , \\
\end{split}
\end{equation}
which are the Hamilton equations associated with the energy 
\begin{equation} \label{eq : energy NM}
    \mathcal{H}(q,p,\alpha) = \sum_{j=1}^N (p_j - \kappa \ast \mathbf{A}_\alpha(q_j))^2 + \sum_{1 \leq j < k \leq N} V(q_j - q_k) + \sum_{\lambda=1,2} \int_{\mathbb{R}^d}|k| |\alpha(k,\lambda)|^2 \textnormal{d}k \ . \\
\end{equation}
In the above,   $\mathcal{F}$ denotes the Fourier transform in $\mathbb{R}^3$,   defined by
\begin{equation}
\mathcal{F}\left[f\right](k) := \frac{1}{(2\pi)^{3/2}} \int_{\mathbb{R}^3} f(x) e^{-ik \cdot x} \textnormal{d}x \\
\end{equation}
for $f \in L^1(\mathbb{R}^3)$, or any suitable space where the definition can be extended. For later purposes, let us define
\begin{equation} \label{eq : def G}
\mathbf{G}_x(k,\lambda) = \frac{\mathcal{F}[\kappa](k)}{\sqrt{2|k|}}\boldsymbol{\epsilon}_{\lambda}(k) e^{-i k \cdot x} \ . \\
\end{equation}
We will denote the classical data described by the Newton--Maxwell system as $u = (q,p,\alpha)$. The precise functional setting to study these equations is given by the spaces $X^\sigma$ and $\dot{X}^\sigma$ which are defined by the norms 
\begin{equation} \label{eq : def X et Xdot sigma}
    \begin{split}
        \norm{u}^2_{X^\sigma} & = \sum_{j=1}^N\left( |q_j|^2 + |p_j|^2 \right) + \norm{\alpha}^2_{\mathfrak{h}^\sigma} \\
        \norm{u}^2_{\dot{X}^\sigma} & = \sum_{j=1}^N\left( |q_j|^2 + |p_j|^2 \right) + \norm{\alpha}^2_{\dot{\mathfrak{h}}^\sigma} \ , \\
    \end{split}
\end{equation}
where $\mathfrak{h}^\sigma$ and $\dot{\mathfrak{h}}^\sigma$ are the weighted $L^2$ spaces defined respectively by the norms :
\begin{equation} \label{eq : def norm alpha}
    \begin{split}
        \norm{\alpha}^2_{\mathfrak{h}^\sigma} & = \sum_{\lambda=1,2} \int_{\mathbb{R}^3} (1+|k|^2)^\sigma |\alpha(k,\lambda)|^2 \textnormal{d}k \\
         \norm{\alpha}^2_{\dot{\mathfrak{h}}^\sigma} & = \sum_{\lambda=1,2} \int_{\mathbb{R}^3} |k|^{2\sigma} |\alpha(k,\lambda)|^2 \textnormal{d}k \ . \\
    \end{split}
\end{equation} \\

\noindent \textbf{Pauli--Fierz Hamiltonian.} The canonical quantization of the Abraham model (see for example \cite{Spohn_2004}) results in the Pauli--Fierz Hamiltonian, which we choose as our mathematical model of non-relativistic quantum electrodynamics. It describes a system of $N$ quantum particles interacting with the quantized electromagnetic field.  The Hilbert space for this setting is the following
\begin{equation}
    \mathfrak{H} = L^2(\mathbb{R}^{3N}) \otimes \mathfrak{F} \ , \\
\end{equation}
where
\begin{equation}
    \mathfrak{F} = \bigoplus_{n \geq 0} \mathfrak{h}^{\otimes_s^n} \ , \quad \mathfrak{h} = L^2(\mathbb{R}^{3}; \mathbb{C}^2) \ . \\
\end{equation}
Here $\mathfrak{F}$ is the \textit{bosonic Fock space} over $\mathfrak{h}$. Elements of $\mathfrak{H}$ can be seen as infinite sequences $\Psi = \left\{\Psi^{(0)}, \Psi^{(1)}, \dots  \right\}$, where for all $n \in \mathbb{N}$,  $\Psi^{(n)} \in  L^2(\mathbb{R}^{3N}) \otimes \mathfrak{h}^{\otimes_s^n}$. Consequently, one can endow $\mathfrak{H}$ with the following inner product :
\begin{equation} \label{eq : inner product on H}
\left \langle \Psi, \Phi \right \rangle_{\mathfrak{H}} = \sum_{n=0}^\infty \sum_{\lambda_1, \dots, \lambda_n = 1}^2 \int_{\mathbb{R}^{3N}} \int_{\mathbb{R}^{3n}} \overline{\Psi^{(n)}(\mathbf{x}, \mathbf{k}, \boldsymbol{\lambda})} \Phi^{(n)}(\mathbf{x}, \mathbf{k}, \boldsymbol{\lambda}) \textnormal{d} \mathbf{x} \textnormal{d} \mathbf{k} \\
\end{equation}
for all $\Psi, \Phi \in \mathfrak{H}$, where $\mathbf{x} = (x_1, \dots, x_N) \in \mathbb{R}^{3N}$, $\mathbf{k} = (k_1, \dots, k_n) \in \mathbb{R}^{3n}$ and $\boldsymbol{\lambda} = (\lambda_1, \dots, \lambda_n) \in \left\{1,2 \right\}^n$. \\

For any $f \in \mathfrak{h}$, we define the bosonic creation and annihilation operators on $\mathfrak{F}$ by
\begin{equation}
    \begin{split}
        a(f) & = \sum_{\lambda = 1,2} \int_{\mathbb{R}^3} \overline{f(k,\lambda)} a_{k,\lambda} \textnormal{d}k \\
        a^*(f) & = \sum_{\lambda = 1,2} \int_{\mathbb{R}^3} f(k,\lambda) a^*_{k,\lambda}\textnormal{d}k \ , \\
    \end{split}
\end{equation}
where the operator-valued distributions $a_{k,\lambda}$ and $a^*_{k,\lambda}$ satisfy the canonical commutation relations (CCR)
\begin{equation}
    \com{a_{k,\lambda}}{a^*_{\ell,\mu}} = \delta_{\lambda,\mu}\delta(k - \ell) \ , \ \com{a_{k,\lambda}}{a_{\ell,\mu}} = \com{a^*_{k,\lambda}}{a^*_{\ell,\mu}} = 0 \ . \\
\end{equation}
The number of photon operator is defined by 
\begin{equation}
    \mathcal{N} = \sum_{\lambda = 1,2} \int_{\mathbb{R}^{3}}a^{*}_{k,\lambda} a_{k,\lambda} \textnormal{d}k \ . \\
\end{equation}
The quantized transverse vector potential and electric field are defined by
\begin{equation} \label{eq : quantized magnetic field}
    \hat{\mathbf{A}}(x)   = \sum_{\lambda = 1,2} \int_{\mathbb{R}^3} \frac{(2\pi)^{-3/2}}{\sqrt{2|k|}} \boldsymbol{\epsilon}_\lambda(k)  \left(e^{ik\cdot x} a_{k,\lambda} + e^{-ik\cdot x} a^*_{k,\lambda} \right) \textnormal{d}k \ ,
\end{equation}
and
\begin{equation} \label{eq : quantized electric field}
    \hat{\mathbf{E}}(x) = i \sum_{\lambda= 1,2} \int_{\mathbb{R}^3} (2 \pi)^{-3/2}\sqrt{\frac{|k|}{2}} \boldsymbol{\epsilon}_\lambda(k) \left( e^{ik \cdot x} a_{k,\lambda} - e^{-ik \cdot x} a^*_{k,\lambda}\right) \textnormal{d}k \ . \\
\end{equation}
The Coulomb gauge is again implemented through the polarization vectors, namely
\begin{equation}
    \nabla \cdot \hat{\mathbf{A}} = 0 \ . \\
\end{equation}
Note that in the above, we denoted by $x$ the position operator on the one particle space $L^2(\mathbb{R}^3)$, namely the multiplication operator by $x$. We will keep this notation throughout the rest of the article. We also define the position operator on the $N$ particle space $L^2(\mathbb{R}^{3N})$ by $\mathbf{x} = (x_1, \dots, x_n)$, and the momentum operator on the same space by $-i\hbar \nabla$, where $\nabla = (\nabla_{1}, \dots, \nabla_{N})$, and for all $j = 1, \dots, N$, $\nabla_j = (\partial_{j1}, \partial_{j2}, \partial_{j3})$. With these notations, the spinless Pauli--Fierz Hamiltonian is defined by 
\begin{equation} \label{eq : PF Hamiltonian}
    \mathbb{H}_{\hbar} = \sum_{j=1}^N( -i \hbar \nabla_j - \hbar^{1/2} \kappa \ast\hat{\mathbf{A}}(x_j) )^2 + \sum_{1 \leq j < k \leq N} V(x_j - x_k) + \hbar \sum_{\lambda = 1,2} \int_{\mathbb{R}^3} |k| a^*_{k, \lambda}a_{k,\lambda} \textnormal{d}k \ , \\
\end{equation}
which can be seen as the formal quantization of the classical Hamiltonian \eqref{eq : energy NM}. 
Under suitable assumptions on $\kappa$,  $\mathbb{H}_{\hbar}$ can be shown  \cite{Hiroshima2002,Matte2017,Spohn_2004}
to be self-adjoint on the domain of the free (non-interacting) Hamiltonian
\begin{equation}
    \mathbb{H}^0_{\hbar} = -\sum_{j=1}^N\hbar^2\Delta_j +  \hbar \sum_{\lambda = 1,2} \int_{\mathbb{R}^3} |k| a^*_{k, \lambda}a_{k,\lambda} \textnormal{d}k \ .
\end{equation} 
For later purposes, we also define the \textit{Hamiltonian of the free photon field} :
\begin{equation}
    H_{\textnormal{f}} := \sum_{\lambda = 1,2} \int_{\mathbb{R}^3} |k| a^*_{k, \lambda}a_{k,\lambda} \textnormal{d}k \ . \\
\end{equation}
Note that, for ease of notation, the speed of light and the particle charge are set to one,  $c=e=1$,  and the mass of the particles to $m=1/2$. The proofs can, however, be easily adapted to treat particles with different masses and charges.
Our first result concerns the time evolution of pure states.  In this case, the system\footnote{Note that we will consider the limit $\hbar \rightarrow 0$ and and therefore consider  a sequence of quantum systems.  For ease of notation, we omit the dependence of the wave functions on $\hbar $ throughout this article.} is described by  \textit{wave functions}, i.e elements of the Hilbert space $\mathfrak{H}$,  and the time evolution is governed by the Schr\"odinger equation 
\begin{equation} \label{eq : sch eq}
    \begin{cases}
        i \hbar \partial_t \Psi_t = \mathbb{H}_{\hbar}\Psi_t \\
     \left.\Psi_t \right|_{t=0} = \Psi \ . \\
    \end{cases}
\end{equation}
By Stone's theorem, the solution is given by
\begin{equation}
    \Psi_t = e^{-i \frac{t}{\hbar}\mathbb{H}_{\hbar}} \Psi \ . \\
\end{equation}
We will provide a second result for a specific class of mixed states, which are described by \textit{density matrices} on $\mathfrak{H}$,  i.e.,  positive trace-class operators:
\begin{equation}
    \mathcal{S}(\mathfrak{H}) = \left\{ \rho \textnormal{ self-adjoint operator on } \mathfrak{H} \ \| \ \rho \geq 0 \ , \textnormal{Tr}_{\mathfrak{H}} \left[ \rho \right] = 1 \right\} \ . \\
\end{equation}
Their time evolution is given by the von-Neumann equation 
\begin{equation}
\label{eq : von neumann equation}
    \begin{cases}
        i \hbar \partial_t \rho(t) = \com{\mathbb{H}_{\hbar}}{\rho(t)} \\
        \left. \rho(t) \right |_{t=0} = \rho \ , \\
    \end{cases}
\end{equation}
with solution 
\begin{equation} \label{eq : von neumann evolution}
    \rho(t) =  e^{-i \frac{t}{\hbar}\mathbb{H}_{\hbar}} \rho  e^{i \frac{t}{\hbar}\mathbb{H}_{\hbar}} \ . \\
\end{equation}

\section{Main results}

The goal of this article is to study the time evolution generated by the Pauli--Fierz Hamiltonian in the classical limit $\hbar \rightarrow 0$.  The reduced Planck constant $\hbar$ has a fixed value, and from a physical point of view,  this limit should be understood as an idealization of situations in which $\hbar$ is effectively small compared to the other parameters of the system.  
By dividing the Schr\"odinger equation~\eqref{eq : sch eq} by $\hbar$, one easily sees that this limit is equivalent to the choice where the speed of light and the reduced Planck constant are fixed, the mass of the particles scales inversely with the scaling parameter (which,  in our notation -- rather confusingly -- is still called $\hbar$),  and the particle charges scale inversely with the square root of the scaling parameter.   
In this sense,  the classical regime can be viewed as a system of heavy particles coupled to a strong electromagnetic field, as was done in \cite{knowles2009limiting}. This interpretation motivates our choice of initial data and the method of the proof.  More concretely,  we think of the system as being composed of heavy tracer particles that are immersed in a coherent state of many photons.  If the photons remain in a coherent states during the time evolution it is expected that the quantized electromagentic field can approximately be described by a classical field satisfying Maxwell's equations.
If the tracer particles are initially sharply localized in position and momentum (as is the case for Gaussian one-particle states), their large mass ensures that their variances in position and velocity remain of the same order of magnitude under the free Schr\"odinger evolution.  Our goal is to show that this still holds under the Pauli--Fierz dynamics and that the fluctuations around the coherent state are of subleading order.  The Ehrenfest equations of the particles' positions and momenta as well as the annihilation operator then suggest that the mean values of these quantities are described by the Newton--Maxwell equations. That this actually holds is the content of Theorem~\eqref{thm : main result}.  \\

In order to formulate our main result, let us first give some more definitions and assumptions. A coherent state of photons can be defined as the (bosonic) Fock space vector given by the action of the Weyl operator on the \textit{vacuum} $\Omega$ :
\begin{equation} \label{eq : coherent state}
    W(\alpha) \Omega = e^{-\norm{\alpha}^2/2}\bigoplus_{n=0}^\infty \frac{\alpha^{\otimes n}}{ \sqrt{n !}} \ , \\
\end{equation}
where, for any $\alpha \in \mathfrak{h}$, the unitary Weyl operator is defined as
\begin{equation} \label{eq : def weyl}
W(\alpha) = \exp \left( a(\alpha) - a^*(\alpha) \right) \ . \\
\end{equation}
We refer to \cite[Lemma 2.2]{rodnianski2007quantumfluctuationsrateconvergence} for some basic properties of the Weyl operator. Let us note one useful property that we will use very often in the article, that is the \textit{shifting property}
\begin{equation} \label{eq : shifting pty weyl}
	\begin{split}
		W^*(f) a_{k,\lambda} W(f) & = a_{k,\lambda} + f(k,\lambda)  \ . \\
	\end{split}
\end{equation}

We now specify some assumptions that we shall make throughout the rest of the article. First,  we make the following assumptions on the cutoff function $\kappa$.
\begin{assumption}[Assumptions on the cutoff] \label{assumptions : kappa}
    The cutoff $\kappa : \mathbb{R}^3 \rightarrow \mathbb{R}$ is assumed to be real-valued, even, and such that
    \begin{equation}
        \begin{split}
            |\cdot|^{-1} \mathcal{F}[\kappa] & \in L^2(\mathbb{R}^3) \ ,  \\
            |\cdot|^{\frac{3}{2} - \sigma} \mathcal{F}[\kappa] & \in L^2(\mathbb{R}^3) \\
        \end{split}
    \end{equation}
    for some $\sigma \in \left[1/2,1\right]$. \\
\end{assumption}
We then require the following conditions on the initial many-body state.
\begin{assumption}[Initial data] \label{assumptions : init 1}
Let $q_0 = (q_{0,1}, \dots, q_{0,N}) \in \mathbb{R}^{3N}$, $p_0 = (p_{0,1}, \dots, p_{0,N}) \in \mathbb{R}^{3N}$, $\alpha_0 \in \mathfrak{h}$, and  $C > 0$.
Moreover,  let $\Omega$ be the \textit{vacuum} of the Fock space and let $W$ be the Weyl operator defined in \eqref{eq : def weyl}. For each fixed $\hbar > 0$, we consider $\Psi_0 \in  \mathcal{D}\left(\mathbb{H}_{\hbar}^{1/2}\right) \cap \mathcal{D}\left(\mathcal{N}^{1/2}\right) \cap \mathcal{D}(\mathbf{x}) $ with $\norm{\Psi_0} = 1$ such that
\begin{equation} \label{eq : initial assumptions}
\begin{split}
         \norm{\left(\mathbf{x} - q_0 \right) \otimes \mathbb{1}_{\mathfrak{F}} \Psi_0}_{\mathfrak{H}}  &\leq C \hbar^{1/2}, 
        \\
         \norm{\left(-i \hbar \nabla  - p_{0} \right)\otimes \mathbb{1}_{\mathfrak{F}}  \Psi_0}_{\mathfrak{H}} &\leq C \hbar^{1/2},
        \\
          \ps{\Psi_0}{\mathbb{1}_{L^2(\mathbb{R}^{3N})}\otimes W(\hbar^{-1/2}\alpha_0) \mathcal{N}W^*(\hbar^{-1/2}\alpha_0)\Psi_0}_{\mathfrak{H}}  &\leq C . \\
\end{split}
\end{equation}
\end{assumption}

Let us point out some important facts regarding Assumption~\ref{assumptions : init 1}. \\

\begin{remark} \label{remark : rks on the initial assumptions}
\begin{itemize}
\item Assumption~\ref{assumptions : init 1} is satisfied if each particle is in a Gaussian state with variance $\hbar$ and the photon field is in a coherent state with mean photon number $\hbar^{-1}\norm{\alpha_0}^2_\mathfrak{h}$. More explicitly,  this state can be written as 
\begin{equation}
\label{eq:definition of Gaussian plus coherent state}
    \Psi^\hbar_u = \psi^\hbar_{q,p} \otimes W(\hbar^{-1/2} \alpha) \Omega \ , \quad u = (q,p, \alpha) \in \mathbb{R}^{3N} \times \mathbb{R}^{3N} \times \mathfrak{h} \ , \\
\end{equation}
where $\psi^\hbar_{q,p}$ is the Gaussian wave function with position $q$, momentum $p$ and variance $\hbar$, defined by :
\begin{equation} \label{eq : gaussian wave packet 3d}
\psi^\hbar_{q,p}(x) = (\pi \hbar)^{-3/4} \exp \left( - \frac{|x - q|^2}{2 \hbar} \right) \exp \left( \frac{i}{\hbar}p \cdot \left(x - q \right)\right) \ , \ \forall x \in \mathbb{R}^{3N} \ . \\
\end{equation}
It is a standard computation to show that the state $\Psi^\hbar_u$ indeed satisfies \eqref{eq : initial assumptions}. In particular,  we have $\ps{\Psi^\hbar_u}{W(\hbar^{-1/2}\alpha_0) \mathcal{N}W^*(\hbar^{-1/2}\alpha_0)\Psi^\hbar_u} = 0$. \\
\item As explained in Lemma~\ref{lemma : invariance of the domains for beta} below, the domain assumptions on $\Psi_0$ ensure that the LHS quantities in \eqref{eq : initial assumptions} are well defined during the time evolution of the Pauli--Fierz Hamiltonian, i.e when $\Psi_0$ is replaced by $\Psi_t $ . \\
\end{itemize}
\end{remark}
In this article, we rely on a well-posedness result for the Newton--Maxwell system, proved in \cite{ammari2022derivationclassicalelectrodynamicscharges}, which ensures the existence of a unique strong solution to \eqref{eq : newton motion eqs PF} : 
\begin{proposition}\cite[Theorem 1.3]{ammari2022derivationclassicalelectrodynamicscharges} \label{Theorem : well posedness AFH}
    Let $\sigma \in [1/2,1]$, and assume that Assumption \eqref{assumptions : kappa} is satisfied. Then, there exists a unique global strong solution 
\begin{equation}
    u(.) \in \mathcal{C}(\mathbb{R}; X^\sigma) \cap \mathcal{C}^1(\mathbb{R}; X^{\sigma-1})
\end{equation}
to the Newton--Maxwell system \eqref{eq : newton motion eqs PF}. \\ 
\end{proposition}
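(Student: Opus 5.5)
The statement is global well-posedness of the Newton--Maxwell system in $X^\sigma$. The plan is the standard one: local existence by a fixed point argument, then globalization using conservation of the energy \eqref{eq : energy NM}. We do not need the charge density $\kappa$ beyond Assumption~\ref{assumptions : kappa}.

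\textbf{Local theory via Duhamel.} First we rewrite \eqref{eq : newton motion eqs PF} in interaction-picture form. We set $\tilde\alpha_t = e^{it|k|}\alpha_t$, so that the field equation becomes an integral equation with no unbounded operator:
\begin{equation*}
\tilde\alpha_t = \alpha_0 + 2i\int_0^t e^{is|k|}\sum_{j}\mathbf{G}_{q_j(s)}\cdot\big(p_j(s)-\kappa\ast\mathbf{A}_{\alpha_s}(q_j(s))\big)\,\textnormal{d}s .
\end{equation*}
The particle equations stay as ODEs. The vector field is then locally Lipschitz on bounded sets of $X^\sigma$, by the following estimates.
\begin{itemize}
\item By Cauchy--Schwarz, $|\kappa\ast\mathbf{A}_\alpha(x)|$ and $|\nabla(\kappa\ast\mathbf{A}_\alpha)(x)|$ are bounded by $\norm{\alpha}_{\mathfrak h}$ times $\norm{|k|^{-1/2}\mathcal F[\kappa]}_2$, respectively $\norm{|k|^{1/2}\mathcal F[\kappa]}_2$. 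Both are finite by Assumption~\ref{assumptions : kappa}: interpolate between $|k|^{-1}\mathcal F[\kappa]\in L^2$ and $|k|^{3/2-\sigma}\mathcal F[\kappa]\in L^2$, using $\sigma\le1$.
\item These quantities are Lipschitz in $x$ through one further derivative. This again costs a power $|k|^{3/2}$ at most, up to weights absorbed by $\alpha\in\mathfrak h^\sigma$, i.e.\ $|k|^{3/2-\sigma}\mathcal F[\kappa]$.
\item The source $\mathbf{G}_{q}$ lies in $\mathfrak h^\sigma$ precisely because $|k|^{\sigma-1/2}\mathcal F[\kappa]\in L^2$. This also follows from the assumption, by interpolation at low and high frequencies.
\item $\nabla V$ is bounded and Lipschitz, since $\mathcal F[V]\sim|\mathcal F[\kappa]|^2|k|^{-2}$ and the assumption gives integrability of $|k|^{-1}|\mathcal F[\kappa]|^2$ and $|\mathcal F[\kappa]|^2$.
\end{itemize}
Picard iteration then yields a unique local mild solution in $\mathcal C([0,T];X^\sigma)$, with $T$ depending only on $\norm{u_0}_{X^\sigma}$. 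Reading off the equation shows $\partial_t\alpha\in\mathcal C(X^{\sigma-1})$, because $|k|\alpha$ loses one weight. This gives the strong-solution regularity.

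\textbf{Globalization.} Next we show that $\mathcal H$ is conserved along solutions. We do this first for regularized data ($\alpha_0\in\mathfrak h^{\sigma+1}$), where the formal computation $\frac{d}{dt}\mathcal H=0$ is justified, and then pass to general data by continuous dependence. Energy conservation controls $\norm{|k|^{1/2}\alpha}_2$ and $|p_j-\kappa\ast\mathbf A|$, hence $|p_j|$ via the bound on $\kappa\ast\mathbf A$. It does not control $\norm{\alpha}_{\mathfrak h}$ at low frequencies. For that we use Duhamel together with Grönwall: the forcing is bounded in $\mathfrak h^\sigma$ by a constant times $\sum_j|p_j-\kappa\ast\mathbf A|$, which is energy bounded. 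Hence $\norm{\alpha_t}_{\mathfrak h^\sigma}\le\norm{\alpha_0}_{\mathfrak h^\sigma}+Ct$. The bound $|q_j(t)|\le|q_j(0)|+Ct$ follows similarly. The $X^\sigma$ norm therefore cannot blow up in finite time, and the local solution extends to all $t\in\mathbb R$.

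\textbf{Main obstacle.} The delicate step is the rigorous justification of energy conservation at the low regularity $\sigma=1/2$, where $\mathcal H$ is only just finite. We would handle it via the regularization and continuity argument above. We also need to check carefully that the stated integrability of $\mathcal F[\kappa]$ suffices for the Lipschitz bound on the term $(p-\kappa\ast\mathbf A)\nabla(\kappa\ast\mathbf A)$.
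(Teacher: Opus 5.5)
\textbf{Approach.} The paper does not prove this proposition; it imports it from \cite{ammari2022derivationclassicalelectrodynamicscharges}. As the paper's literature section notes, the global well-posedness there is obtained through the quantum--classical correspondence. Your classical route is a legitimate and more elementary alternative: a local fixed point, followed by the energy-based a priori bounds. Those bounds are exactly the ones in Lemma~\ref{lemma : A priori estimates for the Newton--Maxwell solutions}, so your globalization step is fine.

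\textbf{Gap in the local theory.} As stated, the local step fails for $\sigma\in(1/2,1]$, because the vector field is \emph{not} locally Lipschitz on $X^\sigma$ under Assumption~\ref{assumptions : kappa}. Your third bullet only shows $\mathbf{G}_q\in\mathfrak{h}^\sigma$. Picard iteration in $\mathcal{C}([0,T];X^\sigma)$ also needs $\|\mathbf{G}_q-\mathbf{G}_{q'}\|_{\mathfrak{h}^\sigma}\lesssim|q-q'|$. That holds only if $|\cdot|^{\sigma+1/2}\mathcal{F}[\kappa]\in L^2$ at high frequencies, whereas the assumption only gives exponents up to $3/2-\sigma<\sigma+1/2$. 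Concretely, take $\sigma=1$ and $\mathcal{F}[\kappa](k)=(1+|k|^2)^{-a}$ with $1<a<3/2$. The assumption holds, yet $\|\mathbf{G}_q-\mathbf{G}_{q+h}\|_{\mathfrak{h}^1}\gtrsim|h|^{2a-2}$ for $|h|\le 1$, so there is no contraction in the $X^\sigma$ metric.

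\textbf{How to repair it.} The missing idea is to separate the two norms. Iterate on $\{u: u(0)=u_0,\ \sup_{t\le T}\|u(t)\|_{X^\sigma}\le R\}$, equipped with the metric of $\mathcal{C}([0,T];X^0)$; this set is complete by weak lower semicontinuity. Measured in $X^0$, all differences of the vector field are Lipschitz on $X^\sigma$-bounded sets, since $\mathbf{G}_q$ only needs $|\cdot|^{1/2}\mathcal{F}[\kappa]\in L^2$. The single place the strong norm is required is
$\|\nabla^2\boldsymbol{\mathbb{A}}_\alpha\|_{L^\infty}\lesssim\bigl\||\cdot|^{3/2}(1+|\cdot|^2)^{-\sigma/2}\mathcal{F}[\kappa]\bigr\|_{L^2}\|\alpha\|_{\mathfrak{h}^\sigma}$,
and this is covered by the uniform bound $R$. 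Continuity in $X^\sigma$, and with it the $\mathcal{C}^1(X^{\sigma-1})$ statement, then follows from the Duhamel formula, because $q\mapsto\mathbf{G}_q\in\mathfrak{h}^\sigma$ is still continuous by dominated convergence. Uniqueness holds among solutions bounded in $X^\sigma$. The existence time depends only on $\|u_0\|_{X^\sigma}$, so your globalization then applies.

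\textbf{Energy conservation.} Your argument here does not work as written, and it targets the wrong difficulty. For $\sigma>1/2$ the source $\mathbf{G}_q\cdot\tilde p$ lies in $\mathfrak{h}^s$ only for $s\le 2-\sigma<\sigma+1$. So data $\alpha_0\in\mathfrak{h}^{\sigma+1}$ do not produce solutions that stay in $\mathfrak{h}^{\sigma+1}$, and a non-Lipschitz local theory does not provide continuous dependence in $X^\sigma$. No regularization is needed. In your interaction picture, $\tilde\alpha\in\mathcal{C}^1(\mathbb{R};\mathfrak{h}^\sigma)$ and $\|\alpha_t\|_{\dot{\mathfrak{h}}^{1/2}}=\|\tilde\alpha_t\|_{\dot{\mathfrak{h}}^{1/2}}$. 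Hence $\mathcal{H}(u(t))$ can be differentiated directly for every $\sigma\ge 1/2$, which is the Duhamel computation behind Lemma~\ref{lemma : conservation of the energy}. In fact $\sigma=1/2$ is the benign case: there $|\cdot|\,\mathcal{F}[\kappa]\in L^2$ and the vector field \emph{is} Lipschitz on $X^{1/2}$. The real obstacle is the local theory for $\sigma\in(1/2,1]$ described above.
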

As a direct consequence, we obtain the conservation of the energy \eqref{eq : energy NM}.
\begin{lemma}[Conservation of the energy] \label{lemma : conservation of the energy}
    Let $\sigma \in [1/2,1]$. Let $u_0 = (q_0, p_0, \alpha_0) \in X^\sigma$ and let $u(t) = (q(t), p(t), \alpha_t)$ be the solution to \eqref{eq : newton motion eqs PF} with initial data $u_0$. Then, the corresponding energy functional $\mathcal{H}$ defined in \eqref{eq : energy NM} is conserved along the flow : for all $t \geq 0$, there holds
    \begin{equation} \label{eq : Conservation of the energy}
        \mathcal{H}(u(t)) = \mathcal{H}(u_0) \ . \\
    \end{equation}
\end{lemma}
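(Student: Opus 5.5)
We will show that $t \mapsto \mathcal{H}(u(t))$ has zero derivative, using the Hamiltonian structure of \eqref{eq : newton motion eqs PF}. The solution provided by Proposition~\ref{Theorem : well posedness AFH} is only in $\mathcal{C}(\mathbb{R};X^\sigma)\cap\mathcal{C}^1(\mathbb{R};X^{\sigma-1})$, so the formal computation has to be justified. We first do the computation for regular solutions and then pass to the limit.

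\emph{Step 1 (formal identity).} Write $v_j = p_j - \kappa\ast\mathbf{A}_{\alpha_t}(q_j)$. Differentiating $\mathcal{H}$ gives three contributions.

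The kinetic term contributes
\[
2\sum_j v_j\cdot\big(\partial_t p_j - \partial_t[\kappa\ast \mathbf{A}_{\alpha_t}](q_j) - (\partial_t q_j\cdot\nabla)(\kappa\ast\mathbf{A}_{\alpha_t})(q_j)\big).
\]
The Coulomb term contributes $\sum_{j}\nabla_{q_j}\big(\sum_{k<l}V(q_k-q_l)\big)\cdot\partial_t q_j$. The field term contributes $2\,\Ree\langle |k|\alpha_t,\partial_t\alpha_t\rangle$.

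Now insert the equations of motion.
\begin{itemize}
\item Since $\partial_t q_j = 2v_j$, the term $-2v_j\cdot(2v_j\cdot\nabla)(\kappa\ast\mathbf{A})$ cancels against the force term $2\sum_\ell v_j^\ell\nabla(\kappa\ast\mathbf{A}^\ell)$ in $\partial_t p_j$, paired with $2v_j$. Here we use that $\sum_{\ell,m}v^\ell v^m\partial_m A^\ell$ is symmetric in the way it appears, so both expressions equal $4\sum_{\ell,m}v_j^\ell v_j^m\partial_m(\kappa\ast A^\ell)(q_j)$.
\item The Coulomb force, using that $V$ is even so $\nabla V$ is odd, cancels the potential-energy derivative.
\item For the field, $\partial_t\alpha = -i|k|\alpha + 2i\sum_j \mathbf{G}_{q_j}\cdot v_j$. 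The term $\Ree\langle|k|\alpha,-i|k|\alpha\rangle$ vanishes. The remaining term $2\,\Ree\langle |k|\alpha, 2i\sum_j\mathbf{G}_{q_j}\cdot v_j\rangle$ must cancel $-2\sum_j v_j\cdot\partial_t[\kappa\ast\mathbf{A}_{\alpha_t}](q_j)$. This follows from the identity
\[
\kappa\ast\mathbf{A}_\alpha(x) = 2\,\Ree\langle \mathbf{G}_x,\alpha\rangle
\]
(up to the $(2\pi)^{-3/2}$ convention). With it, $\partial_t(\kappa\ast\mathbf{A}_{\alpha_t})(q_j)=2\,\Ree\langle\mathbf{G}_{q_j},\partial_t\alpha_t\rangle$.
\end{itemize}
Substituting the equation for $\partial_t\alpha$ there, the free part $-i|k|\alpha$ exactly matches the field-energy term. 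The interaction part produces $\Ree\big(i\,\overline{\cdot}\cdot\big)$ of a Hermitian quadratic form in $v$, which is real, so its real part times $i$ vanishes. In total $\frac{d}{dt}\mathcal{H}=0$.

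\emph{Step 2 (justification).} For $\sigma\ge 1/2$, $\alpha\in\mathfrak{h}^{1/2}$ makes the field energy finite. Assumption~\ref{assumptions : kappa} ensures $\mathbf{G}_x\in\mathfrak{h}$ and $|k|^{1/2}\mathbf{G}_x,\ k\mathbf{G}_x$-type quantities are in the needed dual spaces, so $\kappa\ast\mathbf{A}_\alpha$ is $C^1$ with bounds in $\|\alpha\|_{\mathfrak{h}^{\sigma}}$. However, $\langle|k|\alpha,\partial_t\alpha\rangle$ pairs $\mathfrak{h}^{1/2}$-type objects with $\partial_t\alpha\in\mathfrak{h}^{\sigma-1}$, which is not directly defined when $\sigma<1$.

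The cleanest route is to work in the interaction picture, $\tilde\alpha_t=e^{it|k|}\alpha_t$. This variable satisfies $\partial_t\tilde\alpha_t = 2i\sum_j e^{it|k|}\mathbf{G}_{q_j}\cdot v_j$, which lies in $\mathfrak{h}^{1/2}$ under the assumptions on $\kappa$, so $\tilde\alpha\in \mathcal{C}^1(\mathbb{R};\mathfrak{h}^{1/2})$. Also $\int|k||\alpha_t|^2=\int|k||\tilde\alpha_t|^2$. Writing $\mathcal{H}$ in the variables $(q,p,\tilde\alpha)$, every term is $\mathcal{C}^1$ in time and the computation of Step 1 is rigorous. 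Alternatively, one can regularize the initial data in $X^1$, use continuous dependence on data in $X^\sigma$ from the well-posedness theory, and pass to the limit, since $\mathcal{H}$ is continuous on $X^{1/2}$.

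The main obstacle is this regularity justification; I would use the interaction-picture argument. Backward times follow identically.
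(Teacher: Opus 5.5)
Your proposal is correct and is essentially the paper's argument, which the paper only indicates as a direct calculation using the Duhamel formula. Your interaction-picture variable $\Tilde{\alpha}_t = e^{it|k|}\alpha_t$ is just the differentiated form of the Duhamel formula $\alpha_t = e^{-it|k|}\alpha_0 + 2i\int_0^t e^{-i(t-s)|k|}\sum_{j}\mathbf{G}_{q_j(s)}\cdot \Tilde{p}_j(s)\,\textnormal{d}s$, which is what removes the free-field term $\langle |k|\alpha_t, -i|k|\alpha_t\rangle$ that need not be defined for $\sigma<1$; likewise, your cancellation of the $v_j^\ell v_j^m\,\partial_m(\kappa\ast\mathbf{A}^\ell)$ terms is the paper's use of the antisymmetry of the Faraday tensor in the evolution of $|\Tilde{p}_j|^2$.
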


\begin{proof}
The statement is proven by direct a calculation using Duhamel formula. \\
\end{proof}

Our main result shows that the inequalities \eqref{eq : initial assumptions} remain true during the time evolution,  when $u_0 = (q_0, p_0, \alpha_0)$ is replaced by $u(t) = (q(t), p(t), \alpha(t))$, the unique solution of the Newton--Maxwell equations \eqref{eq : newton motion eqs PF} with initial data $u_0$. 
\begin{theorem} \label{thm : main result}
    Let $\sigma \in \left[ 1/2, 1 \right]$. Let $u_0 = (q_0, p_0, \alpha_0) \in X^\sigma$, and let $\Psi_0 \in \mathfrak{H}$ satisfy Assumption~\ref{assumptions : init 1}.  Let $u(.)$ be the unique solution to the Newton--Maxwell system \eqref{eq : newton motion eqs PF} with the initial data $u_0$, and let $\Psi_t = e^{-i\frac{t}{\hbar}\mathbb{H}_{\hbar}} \Psi_0$ be the solution to the Schrödinger equation \eqref{eq : sch eq}. Then, there exist constants $C,c > 0$ such that for all $t \geq 0$ 
    \begin{equation} 
    \label{eq : cvgce in variance thm}
        \begin{split}
            \norm{\left(\mathbf{x} - q(t) \right) \otimes \mathbb{1}_{\mathfrak{F}} \Psi_t}_{\mathfrak{H}} & \leq Ce^{ct^2} \hbar^{1/2}, \\
            \norm{\left(-i\hbar \nabla - p(t) \right) \otimes \mathbb{1}_{\mathfrak{F}} \Psi_t}_{\mathfrak{H}} & \leq Ce^{ct^2} \hbar^{1/2}, \\
             \ps{\Psi_t}{\mathbb{1}_{L^2(\mathbb{R}^{3N})}\otimes W(\hbar^{-1/2}\alpha_t)\mathcal{N}W^*(\hbar^{-1/2}\alpha_t)\Psi_t}_{\mathfrak{H}} & \leq C e^{ct^2}. \\
        \end{split}
    \end{equation}
\end{theorem}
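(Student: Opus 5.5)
We argue by a Grönwall argument on the functional
\begin{equation*}
\beta(t) := \hbar^{-1}\norm{(\mathbf{x}-q(t))\Psi_t}^2 + \hbar^{-1}\norm{(-i\hbar\nabla - p(t))\Psi_t}^2 + \ps{\Psi_t}{W(\hbar^{-1/2}\alpha_t)\mathcal{N}W^*(\hbar^{-1/2}\alpha_t)\Psi_t},
\end{equation*}
which is finite for all $t$ by the domain invariance of Lemma~\ref{lemma : invariance of the domains for beta}. By Assumption~\ref{assumptions : init 1}, $\beta(0)\le 3C^2$. The aim is to prove
\begin{equation*}
\frac{d}{dt}\beta(t)\le c\,(1+t)\,\beta(t) + c\,(1+t).
\end{equation*}
Grönwall then gives $\beta(t)\le Ce^{ct^2}$, which is all three bounds in \eqref{eq : cvgce in variance thm}. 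The linear growth in $t$ is expected to come from the particle positions: $|q(t)|$ can grow linearly in time. By contrast, $|p(t)|$ and $\norm{\alpha_t}_{\dot{\mathfrak h}^{1/2}}$ stay bounded by the energy conservation of Lemma~\ref{lemma : conservation of the energy} together with Assumption~\ref{assumptions : kappa}, which bounds $\kappa\ast\mathbf{A}_\alpha$ in terms of $\norm{\alpha}_{\dot{\mathfrak h}^{1/2}}$ and $\norm{\alpha}_{\mathfrak h}$.

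\textbf{Step 1 (fluctuation dynamics).} Conjugate by the Weyl operator and pass to the fluctuation vector $\Phi_t = W^*(\hbar^{-1/2}\alpha_t)\Psi_t$. Using the shifting property \eqref{eq : shifting pty weyl}:
\begin{itemize}
\item $\hbar^{1/2}\kappa\ast\hat{\mathbf A}(x_j)$ becomes $\kappa\ast\mathbf A_{\alpha_t}(x_j) + \hbar^{1/2}\kappa\ast\hat{\mathbf A}(x_j)$;
\item the term $-\hbar^{1/2}\partial_t$ coming from the moving Weyl operator cancels, through the Maxwell equation in \eqref{eq : newton motion eqs PF}, the linear-in-$a,a^*$ terms of $\hbar H_f$ and of the cross term, up to a remainder.
\end{itemize}
This remainder is $\hbar^{1/2}$ times field operators smeared against $\mathbf G_{x_j}-\mathbf G_{q_j(t)}$, multiplied by $(p_j - \kappa\ast\mathbf A_{\alpha_t}(q_j))$ versus the operator $(-i\hbar\nabla_j - \kappa\ast\mathbf A_{\alpha_t}(x_j))$. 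The classical trajectory thus enters only through differences $x_j-q_j(t)$ and $-i\hbar\nabla_j - p_j(t)$.

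\textbf{Step 2 (particle terms).} Compute $\frac{d}{dt}\hbar^{-1}\norm{(x_j-q_j)\Psi_t}^2$. The commutator $\frac{i}{\hbar}[\mathbb H_\hbar,x_j] = 2(-i\hbar\nabla_j - \hbar^{1/2}\kappa\ast\hat{\mathbf A}(x_j))$ is compared with $\dot q_j = 2(p_j-\kappa\ast\mathbf A_{\alpha_t}(q_j))$. The difference splits into three pieces:
\begin{itemize}
\item $(-i\hbar\nabla_j - p_j)$;
\item $\kappa\ast\mathbf A_{\alpha_t}(x_j)-\kappa\ast\mathbf A_{\alpha_t}(q_j)$, controlled by $\norm{\nabla\kappa\ast\mathbf A_{\alpha_t}}_\infty|x_j-q_j|$, bounded via Assumption~\ref{assumptions : kappa} and the $X^\sigma$ norm of $u(t)$;
\item $\hbar^{1/2}\kappa\ast\hat{\mathbf A}(x_j)$ acting on the fluctuation, bounded by $\hbar^{1/2}\norm{(\mathcal N+1)^{1/2}\Phi_t}$ since $\mathbf G$ lies in $\mathfrak h$.
\end{itemize}
Cauchy–Schwarz turns each piece into a multiple of $\beta$.

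The momentum derivative is handled the same way. The commutator produces the Lorentz-type force $2\sum_\ell(\cdot)^\ell\nabla(\kappa\ast A^\ell)$ and $-\nabla V$. These are Taylor-expanded around $q(t)$, using that $\nabla^2(\kappa\ast\mathbf A_{\alpha})$ and $\nabla^2V$ are bounded under Assumption~\ref{assumptions : kappa}.

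\textbf{Step 3 (photon number) — the main obstacle.} Compute $\frac{d}{dt}\ps{\Phi_t}{\mathcal N\Phi_t}$, in which only the remainder terms of Step 1 contribute. The dangerous terms are:
\begin{itemize}
\item $\hbar^{-1/2}\,\Imm\ps{\Phi_t}{a(\mathbf G_{x_j}-\mathbf G_{q_j})(-i\hbar\nabla_j-\kappa\ast\mathbf A_{\alpha_t}(x_j))\Phi_t}$;
\item $\hbar^{-1/2}\,\Imm\ps{\Phi_t}{a(\mathbf G_{q_j})(-i\hbar\nabla_j - p_j - \dots)\Phi_t}$.
\end{itemize}
These carry an explicit $\hbar^{-1/2}$, which must be absorbed. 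For the first, use $\norm{\mathbf G_x-\mathbf G_q}_{\mathfrak h}\le C|x-q|$ (needs $|k|^{1/2}\mathcal F[\kappa]\in L^2$). For the second, use the momentum fluctuation directly. Each of these is of size $\hbar^{1/2}\beta^{1/2}$, which cancels the $\hbar^{-1/2}$. The quadratic $\hbar\,a^\sharp a^\sharp$ terms from $(\kappa\ast\hat{\mathbf A})^2$ are of order $\hbar\,\norm{(\mathcal N+1)\Phi_t}$ times $\hbar^{-1}$. Since $a^*a^*$ terms change the photon number by two, they require a bound on $\mathcal N$ that is not linear; here I would try to exploit $\ps{\Phi}{[\mathcal N,a(f)^2]\Phi}$ bounded by $\norm{f}^2\ps{\Phi}{(\mathcal N+1)\Phi}$, which works because both annihilators are smeared with $\mathbf G$, which lies in $\mathfrak h$.

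Collecting Steps 2–3, with all coefficients bounded by $c(1+|q(t)|+|p(t)|+\norm{\alpha_t}_{\mathfrak h^\sigma})\le c(1+t)$ via Proposition~\ref{Theorem : well posedness AFH} and Lemma~\ref{lemma : conservation of the energy}, closes the Grönwall inequality.
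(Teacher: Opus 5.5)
Your overall architecture (Grönwall on the position variance, a momentum variance and the Weyl-shifted photon number, with $(1+t)$ coefficients) is the paper's, and the position part of Step~2 is fine. The gap is the sentence ``the momentum derivative is handled the same way'' when applied to the \emph{canonical} momentum variance $\hbar^{-1}\norm{(-i\hbar\nabla-p(t))\Psi_t}^2$.

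Its Heisenberg derivative is $\hbar^{1/2}\sum_{\ell}\{P_j^\ell,\partial_m\boldsymbol{\hat{\mathbb{A}}}^\ell(x_j)\}-\sum_{k\neq j}(\partial_m V)(x_j-x_k)$, which must be compared with $2\sum_\ell\Tilde{p}^\ell_j\,\partial_m\boldsymbol{\mathbb{A}}^\ell_{\alpha_t}(q_j)$. Write $P^\ell_j=\Tilde{p}^\ell_j+\mathbb{G}^\ell_{jt}$ and $\hbar^{1/2}\partial_m\boldsymbol{\hat{\mathbb{A}}}^\ell(x_j)=\partial_m\boldsymbol{\mathbb{A}}^\ell_{\alpha_t}(q_j)+\delta^{\ell m}$. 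Here $\delta^{\ell m}$ collects both the field fluctuation and the Taylor remainder $\partial_m\boldsymbol{\mathbb{A}}^\ell_{\alpha_t}(x_j)-\partial_m\boldsymbol{\mathbb{A}}^\ell_{\alpha_t}(q_j)$. The difference then contains the product $\mathbb{G}^\ell_{jt}\,\delta^{\ell m}$ of two fluctuations, so after testing against $(-i\hbar\partial_{jm}-p^m_j)\Psi_t$ you face a \emph{cubic} term.

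Bounding that term requires quantities such as $\norm{\mathbb{G}^\ell_{jt}(-i\hbar\partial_{jm}-p^m_j)\Psi_t}$, $\norm{(x_j-q_j)\,\mathbb{G}^\ell_{jt}\Psi_t}$ or $\norm{(\mathcal{N}+1)^{1/2}(-i\hbar\nabla_j-p_j)\Phi_t}$. These are fourth-order moments that neither your $\beta$ nor Assumption~\ref{assumptions : init 1} controls, and their own evolution produces still higher moments. Taylor expanding around $q(t)$ only trades one fluctuation for another. No cancellation is available either, since $\partial_m\boldsymbol{\hat{\mathbb{A}}}^\ell$ has a nonvanishing symmetric part in $(\ell,m)$. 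This is exactly the obstruction described in Subsection~\ref{subsec : Strategy of the proof}.

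A milder form of the same problem sits in Step~3. There you must split $P_j\mathbf{G}_{x_j}-\Tilde{p}_j\mathbf{G}_{q_j}=(P_j-\Tilde{p}_j)\mathbf{G}_{x_j}+\Tilde{p}_j(\mathbf{G}_{x_j}-\mathbf{G}_{q_j})$, with the \emph{classical} $\Tilde{p}_j$ multiplying the difference. Putting the momentum operator in front of $a(\mathbf{G}_{x_j}-\mathbf{G}_{q_j})$, as you do, leads to the uncontrolled mixed moment $\ps{\Phi_t}{|x_j-q_j|^2\mathcal{N}\Phi_t}$.

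The missing idea is to put the \emph{kinetic} momentum fluctuation $\boldsymbol{\beta}^b=\sum_j\norm{(P_j-\Tilde{p}_j)\Psi_t}^2$ into the functional. Its derivative is driven by the quantum Lorentz force $\hbar^{1/2}\boldsymbol{\hat{\mathbb{E}}}^m(x_j)+\hbar^{1/2}\sum_\ell\{P^\ell_j,\hat{F}^{\ell m}(x_j)\}$, which involves the antisymmetric Faraday tensor. After inserting $P^m_j=\Tilde{p}^m_j+\mathbb{G}^m_{jt}$, the only term with two momentum operators is $\sum_{\ell,m}P^\ell_jP^m_j\hat{F}^{\ell m}(x_j)$. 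By antisymmetry and $\com{P^m_j}{P^\ell_j}=i\hbar^{3/2}\hat{F}^{\ell m}(x_j)$, it equals $-\tfrac{i}{2}\hbar^{3/2}\sum_{\ell,m}(\hat{F}^{\ell m}(x_j))^2$ and so has vanishing real part; the double-commutator term likewise drops out.

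What remains is $\mathbb{G}^\ell_{jt}\,\Tilde{p}^m_j\,(F^{\ell m}_{\alpha_t}(q_j)-\hbar^{1/2}\hat{F}^{\ell m}(x_j))$ plus the electric and Coulomb terms. This is quadratic in fluctuations, with the energy-bounded classical $\Tilde{p}_j$ as coefficient, and gives a bound $\lesssim(1+t)(\boldsymbol{\beta}+\hbar)$. It is the quantum version of $\Tilde{p}\cdot(\Tilde{p}\times(\nabla\times\boldsymbol{\mathbb{A}}))=0$.

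The theorem's bound on $-i\hbar\nabla-p(t)$, and the smallness of $\boldsymbol{\beta}^b$ at $t=0$, then follow from the two-sided comparison $\boldsymbol{\beta}^b\lesssim\boldsymbol{\Tilde{\beta}}+\hbar$, $\boldsymbol{\Tilde{\beta}}^b\lesssim\boldsymbol{\beta}+\hbar$ of Appendix~\ref{appendix : equivalence beta b}. That comparison rests on the same splitting you already use for the position variance.
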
 

\begin{remark}
Note that a similar result can be proven for the classical limit of the Nelson model with ultraviolet regularization, as considered in \cite{10.1063/5.0239967}.  For reasons of  presentation, we do not include the result and its proof in this article. In Remark~\ref{remark:Proof for Nelson model}, we explain how the proof must be adapted. \\
\end{remark}

From Theorem~\ref{thm : main result}, we can infer approximations for the expectation values of observables that depend on the particles' positions, momenta, or the field operator. To be more precise, we will consider the quantities $f(\mathcal{J}), f(J(t))$ of quantum and classical observables, where
\begin{equation} \label{eq : quantum and classical observables and f}
    \begin{split}
    		& \left( \mathcal{J}, J(t) \right) \in \left\{ \left(\mathbf{x}, q(t) \right), \left(-i\hbar\nabla, p(t) \right)\right\} \quad \textnormal{and } \  f \in W^{1,\infty}(\mathbb{R}^{3N}) \\
    		\textnormal{or } & \left( \mathcal{J}, J(t) \right) = \left(\hbar^{1/2}\Phi(g) , \phi_t(g)\right) \quad \textnormal{and } \  f \in W^{1,\infty}(\mathbb{R}^{3}) \ . \\
    \end{split}
\end{equation} 
Here, the last couple of observables is defined by 
\begin{equation}
\begin{split}
\Phi(g) & = a^*(g) + a(g) \\
\phi_t(g)  & = \ps{\alpha_t}{g} + \ps{g}{\alpha_t} \ , \\
\end{split}
\end{equation}
for some ($\mathbb{R}^3$-valued) $g \in \mathfrak{h}$. For example, if one takes $g = \mathbf{G}_x$ as defined in \eqref{eq : def G}, then $\phi_t(g) = \kappa \ast \mathbf{A}_{\alpha_t}(x)$ and $\Phi(g) = \kappa \ast \hat{\mathbf{A}}(x)$, the classical and quantized vector potentials. \\

In addition,  we can approximate the one-photon reduced density matrix $\gamma_{\Psi}$,  which is defined through its integral kernel as
\begin{equation} \label{eq : def 1PRDM wave fct}
    \gamma_{\Psi}(k,\lambda ; \ell, \nu) = \hbar \ps{\Psi}{\mathbb{1}_{L^2(\mathbb{R}^{3N})} \otimes a^*_{\ell, \nu}a_{k,\lambda}\Psi}_{\mathfrak{H}} .
\end{equation}
Note the normalization convention $\textnormal{Tr}_{\mathfrak{h}} (\gamma_{\Psi}) = \hbar \ps{\Psi}{ \mathcal{N} \Psi}_{\mathfrak{H}}$. \\

\begin{corollary} \label{Corollary : corollary of main theorem for pure states}
Let Theorem \ref{thm : main result} hold.  Let $(\mathcal{J}, J(t))$ and $f$ be defined as in \eqref{eq : quantum and classical observables and f}. Then, there exist constants $C, c > 0$ such that 
\begin{equation} \label{eq : expectation of observables first result}
    \left \lvert \ps{\Psi_t}{f(\mathcal{J}) \Psi_t} - f(J(t)) \right \lvert \leq Ce^{ct^2} \hbar^{1/2} \\
\end{equation}
and
\begin{equation} \label{eq : cvge density matrices}
\norm{ \gamma_{\Psi_t} - \ket{\alpha_t} \bra{\alpha_t}}_{\mathfrak{S}^1(\mathfrak{h})} \leq Ce^{ct^2} \min \left\{ \hbar^{1/2}, \hbar \right\} \ , \\
\end{equation}
where $\norm{\cdot}_{\mathfrak{S}^1(\mathfrak{h})} = \textnormal{Tr}_\mathfrak{h} |\cdot|$. \\
\end{corollary}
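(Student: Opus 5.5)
The plan is to reduce everything to the three bounds of Theorem~\ref{thm : main result}, after conjugating the photon part of $\Psi_t$ with the Weyl operator. Set $\Phi_t := \mathbb{1}\otimes W^*(\hbar^{-1/2}\alpha_t)\Psi_t$. The third line of \eqref{eq : cvgce in variance thm} says exactly $\ps{\Phi_t}{\mathcal{N}\Phi_t}\le Ce^{ct^2}$. The shifting property \eqref{eq : shifting pty weyl} gives $W^*(\hbar^{-1/2}\alpha_t)\,a_{k,\lambda}\,W(\hbar^{-1/2}\alpha_t)=a_{k,\lambda}+\hbar^{-1/2}\alpha_t(k,\lambda)$. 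The argument has two parts: the observable estimate \eqref{eq : expectation of observables first result}, then the density-matrix estimate \eqref{eq : cvge density matrices}.

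\textbf{Observables.} Let $f\in W^{1,\infty}$ and let $\mathcal{J}$ be a family of commuting self-adjoint operators: the components of $\mathbf{x}$, or of $-i\hbar\nabla$, or the single operator $\hbar^{1/2}\Phi(g)$.
\begin{itemize}
\item Joint spectral calculus gives the pointwise Lipschitz bound $|f(\mathcal{J})-f(J)|\le \|\nabla f\|_\infty |\mathcal{J}-J|$ at the operator level. Hence
\[
\left|\ps{\Psi_t}{(f(\mathcal{J})-f(J(t)))\Psi_t}\right|\le \|\nabla f\|_\infty \norm{(\mathcal{J}-J(t))\Psi_t}
\]
by Cauchy--Schwarz, using $\norm{\Psi_t}=1$.
\item For positions and momenta, the right-hand side is the first or second line of \eqref{eq : cvgce in variance thm}, which gives the bound $Ce^{ct^2}\hbar^{1/2}$.
\item For the field, the shift gives $W^*(\hbar^{-1/2}\alpha_t)\big(\hbar^{1/2}\Phi(g)-\phi_t(g)\big)W(\hbar^{-1/2}\alpha_t)=\hbar^{1/2}\Phi(g)$.
\item The standard estimate $\norm{\Phi(g)\xi}\le 2\norm{g}\,\norm{(\mathcal{N}+1)^{1/2}\xi}$ then gives the bound $2\hbar^{1/2}\norm{g}(1+Ce^{ct^2})^{1/2}$.
\end{itemize}

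\textbf{Density matrix.} Insert the shift into \eqref{eq : def 1PRDM wave fct}. Writing $\varphi_t(k,\lambda):=\ps{\Phi_t}{a_{k,\lambda}\Phi_t}$ and $\tilde\gamma_t(k,\lambda;\ell,\nu):=\ps{\Phi_t}{a^*_{\ell,\nu}a_{k,\lambda}\Phi_t}$, we obtain
\[
\gamma_{\Psi_t}-\ket{\alpha_t}\bra{\alpha_t}=\hbar^{1/2}\big(\ket{\alpha_t}\bra{\varphi_t}+\ket{\varphi_t}\bra{\alpha_t}\big)+\hbar\,\tilde\gamma_t .
\]
\begin{itemize}
\item The operator $\tilde\gamma_t$ is positive with trace $\ps{\Phi_t}{\mathcal{N}\Phi_t}\le Ce^{ct^2}$, so the last term has trace norm at most $Ce^{ct^2}\hbar$.
\item The rank-one terms have trace norm at most $2\hbar^{1/2}\norm{\alpha_t}\,\norm{\varphi_t}$.
\item We have $\norm{\varphi_t}\le\ps{\Phi_t}{\mathcal{N}\Phi_t}^{1/2}$, since $\sum_\lambda\int|\ps{\Phi}{a_{k,\lambda}\Phi}|^2\,\textnormal{d}k\le\ps{\Phi}{\mathcal{N}\Phi}$.
\item $\norm{\alpha_t}_{\mathfrak{h}}$ is bounded uniformly in time: $\norm{\alpha}_{\mathfrak{h}}\le\norm{\alpha}_{\mathfrak{h}^\sigma}$ for $\sigma\ge0$, and the $X^\sigma$-norm is controlled on bounded time intervals by Proposition~\ref{Theorem : well posedness AFH}. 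Alternatively, use the energy conservation of Lemma~\ref{lemma : conservation of the energy} together with a free-flow/Duhamel bound for the $L^2$ part.
\end{itemize}
Together these give $Ce^{ct^2}(\hbar^{1/2}+\hbar)$. This equals the stated $\min\{\hbar^{1/2},\hbar\}$ rate only in the regime where that minimum is attained by the larger of the two contributions, namely $\hbar\ge1$, where $\min=\hbar^{1/2}$ and $\hbar^{1/2}+\hbar\le2\hbar$.

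\textbf{Main obstacle.} The statement as worded requires the rate $\hbar$ for $\hbar\le1$. This needs the cross term improved to $\norm{\varphi_t}\lesssim\hbar^{1/2}$. The first attempt would be a Duhamel/Grönwall argument for the means $\varphi_t$, $\ps{\Psi_t}{(\mathbf{x}-q(t))\Psi_t}$ and $\ps{\Psi_t}{(-i\hbar\nabla-p(t))\Psi_t}$.
\begin{itemize}
\item By the Ehrenfest equations, the terms of the fluctuation generator that are linear in $a$ cancel, thanks to the Newton--Maxwell equation for $\alpha_t$. Only terms linear in the particle deviations remain, together with quadratic remainders.
\item The quadratic remainders are $O(\hbar)$ in expectation, by the first two lines of \eqref{eq : cvgce in variance thm} and the bound on $\mathcal{N}$.
\item The means therefore obey the linearized Newton--Maxwell system plus an $O(\hbar)$ source, so Grönwall propagates their size.
\end{itemize}
This yields $O(\hbar^{1/2})$ mean deviations at time $t$ provided they are $O(\hbar^{1/2})$ at $t=0$, which in particular requires $\norm{\varphi_0}=O(\hbar^{1/2})$. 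That initial smallness is not implied by Assumption~\ref{assumptions : init 1}, which only bounds $\ps{\Phi_0}{\mathcal{N}\Phi_0}$. It does hold, for instance, for the Gaussian--coherent states $\Psi^\hbar_u$ of \eqref{eq:definition of Gaussian plus coherent state}, where $\varphi_0=0$.

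So under Assumption~\ref{assumptions : init 1} alone, this plan establishes \eqref{eq : expectation of observables first result} and $Ce^{ct^2}(\hbar^{1/2}+\hbar)$ for \eqref{eq : cvge density matrices}, but not the literal $\min\{\hbar^{1/2},\hbar\}$ rate for $\hbar<1$. That rate needs the extra initial mean-smallness just described; verifying whether it can be removed, or whether the stated rate must be read with that restriction, is the step I expect to be the real obstacle.
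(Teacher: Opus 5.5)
Your argument is correct and is essentially the paper's proof. For \eqref{eq : expectation of observables first result} the paper uses the same spectral-measure Lipschitz bound followed by Cauchy--Schwarz. For the field it expands $\hbar^{1/2}a_{k,\lambda}-\alpha_t(k,\lambda)$ and uses \eqref{eq : rewriting for beta c} rather than conjugating by $W$, which is equivalent. For \eqref{eq : cvge density matrices} the paper invokes, by analogy with \cite[Lemma 7.1]{Leopold_2018}, the bound $\norm{\gamma_{\Psi_t}-\ket{\alpha_t}\bra{\alpha_t}}_{\mathfrak{S}^1(\mathfrak{h})}\le 3\boldsymbol{\beta}^c+6\norm{\alpha_t}_{\mathfrak{h}}(\boldsymbol{\beta}^c)^{1/2}$. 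Your Weyl-shift decomposition proves this directly, with constants $1$ and $2$. The paper then inserts $\boldsymbol{\beta}^c\le Ce^{ct^2}\hbar$ and Lemma~\ref{lemma : A priori estimates for the Newton--Maxwell solutions}.

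The obstacle you identify is not a missing step but a misprint in the statement. The paper's own proof yields exactly your bound $Ce^{ct^2}(\hbar^{1/2}+\hbar)\le 2Ce^{ct^2}\max\{\hbar^{1/2},\hbar\}$, so $\min$ should read $\max$. Your remark that the $\min$ is matched for $\hbar\ge 1$ is also off: there your bound is of order $\hbar$, which is the maximum.

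In fact the $\hbar$-rate is false under Assumption~\ref{assumptions : init 1} already at $t=0$, so no linearized Ehrenfest/Gr\"onwall argument can produce it. Take $\Psi_0=\psi^\hbar_{q_0,p_0}\otimes W(\hbar^{-1/2}\alpha_0)W(\beta)\Omega$ with $\alpha_0\neq 0$ and a fixed $\beta\in\mathfrak{h}^{1/2}$, $\norm{\beta}_{\mathfrak{h}}=1$, $\beta\perp\alpha_0$. Then the third quantity in \eqref{eq : initial assumptions} equals $\norm{\beta}_{\mathfrak{h}}^2=1$. In your notation $\varphi_0=\beta$ and $\tilde\gamma_0=\ket{\beta}\bra{\beta}$, so
\begin{equation*}
\gamma_{\Psi_0}-\ket{\alpha_0}\bra{\alpha_0}=\hbar^{1/2}\big(\ket{\alpha_0}\bra{\beta}+\ket{\beta}\bra{\alpha_0}\big)+\hbar\ket{\beta}\bra{\beta} .
\end{equation*}
Its trace norm is $(\hbar^2+4\hbar\norm{\alpha_0}^2_{\mathfrak{h}})^{1/2}\ge 2\norm{\alpha_0}_{\mathfrak{h}}\hbar^{1/2}$, which is not $O(\hbar)$ as $\hbar\to 0$ (and is $\ge\hbar$, so the $\min$ version also fails for large $\hbar$). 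An $O(\hbar)$ rate would need extra initial information such as your $\varphi_0=O(\hbar^{1/2})$, which is not part of the statement.

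One small correction: $\norm{\alpha_t}_{\mathfrak{h}}$ is not bounded uniformly in time, since energy conservation only controls $\norm{\alpha_t}_{\dot{\mathfrak{h}}^{1/2}}$. Also, Proposition~\ref{Theorem : well posedness AFH} gives no quantitative bound of the form $Ce^{ct^2}$. What you need, and what your energy-plus-Duhamel alternative gives, is the linear growth $\norm{u(t)}_{X^\sigma}\le C(t+1)$ from Lemma~\ref{lemma : A priori estimates for the Newton--Maxwell solutions}, which is absorbed into $Ce^{ct^2}$.
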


Using the linearity of the von-Neumann equation~\eqref{eq : von neumann equation} we can moreover approximate the time evolution of density matrices of the form 
\begin{equation} \label{eq : anti-wick quantization def}
    \rho_\hbar := \int_{X^\sigma} \ket{\Psi^\hbar_u} \bra{\Psi^\hbar_u} \textnormal{d}\mu(u) .
\end{equation}
Here,  $X^\sigma$ is the Hilbert space defined in \eqref{eq : def X et Xdot sigma},  $\Psi^\hbar_u$ is defined as in \eqref{eq:definition of Gaussian plus coherent state},  and $\mu$ is a Borel probability measure on $X^\sigma$. We denote by $\mathfrak{B}(X^\sigma)$ the set of such measures. 
Note that \eqref{eq : anti-wick quantization def} is well defined as a Bochner integral, with the integrand taking values in the Banach space $\mathfrak{S}^1(\mathfrak{H})$ of trace-class operators on $\mathfrak{H}$.  The one-photon reduced density matrix,  denoted $\Gamma_\rho$,  is defined in this setting by its integral kernel as
\begin{equation} \label{eq : def 1PRDM density matrix}
    \Gamma_\rho(k,\lambda ; \ell, \nu) = \hbar \tr{\mathbb{1}_{L^2(\mathbb{R}^{3N})} \otimes a^*_{\ell, \nu}a_{k,\lambda} \rho} \ . \\
\end{equation}
Moreover, in view of \eqref{eq : von neumann evolution},  the time evolution of $\rho_\hbar$ can be written as
\begin{equation} \label{eq : time evolved density matrix}
    \rho_\hbar(t) = \int_{X^\sigma} \ket{\Psi^\hbar_u(t)} \bra{\Psi^\hbar_u(t)} \textnormal{d}\mu(u) ,
\quad \text{where} \quad 
    \Psi^\hbar_u(t) = e^{-i \frac{t}{\hbar}\mathbb{H}_{\hbar}} \Psi^\hbar_u \ . \\
\end{equation}
The following result can be seen as a generalization of Corollary \ref{Corollary : corollary of main theorem for pure states} to the class of states defined in \eqref{eq : anti-wick quantization def}. \\

\begin{corollary} \label{corollary : more general states}
    Let $\sigma \in \left[1/2, 1 \right]$ and $u \in X^\sigma$.  Let $\mu \in \mathfrak{B}(X^\sigma)$ be such that
    \begin{equation} \label{eq : assumption on mu}
        \int_{X^\sigma} \norm{u}^2_{X^\sigma} \textnormal{d}\mu(u) < + \infty \ . \\
    \end{equation}
Consider the family $(\rho_\hbar)_{\hbar \in (0,1)}$  of density matrices defined in \eqref{eq : anti-wick quantization def},  given by
    \begin{equation}
        \rho_\hbar = \int_{X^\sigma} \ket{\Psi^\hbar_u} \bra{\Psi^\hbar_u} \textnormal{d}\mu(u) \ .
    \end{equation}
    Let $\rho_\hbar(t)$ be the solution of the von-Neumann equation \eqref{eq : von neumann evolution} with initial data $\rho_\hbar$, and let $(\mathcal{J}, J(t))$ and $f$ be defined as in \eqref{eq : quantum and classical observables and f} where, $\mu$-almost surely, $u(t) = (q(t), p(t), \alpha_t)$ is the solution to \eqref{eq : newton motion eqs PF}  with the initial data $u \in X^\sigma$. Then, there exist constants $C,c > 0$ such that for all $t \geq 0$
    \begin{equation} \label{eq : more general states convergence}
        \left \lvert \tr{f(\mathcal{J}) \rho_\hbar(t)} - \int_{X^\sigma} f(J(t)) \textnormal{d} \mu(u) \right \lvert \leq Ce^{ct^2} \hbar^{1/2} \\
    \end{equation}
    and
    \begin{equation} \label{eq : more general states reduced density matrices convergence}
        \norm{ \Gamma_{\rho_{\hbar}(t)} - \int_{X^\sigma} \ket{\alpha_t} \bra{\alpha_t} \textnormal{d}\mu(u)}_{\mathfrak{S}^1(\mathfrak{h})} \leq Ce^{ct^2} \min \left\{ \hbar^{1/2}, \hbar \right\} \ . \\
    \end{equation} \\
\end{corollary}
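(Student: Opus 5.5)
The plan is to reduce Corollary~\ref{corollary : more general states} to the pure-state bounds of Corollary~\ref{Corollary : corollary of main theorem for pure states}, applied fibrewise in $u$, and then to integrate against $\mu$. By linearity of the von Neumann evolution, \eqref{eq : time evolved density matrix} gives
\[
\tr{f(\mathcal{J})\rho_\hbar(t)} = \int_{X^\sigma} \ps{\Psi^\hbar_u(t)}{f(\mathcal{J})\Psi^\hbar_u(t)}\,\textnormal{d}\mu(u),
\qquad
\Gamma_{\rho_\hbar(t)} = \int_{X^\sigma} \gamma_{\Psi^\hbar_u(t)}\,\textnormal{d}\mu(u).
\]
For the trace identity one uses that $f(\mathcal{J})$ is bounded and $u\mapsto \ket{\Psi^\hbar_u(t)}\bra{\Psi^\hbar_u(t)}$ is Bochner integrable in $\mathfrak{S}^1(\mathfrak{H})$. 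The identity for $\Gamma$ should be checked by testing against finite-rank operators on $\mathfrak{h}$, using that $\hbar\,a^*(g)a(h)$ has finite expectation in each coherent state.

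It then suffices to prove the pointwise bounds
\[
\bigl|\ps{\Psi^\hbar_u(t)}{f(\mathcal{J})\Psi^\hbar_u(t)} - f(J(t))\bigr| \le C(u)e^{c(u)t^2}\hbar^{1/2},
\]
together with the analogous trace-norm bound for $\gamma_{\Psi^\hbar_u(t)} - \ket{\alpha_t}\bra{\alpha_t}$. We then integrate these in $\mu$, estimating the difference of integrals by the integral of the differences. By Remark~\ref{remark : rks on the initial assumptions}, $\Psi^\hbar_u$ satisfies Assumption~\ref{assumptions : init 1}. Its constant is explicit: the variance terms equal $(3N/2)^{1/2}\hbar^{1/2}$ and the fluctuation term is $0$, so the initial constant is uniform in $u$. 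Hence Theorem~\ref{thm : main result} and Corollary~\ref{Corollary : corollary of main theorem for pure states} apply for every $u\in X^\sigma$.

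\textbf{Main obstacle.} The constants $C,c$ produced by the proof of Theorem~\ref{thm : main result} may depend on $u$, and we need them to be $\mu$-integrable. I would go back through the Gronwall argument and track the dependence on the classical trajectory. The coefficients should involve quantities such as $\sup_{s\le t}\|\alpha_s\|_{\mathfrak{h}^\sigma}$, $|p(s)|$, and $\|\kappa\ast\mathbf{A}_{\alpha_s}\|_{W^{1,\infty}}$. By energy conservation (Lemma~\ref{lemma : conservation of the energy}), the kinetic part $|p_j-\kappa\ast\mathbf{A}_{\alpha_s}(q_j)|$ and $\|\alpha_s\|_{\dot{\mathfrak{h}}^{1/2}}$ are bounded by $\mathcal{H}(u)^{1/2}\lesssim 1+\|u\|^2_{X^\sigma}$. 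The hope is that $c$ can be made uniform: the $e^{ct^2}$ arises from a Gronwall rate that grows linearly in $t$ through $\|\alpha_s\|_{\mathfrak{h}}$, and this rate may involve $\|u\|$ only in a prefactor, not in the exponent. If so, one gets $C(u)\lesssim 1+\|u\|^2_{X^\sigma}$, which is integrable by \eqref{eq : assumption on mu}.

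If $u$-dependence in the exponent cannot be avoided, the fallback is to split $X^\sigma$ into $\{\|u\|\le R\}$ and its complement. On the complement, the trivial bounds $2\|f\|_\infty$ and $\hbar\ps{\Psi}{\mathcal{N}\Psi}+\|\alpha_t\|^2$ control the error, combined with Chebyshev's inequality. This would, however, cost in the rate, so I would first try to establish the polynomial dependence.

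For the reduced density matrix bound, the same fibrewise argument applies, since the trace norm is convex. The $\min\{\hbar^{1/2},\hbar\}$ then follows from the pointwise estimate, with the integrable constant handled as above.
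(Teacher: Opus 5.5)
Your reduction is the same as the paper's. The paper proves the two Bochner--Fubini identities for $\tr{f(\mathcal{J})\rho_\hbar(t)}$ and $\Gamma_{\rho_\hbar(t)}$, then integrates the pure-state bounds against $\mu$ using the triangle inequality. Your justification of the trace identity via $\norm{f(\mathcal{J})}\le\norm{f}_{L^\infty}$ is simpler than the paper's detour through $\mathcal{B}(\mathfrak{H},\mathcal{D}(\mathcal{J}))$. For $\Gamma_{\rho_\hbar(t)}$, note that you need a $\mu$-integrable bound on $\hbar\ps{\Psi^\hbar_u(t)}{\mathcal{N}\Psi^\hbar_u(t)}$ at time $t$, when the state is no longer coherent. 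A Gronwall argument using conservation of $\ps{\Psi^\hbar_u(t)}{\mathbb{H}_\hbar\Psi^\hbar_u(t)}$ gives this with polynomial dependence on $\norm{u}_{X^\sigma}$.

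The real problem is the step you flag yourself, and your proposed way around it does not work. A Gronwall coefficient $r_u(s)$ that depends on $u$ produces a factor $\exp(\int_0^t r_u(s)\,\textnormal{d}s)$. This factor is exponential in the size of $u$, whether you put it into $C$ or into $c$. So $C(u)\lesssim 1+\norm{u}^2_{X^\sigma}$ with a $u$-independent $c$ is out of reach. In the growth lemma, $r_u(s)$ comes from three sources: $\tilde p_j(s)$ times the difference of the classical Faraday tensor at $q_j(s)$ and $x_j$; the difference $\boldsymbol{\mathbb{E}}_{\alpha_s}(x_j)-\boldsymbol{\mathbb{E}}_{\alpha_s}(q_j(s))$; and the $\tilde p_j(s)$-term in $\boldsymbol{\beta}^c$. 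It is therefore of size $(1+|\tilde p_j(s)|)(1+\norm{\alpha_s}_{\mathfrak{h}^\sigma})$. The a priori lemma only gives $|\tilde p_j(s)|\lesssim 1+\mathcal{H}(u)^{1/2}$ and $\norm{\alpha_s}_{\mathfrak{h}^\sigma}\le\norm{\alpha}_{\mathfrak{h}^\sigma}+C(u)s$, with $C(u)$ growing in $\norm{u}_{X^\sigma}$. There is no reason to expect this to be a bookkeeping artefact. These coefficients are gradients of the Lorentz force along the trajectory; they drive the linearised Newton--Maxwell flow that governs wave-packet spreading, and they scale with the field amplitude.

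Hence, to get the rate $\hbar^{1/2}$ at fixed $t$ along this route, you need $\int_{X^\sigma}C(u)e^{c(u)t^2}\,\textnormal{d}\mu(u)<\infty$. The assumption $\int_{X^\sigma}\norm{u}^2_{X^\sigma}\,\textnormal{d}\mu(u)<\infty$ does not give this. A bound of the form $C'e^{c't^2}$ for all $t\ge0$ essentially requires $\mu$ to be supported in a bounded subset of $X^\sigma$; under that extra hypothesis all constants are uniform and your argument goes through as written. Your fallback (truncate at $\norm{u}_{X^\sigma}\le R$, use Chebyshev on the complement, optimise $R$ in $\hbar$) is correct. However, it only yields a rate that is a negative power of $\log(1/\hbar)$.

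The paper's proof has the same hole. It pulls the pure-state constant $Ce^{ct^2}$ out of the $\mu$-integral, and in its Fubini lemmas it uses $\boldsymbol{\beta}^c(\Psi^\hbar_u(t),\alpha_t)\lesssim_t\hbar$ uniformly in $u$. Yet by its own a priori lemma, $C$ and $c$ depend on $\norm{u_0}_{X^\sigma}$. So you have located the weak point correctly, but neither your proposal nor the paper closes it under the stated second-moment assumption.

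Finally, the fibrewise estimate $3\boldsymbol{\beta}^c+6\norm{\alpha_t}_{\mathfrak{h}}(\boldsymbol{\beta}^c)^{1/2}$ behind the reduced-density-matrix bound is of order $\hbar^{1/2}$. It therefore does not give $\min\{\hbar^{1/2},\hbar\}=\hbar$ for $\hbar\in(0,1)$. That part of your last sentence, inherited from the pure-state corollary, should read $\hbar^{1/2}$.
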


\paragraph{\textbf{Comparison with the literature}}

The present work belongs to the area of rigorous derivations of effective evolution equations,  which originated in the 1970s,  amongst others, with important contributions \cite{BH1977,ginibreveloclassicallimit,GV1979b,H1974,Lanford1975,S1981}
 by Braun,  Ginibre, Hepp, Lanford, Spohn and Velo and has since become and active field of research.  The first rigorous derivation of the Newton--Maxwell equations from the dynamics of the Pauli--Fierz Hamiltonian in the classical limit has been established in  \cite{knowles2009limiting}.  Based on the approach of \cite{H1974},  it proves that the Heisenberg evolution of Weyl operators converges strongly in the classical limit to the phase-space flow generated  by the Newton--Maxwell equations.  In \cite{ammari2022derivationclassicalelectrodynamicscharges}, a derivation by means of the Wigner measure approach is given,  extending the previous result to a larger class of states with fewer restrictions on the cutoff function $\kappa$.  Using the correspondence between the quantum and classical dynamics the authors also prove new global well-posedness results for the Newton--Maxwell equations in  energy-space under weak assumptions on the cutoff  function.
Within this work we make the same assumptions on the cutoff as in \cite{ammari2022derivationclassicalelectrodynamicscharges} and provide an alternative derivation for a special class of initial states,  as specified in Assumption~\ref{assumptions : init 1} and \eqref{eq : anti-wick quantization def}.  The novelty of our work is that we provide an explicit rate of convergence by explicitly estimating the fluctuations around the Newton--Maxwell dynamics.   We think of the classical regime as a system of heavy particles which couple to a strong magnetic field (as in  \cite{knowles2009limiting}) and prove our result by an adaption of the techniques from \cite{CARDENAS2022109403,Deckert2014},  which are concerned with tracer particles in  Bose--Einstein condensates.
In the sense that we study the growth of excitations generated by a fluctuation dynamics defined via Weyl operators,  our proof is closely related to \cite{rodnianski2007quantumfluctuationsrateconvergence}.
The main conceptual idea underlying our proof is to consider the fluctuations around the kinetic momentum,  which avoids the appearance of higher moments arising from the minimal coupling term in the Pauli--Fierz Hamiltonian (see Subsection~\ref{subsec : Strategy of the proof} for further details).
Let us also mention the works \cite{10.1063/5.0239967,G2025} which study the classical limits of the Nelson model with ultraviolet cutoff and the Fr\"ohlich model with ultraviolet cutoff by means of the Wigner measure approach.  In \cite{BCFFS2013} it is shown that the dynamics of an electron in a slowly varying external potential and interacting with the quantized electromagnetic field can be approximated by the dynamics of an electron with a renormalized dispersion law in the external potential.
 In addition,  effective dynamics for the Pauli--Fierz dynamics have been rigorously derived in the partially classically limit \cite{CFO2019,CFO2022},  the bosonic mean-field limit \cite{falconi2023derivation,LP2020},  and the fermionic mean-field limit \cite{leopold2024derivationmaxwellschrodingervlasovmaxwellequations, LS2026}.  For further references on the derivation of effective equations for other non-relativistic quantum field models in these limits,  we refer to \cite[Section~II.2]{leopold2024derivationmaxwellschrodingervlasovmaxwellequations}.  \\

\section{Proof of the main results}

\paragraph{\textbf{Notation}}

Throughout the article, we will denote by $C$ or $c$ a generic positive constant which may differ from one line to another, and can depend on fixed parameters of our setting, for example $N, \sigma$ or finite norms of the cutoff $\kappa$, but not $t$ or $\hbar$.  We will use the notation $A \lesssim B$ when $A \leq C B$ for some positive constant $C$ which can depend on the parameters as specified above. We will also use the notation $\lesssim_S$ to indicate the dependence of the implicit constant on an arbitrary set of parameters $S$. When the meaning is obvious, we will omit the notation for the tensor product with the identity, as well as the subscripts denoting the spaces on which a scalar product or norm is taken. For simplicity, the $N$ particles Laplacian (i.e acting on $L^2(\mathbb{R}^{3N})$) will be denoted as $- \Delta = - \sum_{j=1}^N \Delta_j$, where $-\Delta_j$ is the one particle Laplacian (i.e acting on $L^2(\mathbb{R}^{3})$). For a separable Hilbert space $\mathfrak{K}$, we denote by $\mathfrak{S}^1(\mathfrak{K})$ the space of trace-class operators on $\mathfrak{K}$ and by $\mathfrak{S}^2(\mathfrak{K})$ the space of Hilbert-Schmidt operators on $\mathfrak{K}$, with norm defined by
\begin{equation}
\norm{\mathcal{O}}^p_{\mathfrak{S}^p(\mathfrak{K})} = \textnormal{Tr} \left \lvert \mathcal{O} \right \lvert^p \quad , \ p = 1,2 \ . \\
\end{equation}
We moreover define the following shorthand notations to denote the regularized classical and quantized fields, by dropping the $\kappa$ dependence :
\begin{equation}
    \begin{split}
        & \boldsymbol{\hat{\mathbb{A}}} := \kappa \ast \hat{\mathbf{A}} \ , \quad \boldsymbol{\mathbb{A}}_\alpha := \kappa \ast \mathbf{A}_\alpha \\
        & \boldsymbol{\hat{\mathbb{E}}} := \kappa \ast \hat{\mathbf{E}} \ , \quad \boldsymbol{\mathbb{E}}_\alpha := \kappa \ast \mathbf{E}_\alpha \ . \\
    \end{split}
\end{equation} \\

\subsection{Strategy of the proof} \label{subsec : Strategy of the proof}

As mentioned earlier, we view the quantum system as consisting of heavy tracer particles interacting with a coherent state of photons. Since a coherent state of photons exhibits very few correlations, it can be regarded as behaving similarly to a condensate of particles. This interpretation suggests applying the techniques from \cite{ CARDENAS2022109403,Deckert2014}, which were used to show that a system of tracer particles immersed in a Bose–Einstein condensate can effectively be described by a coupled system consisting of Newton’s equations for the tracer particles and a PDE for the condensate wave function. 
In short,  the strategy for proving the main result is to analyze the time evolution of the three quantities in \eqref{eq : cvgce in variance thm} and then to conclude the statement of Theorem~\ref{thm : main result} using a Gr\"onwall estimate.

In pursuing this strategy, difficulties arise due to the fact that the time derivative of the variance of the momentum cannot be controlled by the quantities in \eqref{eq : cvgce in variance thm} in a way that is suitable for a Grönwall argument.  The situation is similar to the one on the classical level if one would like to estimate the absolute value of $p(t)$ by $|p(t)|$ and $\| \alpha_t \|_{\mathfrak{h}}$ with the help of the Duhamel expansion \footnote{Note that for the sake of exposition we place ourselves in the setting where $N=1$ and $V \equiv 0$ here.}
\begin{equation}
p(t) = p(0) + 2 \sum_{\ell=1}^3 \int_0^t  \left( p^\ell(s) - \boldsymbol{\mathbb{A}}^\ell_{\alpha_s}(q(s)) \right) \nabla \left(   \boldsymbol{\mathbb{A}}^\ell_{\alpha_s}\right) (q(s))  \,  \textnormal{d}s
\end{equation}
as the second term on the right hand side gives rise to $\| \alpha_s \|_{\mathfrak{h}}^2$. Now, observe that in order to control $|p(t)|$, it is actually much better to consider the \textit{kinetic momentum}
\begin{equation} \label{eq : physical momentum}
     \Tilde{p}(t) := \frac{1}{2} \partial_t q(t) = p(t) - \boldsymbol{\mathbb{A}}_{\alpha_t}(q(t)) \ , \\
\end{equation}
whose time evolution is given by the \textit{Lorentz force}
\begin{equation} \label{eq : classical evolution of the shifted velocity}
    \partial_t \Tilde{p}(t) =  \boldsymbol{\mathbb{E}}_{\alpha_t}(q(t)) + 2 \Tilde{p}(t) \times \left( \nabla \times \boldsymbol{\mathbb{A}}_{\alpha_t}(q(t)) \right) \ . \\
\end{equation}
Since the kinetic momentum $\Tilde{p}$ is perpendicular to the magnetic part $\nabla \times \boldsymbol{\mathbb{A}}$ of the Lorentz force,  we get
\begin{equation} \label{eq : classical cancellation}
\partial_t |\Tilde{p}(t)|^2 
= 2  \Tilde{p}(t) \cdot \boldsymbol{\mathbb{E}}_{\alpha_t}(q(t))
  \ . \\
\end{equation}
Consequently,  the Duhamel expansion of $|\Tilde{p}(t)|^2$ will lead to reduced powers of $|\Tilde{p}|$ and $\norm{\alpha}_{\mathfrak{h}}$ on the right-hand side, by using the fact that the part of the Lorentz force arising from the vector potential changes only the direction, but not the magnitude, of the particle's momentum.  Since $\Tilde{p}$ and $p$ can be related to each other by means of \eqref{eq : p tilda controled by p}, this gives rise to an estimate for $p$ (see the proof of Lemma~\ref{lemma : A priori estimates for the Newton--Maxwell solutions} for further details).
Note additionally that one can express \eqref{eq : classical evolution of the shifted velocity} by means of the \textit{Faraday} tensor
\begin{equation} \label{eq : classical faraday tensor}
    F^{\ell m}_\alpha = \partial_m \boldsymbol{\mathbb{A}}^\ell_{\alpha} - \partial_\ell \boldsymbol{\mathbb{A}}^m_{\alpha} \ , \\
\end{equation}
since  
\begin{equation} \label{eq : faraday tensor classical identity magnetic}
    \left( \Tilde{p} \times \left( \nabla \times \boldsymbol{\mathbb{A}}_{\alpha} \right) \right) ^m = \sum_{\ell=1}^3 \Tilde{p}^\ell F^{\ell m}_\alpha \ . \\
\end{equation}
Using this rewriting, one can see more easily how \eqref{eq : classical evolution of the shifted velocity} relates to QED. Indeed, the natural quantum analogue  of \eqref{eq : physical momentum} being 
\begin{equation} \label{eq : physical momentum quantum}
    P := -i \hbar \nabla - \hbar^{1/2} \boldsymbol{\hat{\mathbb{A}}}(x) \ , \\
\end{equation}
one formally computes its time evolution in the Heisenberg picture as
\begin{equation} \label{eq : heisenberg evolution P}
\dt e^{i \frac{t}{\hbar}\mathbb{H}_\hbar} P^m e^{-i \frac{t}{\hbar}\mathbb{H}_\hbar} = e^{i \frac{t}{\hbar}\mathbb{H}_\hbar} \left( \hbar^{1/2}\boldsymbol{\hat{\mathbb{E}}}^m(x) + \hbar^{1/2} \sum_{\ell=1}^3 \left\{P^\ell, \hat{F}^{\ell m} \right\} \right)  e^{-i \frac{t}{\hbar}\mathbb{H}_\hbar}  \ , \\
\end{equation}
where $\left\{ , \right\}$ stands for the anticommutator, and 
\begin{equation} \label{eq : quantum faraday tensor}
    \hat{F}^{\ell m} = \partial_m \boldsymbol{\hat{\mathbb{A}}}^\ell - \partial_\ell \boldsymbol{\hat{\mathbb{A}}}^m \\
\end{equation}
is the quantum analogue of \eqref{eq : classical faraday tensor}.  Using \eqref{eq : faraday tensor classical identity magnetic}, one sees the quantum - classical correspondence between \eqref{eq : heisenberg evolution P} and \eqref{eq : classical evolution of the shifted velocity}, where the classical field $\boldsymbol{\mathbb{E}}_{\alpha_t}$ is replaced by the quantum field $\hbar^{1/2}\boldsymbol{\hat{\mathbb{E}}}$, and the classical Lorentz force $\Tilde{p} \times \left( \nabla \times \boldsymbol{\mathbb{A}}_{\alpha} \right)$ is replaced by its quantum counterpart, with $m^{\textnormal{th}}$ component given by $\hbar^{1/2} \sum_\ell \left\{P^\ell, \hat{F}^{\ell m} \right\}$. Moreover, one obtains similarly
\begin{equation} \label{eq : heisenberg evolution P^2}
\dt e^{i \frac{t}{\hbar}\mathbb{H}_\hbar} \lvert P \lvert^2 e^{-i \frac{t}{\hbar}\mathbb{H}_\hbar} = e^{i \frac{t}{\hbar}\mathbb{H}_\hbar} \left( \hbar^{1/2} \left\{P,\boldsymbol{\hat{\mathbb{E}}}(x) \right\} - \hbar^{1/2} \sum_{\ell=1}^3 \sum_{m=1}^3 \left\{P^\ell, \left\{P^m,\hat{F}^{\ell m}(x) \right\} \right\} \right) e^{-i \frac{t}{\hbar}\mathbb{H}_\hbar} \ . \\
\end{equation}
There, \eqref{eq : classical cancellation} and the correspondance established above would suggest that the remaining anticommutators should at least be small in $\hbar$. In fact, manipulations using the antisymmetry of the Faraday tensor allow to express this remainder in terms of suitable commutators which, as we will see in the proof of Lemma \ref{lemma : change of beta b in time} below, prove that this intuition happens to be true at the rigorous level. \\

The previous discussion leads us to consider the variance in \textit{kinetic momentum} rather than the variance in momentum originally considered in \eqref{eq : initial assumptions}.  We define it below with the other quantities of interest in a functional which will be suitable for a Grönwall argument. We will use the classical and quantum kinetic momentums in the $N$-particles setting, which we define respectively by $\Tilde{p} = (\Tilde{p}_1, \dots, \Tilde{p}_N)$ and $P = (P_1, \dots, P_N)$, where :
\begin{equation} \label{eq : def kinetic momentums N particles}
\begin{split}
 \Tilde{p}_j & = p_j - \boldsymbol{\mathbb{A}}_{\alpha}(q_j) \\
 P_j & = -i\hbar \nabla_j - \hbar^{1/2} \boldsymbol{\hat{\mathbb{A}}}(x_j) \ , \\
\end{split}
\end{equation}
for all $j = 1, \dots, N$. \\

\begin{definition}[Definition of the functional] \label{def : Definition of the functional}
Let $\sigma \in [1/2,1]$.  For $u = (q,p,\alpha) \in X^\sigma$ and 
$\Psi \in \mathcal{D} \left( \left( - \Delta \right)^{1/2} \right) \cap \mathcal{D}(\mathbf{x}) \cap \mathcal{D}\left(\mathcal{N}^{1/2}\right)$ we define  
    \begin{equation} \label{eq : def of fctnals}
    \begin{split}
        \boldsymbol{\beta}^a \left(\Psi, q \right) & = \norm{ \left(\mathbf{x} - q \right) \Psi}^2 , \\
        \boldsymbol{\beta}^b \left(\Psi, u\right) & = \sum_{j=1}^N \norm{ \left(P_j - \Tilde{p}_j \right)\Psi}^2 , \\
        \boldsymbol{\beta}^c \left(\Psi, \alpha \right) & = \hbar \norm{\mathcal{N}^{1/2}W^*(\hbar^{-1/2}\alpha)\Psi}^2 .
    \end{split}
\end{equation}
Moreover,  we define $\boldsymbol{\beta} : \mathcal{D} \left( \left( - \Delta \right)^{1/2} \right) \cap \mathcal{D}(\mathbf{x}) \cap \mathcal{D}\left(\mathcal{N}^{1/2}\right) \times X^\sigma\longrightarrow \mathbb{R}^+$ by
\begin{equation}
    \boldsymbol{\beta}(\Psi,u) \coloneqq  \boldsymbol{\beta}^a \left(\Psi, q \right) + \boldsymbol{\beta}^b \left(\Psi, u\right) + \boldsymbol{\beta}^c \left(\Psi, \alpha \right) \ .
\end{equation} \\
\end{definition}
Note the following remarks about the definition above.
\begin{remark}

    \begin{itemize}
        \item Inequality \eqref{eq : operator inequality that makes beta b wd} ensures that $\boldsymbol{\beta}^b$ is indeed well defined on the given domain. \\
        \item 
The functional $\boldsymbol{\beta}$ is still well defined during the time evolution, i.e when one replaces $\Psi$ and $u$ by their time evolved counterparts under the Pauli--Fierz and Newton--Maxwell dynamics, respectively. This holds on the one hand thanks to Proposition~\ref{Theorem : well posedness AFH} which ensures that $u(t) \in X^\sigma$ for all $t \geq 0$, and on the other hand thanks to Lemma \ref{lemma : invariance of the domains for beta}, which shows that the domain of the many-body initial state is invariant during the time evolution. \\
        \item Let us point out the following identity, which follows from the shifting property of the Weyl operator, and which will be useful for later purposes :
        \begin{equation} \label{eq : rewriting for beta c}
    \boldsymbol{\beta}^c \left(\Psi, \alpha \right) = \sum_{\lambda = 1,2} \int_{\mathbb{R}^3} \norm{\left(\hbar^{1/2}a_{k,\lambda} - \alpha(k,\lambda) \right) \Psi}^2 \textnormal{d}k \ . \\
\end{equation}
    \end{itemize}
\end{remark}

\begin{remark} \label{remark:Proof for Nelson model}
Note that a similar strategy can be applied to derive Newton equations coupled to a scalar field from the dynamics generated by the Nelson model with ultraviolet cutoff in the classical limit, as considered for example in \cite{10.1063/5.0239967}. Due to the linear coupling, one can  study the growth of the fluctuations directly by estimating the quantities in \eqref{eq : cvgce in variance thm},  instead of introducing the functional $\boldsymbol{\beta}^b$ as in \eqref{eq : def of fctnals}. \\
\end{remark}

In order to ensure the well-definedness of the functional $\boldsymbol{\beta}$ during the time evolution of the Pauli--Fierz Hamiltonian, the invariance of $\mathcal{D}(\mathbb{H}_{\hbar}^{1/2})$, $\mathcal{D}(\mathcal{N}^{1/2})$ and $\mathcal{D}(\mathbf{x})$ is required, which we prove in Subsection \ref{subsect : invariance of the domains} below. Theorem \ref{thm : main result} follows from a Grönwall argument on the functional $\boldsymbol{\beta} = \boldsymbol{\beta}^a + \boldsymbol{\beta}^b + \boldsymbol{\beta}^c$. We thus have to study the time evolution of each of the three functionals, and see how they relate to each other. In the following lemma we collect the estimates which will allow us to close the argument. \\

\begin{lemma}[Growth of the functionals] \label{lemma : Growth of the functional}
Let $\sigma \in \left[1/2, 1 \right]$. Let $u_0 = (q_0,p_0,\alpha_0) \in X^\sigma$, and consider $u(t) = (q(t),p(t),\alpha_t)$ the associated solution of \eqref{eq : newton motion eqs PF}. Let $\Psi_0 \in  \mathcal{D}\left(\mathbb{H}_{\hbar}^{1/2}\right) \cap \mathcal{D}\left(\mathcal{N}^{1/2}\right) \cap \mathcal{D}(\mathbf{x})$, and let $\Psi_t = e^{-i \frac{t}{\hbar}\mathbb{H}_{\hbar}}\Psi_0$ be the associated solution of \eqref{eq : sch eq}. Then, there exists a constant $C >0$ such that the following bounds hold :
\begin{equation}
	\begin{split}
		\boldsymbol{\beta}^a(\Psi_t, q(t)) & \leq \boldsymbol{\beta}^a(\Psi_0, q_0) +  C\int_0^t \left(\boldsymbol{\beta}^a(\Psi_s, q(s)) + \boldsymbol{\beta}^b(\Psi_s, u(s)) \right) \textnormal{d}s \ ,  \\
		\boldsymbol{\beta}^b(\Psi_t, u(t)) & \leq \boldsymbol{\beta}^b(\Psi_0, u_0) + C(t+1) \int_0^t \left( \boldsymbol{\beta}(\Psi_s, u(s)) + \hbar \right) \textnormal{d}s \ , \\
		\boldsymbol{\beta}^c(\Psi_t, \alpha_t) & \leq \boldsymbol{\beta}^c(\Psi_0, \alpha_0) + C\int_0^t  \boldsymbol{\beta}(\Psi_s, u(s))  \textnormal{d}s\ . \\
	\end{split}
\end{equation}
\end{lemma}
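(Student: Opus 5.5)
For each of the three functionals, I would differentiate in time along the coupled evolution $(\Psi_t,u(t))$, integrate from $0$ to $t$, and bound the integrand. The formal computations are justified on the invariant domains of Lemma~\ref{lemma : invariance of the domains for beta}, first for regular $\Psi_0$ and then by density. I will use two facts throughout. The first is the a priori bound of Lemma~\ref{lemma : A priori estimates for the Newton--Maxwell solutions}, i.e.\ conservation of $\mathcal{H}$ from Lemma~\ref{lemma : conservation of the energy}. It keeps $|\tilde p(t)|$, $\norm{\alpha_t}_{\mathfrak{h}^\sigma}$ and the sup norms of $\boldsymbol{\mathbb{A}}_{\alpha_t},\boldsymbol{\mathbb{E}}_{\alpha_t}$ and their first derivatives bounded by constants for all $t$. (The growth of $|q(t)|$ is linear but never enters, since only differences appear.) The second is the splitting $\hbar^{1/2}a_{k,\lambda}=\alpha_t(k,\lambda)+(\hbar^{1/2}a_{k,\lambda}-\alpha_t(k,\lambda))$. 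Via \eqref{eq : rewriting for beta c} and Cauchy--Schwarz with the $L^2$ bounds on $\mathbf{G}_x$ and $|k|\mathbf{G}_x$ coming from Assumption~\ref{assumptions : kappa}, it gives
\[
\norm{\bigl(\hbar^{1/2}\boldsymbol{\hat{\mathbb{A}}}(x_j)-\boldsymbol{\mathbb{A}}_{\alpha_t}(x_j)\bigr)\Psi_t}^2\lesssim \boldsymbol{\beta}^c+\hbar ,
\]
and the same bound holds for $\boldsymbol{\hat{\mathbb{E}}}$ and for first derivatives. The $+\hbar$ is the commutator term from $a^*$. The Lipschitz bounds $|\boldsymbol{\mathbb{A}}_{\alpha}(x)-\boldsymbol{\mathbb{A}}_{\alpha}(q)|\lesssim \norm{\alpha}|x-q|$ then convert field differences evaluated at $x_j$ versus $q_j$ into $\boldsymbol{\beta}^a$.

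\emph{Step 1 ($\boldsymbol{\beta}^a$).} Since $\frac{i}{\hbar}[\mathbb{H}_\hbar,x_j]=2P_j$ and $\dot q_j=2\tilde p_j$, we get $\frac{d}{dt}\boldsymbol{\beta}^a=4\sum_j\mathrm{Re}\langle (x_j-q_j)\Psi_t,(P_j-\tilde p_j)\Psi_t\rangle$. Cauchy--Schwarz then gives the first inequality.

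\emph{Step 3 ($\boldsymbol{\beta}^c$).} Write $\boldsymbol{\beta}^c=\hbar\norm{\mathcal{N}^{1/2}\Phi_t}^2$ with $\Phi_t=W^*(\hbar^{-1/2}\alpha_t)\Psi_t$. The fluctuation generator is obtained by conjugating $\mathbb{H}_\hbar$ with the Weyl operator and subtracting the $i\hbar^{-1/2}\partial_t$ terms coming from the classical field equation. In it, the free field part commutes with $\mathcal{N}$, and the terms linear in $a,a^*$ with coefficient $\tilde p_j$ cancel against the Maxwell equation for $\alpha_t$ at $x=q$. What remains in $[\,\cdot\,,\mathcal{N}]$ are terms of the form
\[
\hbar^{1/2}\,a^\sharp\bigl(\mathbf{G}_{x_j}\bigr)\cdot\bigl(P_j-\tilde p_j\bigr)+\bigl(\tilde p_j\cdot(\mathbf{G}_{x_j}-\mathbf{G}_{q_j})\bigr)\text{-terms}.
\]
Each is bounded by $\boldsymbol{\beta}^a+\boldsymbol{\beta}^b+\boldsymbol{\beta}^c$. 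No $\hbar$ is lost: the prefactor $\hbar$ in $\boldsymbol{\beta}^c$ and the factor $\hbar^{-1}$ from the Schrödinger equation combine correctly with $\hbar^{1/2}a$.

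\emph{Step 2 ($\boldsymbol{\beta}^b$), the main obstacle.} Use \eqref{eq : heisenberg evolution P} and \eqref{eq : classical evolution of the shifted velocity}. The derivative is
\[
2\sum_j\mathrm{Re}\bigl\langle (P_j-\tilde p_j)\Psi,\bigl[\hbar^{1/2}\boldsymbol{\hat{\mathbb{E}}}(x_j)-\boldsymbol{\mathbb{E}}_\alpha(q_j)\bigr]\Psi+\bigl[\hbar^{1/2}\{P^\ell,\hat F^{\ell m}\}-2\tilde p^\ell F^{\ell m}_\alpha(q_j)\bigr]\Psi\bigr\rangle ,
\]
plus Coulomb terms, which are Lipschitz since $\nabla V$ is bounded with bounded derivative and hence give $\boldsymbol{\beta}^a$. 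The electric term is handled by the splitting above. For the magnetic term I write
\[
\hbar^{1/2}\hat F=F_\alpha(q)+\bigl(F_\alpha(x)-F_\alpha(q)\bigr)+\bigl(\hbar^{1/2}\hat F-F_\alpha(x)\bigr)
\]
and $P=\tilde p+(P-\tilde p)$. The dangerous piece is the cubic one, $\langle (P-\tilde p)^m,(P-\tilde p)^\ell(\hbar^{1/2}\hat F^{\ell m}-F_\alpha)\Psi\rangle$, which is not controlled by $\boldsymbol{\beta}$. Here I would exploit the antisymmetry of $F^{\ell m}$: symmetrizing in $(\ell,m)$ turns $\sum_{\ell m}\mathrm{Re}\langle D^m\Psi,D^\ell \hat F^{\ell m}\Psi\rangle$, with $D=P-\tilde p$, into a sum of commutator terms $[D^\ell,\hat F^{\ell m}]$ and $[D^m,D^\ell]$. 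Each carries a factor $\hbar$ times bounded derivatives of the fields, since $[P^\ell,P^m]=i\hbar\,\hbar^{1/2}\hat F$-type. These commutators still contain $\hbar^{1/2}a^\sharp$ factors, which are bounded via $\mathcal{N}$ by $\boldsymbol{\beta}^c+\norm{\alpha}^2+\hbar$. Combining this with $D\Psi$ by Cauchy--Schwarz produces terms of size $\boldsymbol{\beta}+\hbar$. The factor $(t+1)$ will come from the growth of $q(t)$, or of the field norms, in the intermediate bounds used for the commutator terms; this is where I expect the bookkeeping to be delicate. Finally I would sum the three inequalities and apply Grönwall to obtain $e^{ct^2}$ as in Theorem~\ref{thm : main result}.
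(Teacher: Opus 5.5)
Your Steps 1 and 3 follow the paper, as do the electric and Coulomb parts of Step 2. This includes the identity for $\boldsymbol{\beta}^a$, the fluctuation computation for $\boldsymbol{\beta}^c$, and the splitting of $\hbar^{1/2}\hat F^{\ell m}-F^{\ell m}_{\alpha}(q_j)$ into a Lipschitz part and a $\boldsymbol{\beta}^c$ part. The gap is in the cubic magnetic term. You plan to rewrite it through the commutators $[D^\ell,\hat F^{\ell m}]$ and $[D^m,D^\ell]$ and then \emph{estimate} them by Cauchy--Schwarz. But $\com{D^\ell}{\hbar^{1/2}\hat F^{\ell m}(x_j)}=-i\hbar^{3/2}\,\partial_\ell\hat F^{\ell m}(x_j)$, and $\sum_\ell\partial_\ell\hat F^{\ell m}=-\Delta\boldsymbol{\hat{\mathbb{A}}}^m$ is a field operator with form factor $|k|^2\mathbf{G}_x$. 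Its $\mathfrak{h}$-norm equals $\norm{|\cdot|^{3/2}\mathcal{F}[\kappa]}_{L^2}$. Assumption~\ref{assumptions : kappa} only gives $|\cdot|^{3/2-\sigma}\mathcal{F}[\kappa]\in L^2$ with $\sigma\geq 1/2$, so this norm may be infinite. These are therefore not ``bounded derivatives of the fields'', and the control via $\mathcal{N}$ by $\boldsymbol{\beta}^c+\norm{\alpha}^2+\hbar$ is unavailable. The other piece, $\com{D^m}{D^\ell}=i\hbar^{3/2}\hat F^{\ell m}$, involves only first derivatives and could be estimated.

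The missing idea is that the cubic term need not be estimated at all: its real part vanishes identically. This is the quantum counterpart of $\Tilde p\cdot(\Tilde p\times B)=0$. Set $M^{\ell m}:=\hbar^{1/2}\hat F^{\ell m}(x_j)-F^{\ell m}_{\alpha}(q_j)$, which is self-adjoint and antisymmetric in $(\ell,m)$. Conjugating and relabelling shows $\sum_{\ell,m}\ps{D^m\Psi}{M^{\ell m}D^\ell\Psi}\in i\mathbb{R}$. Also $\sum_{\ell,m}D^mD^\ell M^{\ell m}=\frac12\sum_{\ell,m}\com{D^m}{D^\ell}M^{\ell m}=\frac{i\hbar^{3/2}}{2}\sum_{\ell,m}\hat F^{\ell m}M^{\ell m}$, whose expectation is purely imaginary. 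Hence $\Ree\sum_{\ell,m}\ps{D^m\Psi}{\{D^\ell,M^{\ell m}\}\Psi}=0$, and $\partial_\ell\hat F^{\ell m}$ never has to be bounded. This is exactly the paper's $\mathcal{E}^{(b,2)}_{n,t}=\mathcal{E}^{(b,3)}_{n,t}=0$. The paper computes it on $\mathcal{D}(\mathbb{H}_\hbar)$, so that only $\norm{P^mP^\ell\Psi}$ and $\norm{|\hat F^{\ell m}|^2\Psi}$ appear, and extends to $\mathcal{D}(\mathbb{H}_\hbar^{1/2})$ by density in the graph norm. What survives is $4\Ree\sum\Tilde p^\ell\ps{D^m\Psi}{M^{\ell m}\Psi}$, which your splitting handles.

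Separately, your claim that $\norm{\alpha_t}_{\mathfrak{h}^\sigma}$ stays uniformly bounded is wrong. Energy conservation controls only $\norm{\alpha_t}_{\dot{\mathfrak{h}}^{1/2}}$ and $|\Tilde p|$, whereas $\norm{\alpha_t}_{\mathfrak{h}^\sigma}\leq C(t+1)$. The Lipschitz bounds for $F_\alpha$ and $\boldsymbol{\mathbb{E}}_\alpha$ between $x_j$ and $q_j(t)$ need second derivatives, controlled by $\norm{|\cdot|^{3/2-\sigma}\mathcal{F}[\kappa]}_{L^2}\norm{\alpha_t}_{\mathfrak{h}^\sigma}$. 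Going through $\dot{\mathfrak{h}}^{1/2}$ instead would need $|\cdot|\mathcal{F}[\kappa]\in L^2$. This linear growth, not that of $q(t)$, is the source of the factor $(t+1)$.
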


The rest of the article is structured as follows. In Subsection \ref{subsection : Preliminary estimates} we collect  estimates that will be useful within the proofs. In Subsection \ref{subsect : invariance of the domains}, we prove the well definedness of the functional $\boldsymbol{\beta}$ during the time evolution of the Pauli--Fierz Hamiltonian, by proving the invariance of the domains on which it is initially defined. In Subsection \ref{subsect : growth of the functional} we analyze the growth of the functional $\boldsymbol{\beta}$ and give the proof of Lemma \ref{lemma : Growth of the functional}. Finally, we give in Subsection \ref{subsection : Proof of the main results} the proof of the main results, namely Theorem \ref{thm : main result}, Corollary \ref{Corollary : corollary of main theorem for pure states} and Corollary \ref{corollary : more general states}. \\

\subsection{Preliminary estimates} \label{subsection : Preliminary estimates}
Let us start by giving some preliminary results that we will use throughout the rest of the article.

\begin{lemma}[Creation and annihilation operators]
    Let $g \in \mathfrak{h}$. Let $\Psi \in \mathcal{D}\left(\mathcal{N}^{1/2}\right) \cap \mathcal{D} \big( H_{\textnormal{f}}^{1/2} \big)$. Then, 
    \begin{equation} \label{eq : estimates on creators and annihilators}
	\begin{split}
        \norm{a(g)\Psi} & \leq \norm{g}_{\mathfrak{h}} \norm{\mathcal{N}^{1/2}\Psi} \ , \\
		\norm{a^*(g)\Psi} & \leq \norm{g}_{\mathfrak{h}} \norm{\left(\mathcal{N}+1\right)^{1/2}\Psi}  \,  \\
		\norm{a(g)\Psi} & \leq \norm{|\cdot|^{-1/2}g}_{\mathfrak{h}} \norm{H_{\textnormal{f}}^{1/2}\Psi} \ , \\
		\norm{a^*(g)\Psi} & \leq \norm{\left(1+|\cdot|^{-1/2}\right)g}_{\mathfrak{h}} \norm{\left(H_{\textnormal{f}}+1\right)^{1/2}\Psi} \ . \\
	\end{split}
\end{equation}

\end{lemma}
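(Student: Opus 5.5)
The four bounds are standard Fock-space estimates. We prove them first for $\Psi$ in the finite-particle subspace with smooth, compactly supported components, and then extend them by density: this subspace is a core for $\mathcal{N}^{1/2}$ and for $H_{\textnormal{f}}^{1/2}$. We also suppress the particle variable $\mathbf{x}$. All estimates hold pointwise in $\mathbf{x}$ and can be integrated afterwards, since $a(g)$ and $a^*(g)$ act only on the Fock factor.

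For the annihilation bounds we use the pointwise representation
\begin{equation*}
a(g)\Psi = \sum_{\lambda=1,2}\int_{\mathbb{R}^3} \overline{g(k,\lambda)}\, a_{k,\lambda}\Psi \,\textnormal{d}k .
\end{equation*}
By the triangle inequality for Bochner integrals and Cauchy--Schwarz in $(k,\lambda)$,
\begin{equation*}
\norm{a(g)\Psi} \leq \Big(\sum_\lambda\int |g(k,\lambda)|^2 \,\textnormal{d}k\Big)^{1/2}\Big(\sum_\lambda\int \norm{a_{k,\lambda}\Psi}^2\,\textnormal{d}k\Big)^{1/2}.
\end{equation*}
The second factor equals $\norm{\mathcal{N}^{1/2}\Psi}$, since $\sum_\lambda\int\norm{a_{k,\lambda}\Psi}^2\,\textnormal{d}k=\ps{\Psi}{\mathcal{N}\Psi}$. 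This proves the first bound. For the third bound we insert the weights $|k|^{-1/2}$ and $|k|^{1/2}$ before applying Cauchy--Schwarz. This gives the factor $\norm{|\cdot|^{-1/2}g}_{\mathfrak{h}}$ times
\begin{equation*}
\Big(\sum_\lambda\int |k|\,\norm{a_{k,\lambda}\Psi}^2\,\textnormal{d}k\Big)^{1/2}=\norm{H_{\textnormal{f}}^{1/2}\Psi}.
\end{equation*}

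For the creation bounds we use the CCR in the form $a(g)a^*(g)=a^*(g)a(g)+\norm{g}^2_{\mathfrak{h}}$. This yields
\begin{equation*}
\norm{a^*(g)\Psi}^2=\norm{a(g)\Psi}^2+\norm{g}_{\mathfrak{h}}^2\norm{\Psi}^2 .
\end{equation*}
Inserting the first bound gives $\norm{g}^2_{\mathfrak{h}}\,\ps{\Psi}{(\mathcal{N}+1)\Psi}$, which is the second estimate. Inserting the third bound instead gives
\begin{equation*}
\norm{a^*(g)\Psi}^2 \leq \norm{|\cdot|^{-1/2}g}^2_{\mathfrak{h}}\,\norm{H_{\textnormal{f}}^{1/2}\Psi}^2+\norm{g}^2_{\mathfrak{h}}\,\norm{\Psi}^2 .
\end{equation*}
Each of the two coefficients is bounded by $\norm{(1+|\cdot|^{-1/2})g}^2_{\mathfrak{h}}$. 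Indeed, $(1+|k|^{-1/2})^2$ pointwise dominates both $1$ and $|k|^{-1}$. Hence the right-hand side is at most
\begin{equation*}
\norm{(1+|\cdot|^{-1/2})g}^2_{\mathfrak{h}}\,\norm{(H_{\textnormal{f}}+1)^{1/2}\Psi}^2 ,
\end{equation*}
which is the fourth estimate.

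The only delicate point is making sense of the pointwise operator $a_{k,\lambda}\Psi$ and of the identities $\sum_\lambda\int\norm{a_{k,\lambda}\Psi}^2=\norm{\mathcal{N}^{1/2}\Psi}^2$ and $\sum_\lambda\int|k|\,\norm{a_{k,\lambda}\Psi}^2=\norm{H_{\textnormal{f}}^{1/2}\Psi}^2$. On the $n$-photon sector, $a_{k,\lambda}\Psi$ is given explicitly by $\sqrt{n}\,\Psi^{(n)}(k,\lambda,\cdot)$, and both identities then follow from Fubini and the symmetry of $\Psi^{(n)}$.
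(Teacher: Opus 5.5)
Your proof is correct and follows essentially the same route as the paper. The paper cites Rodnianski--Schlein for the first two bounds, which are proved there by Cauchy--Schwarz in $(k,\lambda)$ for $a(g)$ and by the CCR identity $\norm{a^*(g)\Psi}^2=\norm{a(g)\Psi}^2+\norm{g}_{\mathfrak{h}}^2\norm{\Psi}^2$ for $a^*(g)$, and it obtains the last two by splitting $g=|\cdot|^{-1/2}|\cdot|^{1/2}g$, exactly as you do; you have simply written out the details it leaves implicit, including the pointwise bound $(1+|k|^{-1/2})^2\geq\max\{1,|k|^{-1}\}$ used for the fourth estimate.
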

\begin{proof}
    See \cite[Lemma 2.1]{rodnianski2007quantumfluctuationsrateconvergence} for the first two bounds. The last two follow similarly using $g = |\cdot|^{-1/2}|\cdot|^{1/2}g$. \\
\end{proof}

\begin{lemma}[Bounds on the classical electromagnetic fields] \label{lemma : estimates on the classical elec potential}
    Let $\kappa$ satisfy Assumption \eqref{assumptions : kappa}. Let $\alpha \in \dot{\mathfrak{h}}^{1/2}$. Then,  
    \begin{equation}
    \norm{\boldsymbol{\mathbb{A}}_\alpha}_{L^\infty(\mathbb{R}^3)}  \lesssim\norm{|\cdot|^{ - 1}\mathcal{F}[\kappa]}_{L^2(\mathbb{R}^3)} \norm{\alpha}_{\dot{\mathfrak{h}}^{1/2}} \ . \\
    \end{equation}
    Moreover, let $\sigma \in \left[ 1/2, 1\right]$, and $\alpha \in \mathfrak{h}^\sigma \cap \dot{\mathfrak{h}}^{\sigma}$. Then,  
    \begin{equation}
        \begin{split}
                         \norm{\nabla\boldsymbol{\mathbb{A}}_\alpha}_{L^\infty(\mathbb{R}^3)} \ , \ \norm{\boldsymbol{\mathbb{E}}_{\alpha}}_{L^\infty(\mathbb{R}^3)} & \lesssim \norm{|\cdot|^{\frac{1}{2}-\sigma}\mathcal{F}[\kappa]}_{L^2(\mathbb{R}^3)} \norm{\alpha}_{\dot{\mathfrak{h}}^{\sigma}}  \ ,  \\
             \norm{\boldsymbol{\mathbb{A}}_\alpha}_{L^\infty(\mathbb{R}^3)}  &\lesssim\norm{|\cdot|^{-\frac{1}{2}} \left(1+|\cdot|^2\right)^{-\frac{\sigma}{2}}\mathcal{F}[\kappa]}_{L^2(\mathbb{R}^3)} \norm{\alpha}_{\mathfrak{h}^\sigma}  \ ,  \\
            \norm{\boldsymbol{\mathbb{E}}_{\alpha}}_{L^\infty(\mathbb{R}^3)} & \lesssim \norm{|\cdot|^{\frac{1}{2}}\left(1+|\cdot|^2 \right)^{-\frac{\sigma}{2}}\mathcal{F}[\kappa]}_{L^2(\mathbb{R}^3)} \norm{\alpha}_{\mathfrak{h}^\sigma} \ . \\
        \end{split}
    \end{equation}
\end{lemma}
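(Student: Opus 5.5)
The plan is to write each field pointwise as an $L^2$ pairing of $\alpha$ against an explicit $x$-dependent kernel, then apply Cauchy--Schwarz with a weight chosen to produce the norm of $\alpha$ that appears in each bound. From \eqref{eq : def classical field}, and since $\kappa$ is even and real so that convolution multiplies by $(2\pi)^{3/2}\mathcal{F}[\kappa]$, we have
\begin{equation*}
\boldsymbol{\mathbb{A}}_\alpha(x) = \sum_{\lambda=1,2}\int_{\mathbb{R}^3} \frac{\mathcal{F}[\kappa](k)}{\sqrt{2|k|}}\boldsymbol{\epsilon}_\lambda(k)\left(e^{ik\cdot x}\alpha(k,\lambda)+e^{-ik\cdot x}\overline{\alpha(k,\lambda)}\right)\textnormal{d}k = 2\,\Ree \ps{\mathbf{G}_x}{\alpha}_{\mathfrak{h}} .
\end{equation*}
The analogous identities hold for the other fields: $\partial_m \boldsymbol{\mathbb{A}}_\alpha$ carries an extra factor $ik_m$, and $\boldsymbol{\mathbb{E}}_\alpha$ has the factor $\sqrt{|k|/2}$ in place of $1/\sqrt{2|k|}$, together with a harmless factor $\pm i$. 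The polarization vectors have unit length and $|e^{\pm ik\cdot x}|=1$. Hence each field is bounded, uniformly in $x$, by $\sum_\lambda\int |\mathcal{F}[\kappa](k)|\,w(k)\,|\alpha(k,\lambda)|\,\textnormal{d}k$, where $w(k)=|k|^{-1/2}$ for $\boldsymbol{\mathbb{A}}$ and $w(k)=|k|^{1/2}$ for both $\nabla\boldsymbol{\mathbb{A}}$ and $\boldsymbol{\mathbb{E}}$, up to constants.

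Next I would apply Cauchy--Schwarz with a multiplier suited to each case. For the first bound on $\boldsymbol{\mathbb{A}}_\alpha$, write $|k|^{-1/2}|\alpha| = |k|^{-1}\cdot|k|^{1/2}|\alpha|$, which gives $\norm{|\cdot|^{-1}\mathcal{F}[\kappa]}_{L^2}\norm{\alpha}_{\dot{\mathfrak{h}}^{1/2}}$. The sum over $\lambda$ costs only a factor $\sqrt{2}$. For $\nabla\boldsymbol{\mathbb{A}}_\alpha$ and $\boldsymbol{\mathbb{E}}_\alpha$ with the homogeneous norm, write $|k|^{1/2}|\alpha| = |k|^{1/2-\sigma}\cdot|k|^{\sigma}|\alpha|$. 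For the inhomogeneous bounds, insert $(1+|k|^2)^{-\sigma/2}(1+|k|^2)^{\sigma/2}$ instead. This gives $\norm{|\cdot|^{\mp1/2}(1+|\cdot|^2)^{-\sigma/2}\mathcal{F}[\kappa]}_{L^2}\norm{\alpha}_{\mathfrak{h}^\sigma}$ respectively.

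The only step that needs care is justifying the pointwise representation, and in particular checking that the integrals converge absolutely and that the kernel norms on the right are finite. The latter is what makes the statement meaningful. Absolute convergence follows from the Cauchy--Schwarz bounds themselves, provided the kernel norms are finite. Their finiteness I would deduce from Assumption~\ref{assumptions : kappa} by splitting into $|k|\le1$ and $|k|\ge1$. Near zero, $|k|^{-1/2}\lesssim |k|^{-1}$, $|k|^{1/2-\sigma}\lesssim|k|^{-1}$ since $\sigma\le 1$, and the weights $(1+|k|^2)^{-\sigma/2}$ are bounded. At infinity, $|k|^{1/2-\sigma}\le |k|^{3/2-\sigma}$ and $|k|^{\pm1/2}(1+|k|^2)^{-\sigma/2}\lesssim |k|^{3/2-\sigma}$. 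All these kernels are therefore controlled by $\norm{|\cdot|^{-1}\mathcal{F}[\kappa]}_{L^2}+\norm{|\cdot|^{3/2-\sigma}\mathcal{F}[\kappa]}_{L^2}$, which is finite by Assumption~\ref{assumptions : kappa}. Taking the supremum over $x$ concludes the proof.
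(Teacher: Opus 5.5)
Your proposal is correct and follows the paper's proof. Both write each field through its Fourier representation against $\mathcal{F}[\kappa]$ and apply Cauchy--Schwarz with the weight matched to the relevant norm of $\alpha$. Both then check that the weighted norms of $\mathcal{F}[\kappa]$ are finite by splitting into $|k|\le 1$ and $|k|\ge 1$ and using Assumption~\ref{assumptions : kappa}. You simply spell out the weight comparisons that the paper leaves implicit, and you correctly note that the missing factor $i$ in the paper's formula for $\boldsymbol{\mathbb{E}}_\alpha$ is harmless for these bounds.
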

\begin{proof}
    The bounds are a direct application of 
    \begin{equation}
    \begin{split}
        \boldsymbol{\mathbb{A}}_\alpha(q) & = 2 \Ree \sum_{\lambda=1,2} \int_{\mathbb{R}^3} \frac{\mathcal{F}[\kappa](k)}{\sqrt{2|k|}} \boldsymbol{\epsilon}_\lambda(k) e^{ik \cdot q}\alpha(k,\lambda) \textnormal{d}k \\
        \boldsymbol{\mathbb{E}}_\alpha(q) & = 2 \Ree \sum_{\lambda=1,2} \int_{\mathbb{R}^3} \sqrt{\frac{|k|}{2}} \mathcal{F}[\kappa](k) \boldsymbol{\epsilon}_\lambda(k) e^{ik \cdot q}\alpha(k,\lambda) \textnormal{d}k \ , \\
    \end{split}
\end{equation}
together with \eqref{eq : def norm alpha} and the Cauchy-Schwarz inequality.  The finiteness of the  weighted $L^2$ norms of $\mathcal{F}[\kappa]$ follows from splitting the respective integrals in high and low modes, as well as Assumption \ref{assumptions : kappa}. \\
\end{proof}

The next lemma gives bounds on some operators, which involve the quantized electromagnetic field $\boldsymbol{\hat{\mathbb{A}}}$.
\begin{lemma}[Bounds on the quantized electromagnetic field] \label{lemma : Bounds on the quantized electromagnetic field}
    Let $\sigma \in \left[ 1/2, 1\right]$, and let $\alpha \in \mathfrak{h}^\sigma \cap \dot{\mathfrak{h}}^{\sigma}$. Consider the classical and quantum kinetic momenta defined respectively in \eqref{eq : physical momentum} and \eqref{eq : physical momentum quantum}, and the quantum Faraday tensor defined in \eqref{eq : quantum faraday tensor}. The following bound holds in the sense of quadratic forms:
\begin{equation} \label{eq : operator inequality that makes beta b wd}
	\left(P - \Tilde{p}\right)^2 \lesssim_{N,\hbar,\kappa} - \Delta + \mathcal{N} + 1 \ . \\
\end{equation}    
Moreover,  let $\psi \in \mathcal{D} \big(\mathbb{H}_{\hbar}^{1/2} \big)$. Then, 
    \begin{equation} \label{eq : estimates in lemma to show cvgce}
        \begin{split}
            \norm{P^\ell \psi} & \lesssim_{N, \hbar, \kappa} \norm{\left( \mathbb{H}_{\hbar} + 1 \right)^{1/2}\psi}   \ ,  \\
            \norm{\boldsymbol{\hat{\mathbb{A}}}^\ell (x)  \psi}  , \ \norm{\partial_m \boldsymbol{\hat{\mathbb{A}}}^\ell (x)  \psi} & \lesssim_{N, \hbar, \kappa} \norm{\left( \mathbb{H}_{\hbar} + 1 \right)^{1/2}\psi}  \ , \\
            \norm{\boldsymbol{\hat{\mathbb{E}}}^\ell (x)  \psi} & \lesssim_{N, \hbar, \kappa} \norm{\left( \mathbb{H}_{\hbar} + 1 \right)^{1/2}\psi} \ ,\\
        \end{split}
    \end{equation}
 for all $\ell = 1,2,3$    In addition, if $\psi \in \mathcal{D}\left(\mathbb{H}_{\hbar} \right)$, then  
    \begin{equation} \label{eq : bound on quantum faraday}
                \norm{\left \lvert \hat{F}^{\ell m}\right \lvert^2\psi} \lesssim_{N, \hbar, \kappa} \norm{\left( \mathbb{H}_{\hbar} + 1 \right)\psi} \ , \\
    \end{equation}
    for all $\ell, m = 1,2,3$.
\end{lemma}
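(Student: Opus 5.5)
The plan is to reduce every bound to the standard creation/annihilation estimates \eqref{eq : estimates on creators and annihilators}, applied with the form factors $\mathbf{G}_{x_j}$ (and their $k$-weighted variants). The key input is that $\mathbb{H}_{\hbar}$ controls the free quantities $\hbar H_{\textnormal{f}}$ and $|P|^2$. Constants may depend on $N$, $\hbar$ and $\kappa$, so no uniformity in $\hbar$ is needed.

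\emph{Step 1: the field operators.} For fixed $x$, write
\[
\boldsymbol{\hat{\mathbb{A}}}^\ell(x)=(2\pi)^{-3/2}\bigl(a(\overline{\mathbf{G}^\ell_x})+a^*(\overline{\mathbf{G}^\ell_x})\bigr)
\]
up to the convention for complex conjugation. By the fourth bound in \eqref{eq : estimates on creators and annihilators},
\[
\norm{\boldsymbol{\hat{\mathbb{A}}}^\ell(x)\psi}\lesssim \norm{(1+|\cdot|^{-1/2})\mathbf{G}_x}_{\mathfrak{h}}\,\norm{(H_{\textnormal{f}}+1)^{1/2}\psi}.
\]
The norm of $\mathbf{G}_x$ is independent of $x$ and finite because $|\cdot|^{-1}\mathcal{F}[\kappa]\in L^2$ (this handles the low modes) and $\mathcal{F}[\kappa]\in L^2$ (this handles the high modes, using $|k|^{3/2-\sigma}\mathcal{F}[\kappa]\in L^2$ with $3/2-\sigma\ge 1/2$). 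The derivatives $\partial_m\boldsymbol{\hat{\mathbb{A}}}^\ell$ and the field $\boldsymbol{\hat{\mathbb{E}}}^\ell$ carry an extra factor $|k|$, so their form factors behave like $|k|^{1/2}\mathcal{F}[\kappa]$. Their $(1+|k|^{-1/2})$-weighted $L^2$ norms are finite by the same splitting into low and high modes, using $|k|^{1/2}\mathcal{F}[\kappa]\in L^2$; this follows from the second condition of Assumption~\ref{assumptions : kappa} since $3/2-\sigma\ge1/2$. Since these bounds are uniform in $x$, they lift to operators on $L^2(\mathbb{R}^{3N})\otimes\mathfrak{F}$ acting in the $x_j$ variable. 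Replacing $H_{\textnormal{f}}$ by $\mathcal{N}$ through the first two bounds gives the analogous $\mathcal{N}$-bounds.

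\emph{Step 2: the form inequality \eqref{eq : operator inequality that makes beta b wd}.} Write
\[
P_j-\Tilde p_j=(-i\hbar\nabla_j-p_j)-\hbar^{1/2}\bigl(\boldsymbol{\hat{\mathbb{A}}}(x_j)-\boldsymbol{\mathbb{A}}_\alpha(q_j)\bigr).
\]
Apply $(a+b+c)^2\le 3(a^2+b^2+c^2)$. The constant term $\boldsymbol{\mathbb{A}}_\alpha(q_j)$ is finite by Lemma~\ref{lemma : estimates on the classical elec potential}. The field term is controlled by Step 1 with $\mathcal{N}+1$, and $(-i\hbar\nabla_j-p_j)^2\lesssim -\Delta+1$.

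\emph{Step 3: relative bounds with respect to $\mathbb{H}_{\hbar}$.} This is the main point. I need
\[
\hbar H_{\textnormal{f}}+\textstyle\sum_j P_j^2\le \mathbb{H}_{\hbar}+C
\]
in the form sense. This would be immediate if $V\ge0$. Since $V=\kappa*\kappa*|\cdot|^{-1}$ with $\kappa$ even and real, $\hat V\propto|\mathcal{F}[\kappa]|^2|k|^{-2}\ge0$, so $V$ is a positive definite bounded function. Even without using that, $\|V\|_\infty\le\norm{|\cdot|^{-1}\mathcal{F}[\kappa]}^2_{L^2}$ up to constants, so $V\ge -C$, which suffices. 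The inequality then gives the first bound of \eqref{eq : estimates in lemma to show cvgce} directly, and together with Step 1 it gives the field bounds, since $\norm{H_{\textnormal{f}}^{1/2}\psi}\lesssim_\hbar\norm{(\mathbb{H}_{\hbar}+1)^{1/2}\psi}$.

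\emph{Step 4: the bound \eqref{eq : bound on quantum faraday}.} I interpret $|\hat F^{\ell m}|^2\psi$ as $(\hat F^{\ell m})^2\psi$. Expanding the square gives terms of the form $a^\#a^\#$ with form factors $|k|^{1/2}\mathcal{F}[\kappa]$. Applying the creation/annihilation bounds twice, and commuting one operator past the other where needed (the commutators are scalar and bounded), yields
\[
\norm{(\hat F^{\ell m})^2\psi}\lesssim\norm{(H_{\textnormal{f}}+1)\psi}.
\]
This uses the estimate $\norm{a(g)(H_{\textnormal{f}}+1)^{1/2}\phi}\lesssim\norm{(H_{\textnormal{f}}+1)\phi}$ together with the pull-through commutation $a(g)H_{\textnormal{f}}=(H_{\textnormal{f}}+|k|)a(g)$. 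The step I expect to be the obstacle is passing from $\norm{(H_{\textnormal{f}}+1)\psi}$ to $\norm{(\mathbb{H}_{\hbar}+1)\psi}$, which is an operator (not form) bound. For this I would invoke the known result that $\mathcal{D}(\mathbb{H}_{\hbar})=\mathcal{D}(\mathbb{H}^0_{\hbar})$ \cite{Hiroshima2002,Matte2017}, since $\hbar$ and $\kappa$ are fixed. The closed graph theorem then gives $\norm{(\mathbb{H}^0_{\hbar}+1)\psi}\lesssim\norm{(\mathbb{H}_{\hbar}+1)\psi}$. Combined with $H_{\textnormal{f}}\le\hbar^{-1}\mathbb{H}^0_{\hbar}$ and the commutativity of the two pieces of $\mathbb{H}^0_{\hbar}$, this concludes the proof.
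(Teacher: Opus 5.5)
Your argument is correct. For the Faraday bound it coincides with the paper's; for the half-power bounds you take a more elementary route.

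\textbf{Half-power bounds.} The paper first bounds $P^\ell$, $\boldsymbol{\hat{\mathbb{A}}}^\ell$, $\partial_m\boldsymbol{\hat{\mathbb{A}}}^\ell$ and $\boldsymbol{\hat{\mathbb{E}}}^\ell$ by the free Hamiltonian $\mathbb{H}^0_{\hbar}$, so $P^\ell$ is split into $-i\hbar\nabla$ and $\hbar^{1/2}\boldsymbol{\hat{\mathbb{A}}}$. It then cites Hiroshima's comparison $\norm{(\mathbb{H}^0_{\hbar}+1)^{1/2}\psi}\lesssim\norm{(\mathbb{H}_{\hbar}+1)^{1/2}\psi}$. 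You instead read off $\sum_j P_j^2+\hbar H_{\textnormal{f}}\le\mathbb{H}_{\hbar}+C$ directly from the sum-of-squares structure of $\mathbb{H}_{\hbar}$ and $\norm{V}_{L^\infty}<\infty$. This controls $P^\ell$ without any splitting, and the field operators through $H_{\textnormal{f}}^{1/2}$ alone. Since $\hbar\nabla_j=iP_j+i\hbar^{1/2}\boldsymbol{\hat{\mathbb{A}}}(x_j)$, it in effect proves the half-power comparison that the paper imports.

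You should add one sentence explaining why, on $\mathcal{D}(\mathbb{H}_{\hbar}^{1/2})$, the quadratic form of $\mathbb{H}_{\hbar}$ equals $\sum_j\norm{P_j\psi}^2+\langle\psi,V\psi\rangle+\hbar\norm{H_{\textnormal{f}}^{1/2}\psi}^2$. It holds on $\mathcal{D}(\mathbb{H}_{\hbar})=\mathcal{D}(\mathbb{H}^0_{\hbar})$, and extends by density in the form norm using closedness of $H_{\textnormal{f}}^{1/2}$ and $\nabla$.

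\textbf{Faraday bound and form factors.} For the Faraday bound both arguments need the operator-level domain identity. Your closed-graph step makes explicit what the paper only calls ``similar''. Likewise, your Step 1 check that $(1+|\cdot|^{-1/2})g\in\mathfrak{h}$ for all form factors under Assumption~\ref{assumptions : kappa} is what the paper leaves implicit.

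\textbf{Caution in Step 4.} The correct pull-through identity is $a(g)H_{\textnormal{f}}=H_{\textnormal{f}}a(g)+a(|\cdot|g)$. Here $|g|\sim|\cdot|^{1/2}|\mathcal{F}[\kappa]|$, so $|\cdot|^{1/2}g$ is not guaranteed to lie in $\mathfrak{h}$ when $\sigma>1/2$. Therefore do not estimate $a(|\cdot|g)$ through $H_{\textnormal{f}}^{1/2}$. Either use $\norm{a(|\cdot|g)\psi}\le\norm{g}_{\mathfrak{h}}\norm{H_{\textnormal{f}}\psi}$, or avoid pull-through altogether. For the latter, use the CCR and $\norm{a^*(g)\phi}^2=\norm{a(g)\phi}^2+\norm{g}_{\mathfrak{h}}^2\norm{\phi}^2$ to reduce everything to $\norm{a(g)a(g)\psi}\le\norm{|\cdot|^{-1/2}g}_{\mathfrak{h}}^2\norm{H_{\textnormal{f}}\psi}$, plus lower-order terms.

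\textbf{Harmless slips.} In fact $\boldsymbol{\hat{\mathbb{A}}}(x)=a(\mathbf{G}_x)+a^*(\mathbf{G}_x)$, with no factor $(2\pi)^{-3/2}$. In Step 2 the classical term $\boldsymbol{\mathbb{A}}_\alpha(q_j)$ carries no factor $\hbar^{1/2}$.
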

\begin{proof}
    The bound \eqref{eq : operator inequality that makes beta b wd} is a consequence of Lemma \ref{lemma : estimates on the classical elec potential} and \eqref{eq : estimates on creators and annihilators}. To prove the bounds in \eqref{eq : estimates in lemma to show cvgce}, one first uses the definitions \eqref{eq : quantized magnetic field} and \eqref{eq : quantized electric field} together with the last bound in \eqref{eq : estimates on creators and annihilators} in order to relate the LHS quantities to the free Hamiltonian $\mathbb{H}_{\hbar}^0$, and then conclude with the inequality
    \begin{equation}
        \norm{\left(\mathbb{H}_{\hbar}^0 + 1 \right)^{1/2}\psi} \lesssim_{N,\hbar,\kappa}\norm{\left( \mathbb{H}_{\hbar} + 1 \right)^{1/2}\psi} \ , \\
    \end{equation}
    which is proven in \cite[Corollary 1.4]{Hiroshima2002}. The estimate in \eqref{eq : bound on quantum faraday} follows similarly. \\
\end{proof}

\begin{lemma}[Properties of the smeared Coulomb potential] \label{lemma : Properties of the smeared Coulomb potential}
    Let $\kappa$ satisfy Assumption \eqref{assumptions : kappa}, and consider the smeared Coulomb potential defined by
    \begin{equation}
        V = \kappa \ast \kappa \ast |\cdot|^{-1} \ . \\
    \end{equation}
    Then, we have $V \in \mathcal{C}^2_{\textnormal{b}}(\mathbb{R}^3 ; \mathbb{R})$, and the following bounds hold
    \begin{equation} \label{eq : pointwise bounds for V and nabla V}
        \begin{split}
            \norm{V}_{L^\infty(\mathbb{R}^3)} & \leq \norm{|\cdot|^{-1}\mathcal{F}[\kappa]}_{L^2(\mathbb{R}^3)}^2 \\
            \norm{\partial_\ell V}_{L^\infty(\mathbb{R}^3)} & \leq \norm{|\cdot|^{-1/2}\mathcal{F}[\kappa]}_{L^2(\mathbb{R}^3)}^2 \ , \quad \ell = 1,2,3 \\
            \norm{\partial_m \partial_\ell V}_{L^\infty(\mathbb{R}^3)} & \leq \norm{\mathcal{F}[\kappa]}_{L^2(\mathbb{R}^3)}^2 \ , \quad m,\ell = 1,2,3 \ . \\
        \end{split}
    \end{equation}
    Moreover, we have 
    \begin{equation} \label{eq : growth of nabla V}
        \left \lvert \nabla V (x) \right \lvert \lesssim_\kappa |x| \\
    \end{equation}
    almost everywhere, and the constant depends on $\kappa$ via $\norm{\mathcal{F}[\kappa]}_{L^3(\mathbb{R}^3)}$. \\
\end{lemma}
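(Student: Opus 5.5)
The plan is to work entirely in Fourier space, writing $V$ as an oscillatory integral with a nonnegative (hence absolutely integrable) weight. Since $\kappa$ is real and even, $\mathcal{F}[\kappa]$ is real and even. Using $\mathcal{F}[|\cdot|^{-1}](k) = (2\pi)^{-3/2}\, 4\pi |k|^{-2}$ and the convolution theorem $\mathcal{F}[f\ast g] = (2\pi)^{3/2}\mathcal{F}[f]\mathcal{F}[g]$, we get
\begin{equation*}
V(x) = c_0 \int_{\mathbb{R}^3} \frac{|\mathcal{F}[\kappa](k)|^2}{|k|^2} e^{ik\cdot x} \,\textnormal{d}k ,
\end{equation*}
with an explicit constant $c_0$. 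I would first justify this identity for nice $\kappa$ and then pass to general $\kappa$ by density. The integrand is in $L^1$ by the first condition of Assumption~\ref{assumptions : kappa}. Consequently $\norm{V}_{L^\infty} \le c_0\norm{|\cdot|^{-1}\mathcal{F}[\kappa]}_{L^2}^2$. Up to normalization conventions of the constant, which I would absorb or check, this is the stated bound.

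Next, I would differentiate under the integral sign. This produces the factors $ik_\ell$ and $-k_\ell k_m$, which are bounded by $|k|$ and $|k|^2$ respectively. The resulting bounds are therefore $\norm{|\cdot|^{-1/2}\mathcal{F}[\kappa]}_{L^2}^2$ and $\norm{\mathcal{F}[\kappa]}_{L^2}^2$. For these to be finite, I interpolate between the two $L^2$ conditions of Assumption~\ref{assumptions : kappa}.
\begin{itemize}
\item The weights $|k|^{-1}$ and $|k|^{3/2-\sigma}$ satisfy $3/2-\sigma \ge 1/2 > 0$.
\item Hence $|k|^{s}\mathcal{F}[\kappa] \in L^2$ for all $s\in[-1, 3/2-\sigma]$, which includes $s=-1/2$ and $s=0$.
\end{itemize}
Dominated convergence then justifies the differentiation and gives continuity of $V$, $\nabla V$ and $\nabla^2 V$. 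By Riemann–Lebesgue these even vanish at infinity. This yields $V \in \mathcal{C}^2_{\textnormal{b}}$.

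For the growth bound \eqref{eq : growth of nabla V}, the key observation is that $\nabla V(0)=0$. Indeed, $V$ is even, as seen from the integrand $\cos(k\cdot x)$, so $\nabla V$ is odd. Alternatively, one checks directly that $\int k\,|k|^{-2}|\mathcal{F}[\kappa]|^2\,\textnormal{d}k = 0$ by oddness of the integrand. The mean value theorem then gives
\begin{equation*}
|\nabla V(x)| \le \norm{\nabla^2 V}_{L^\infty}\,|x| \le \norm{\mathcal{F}[\kappa]}_{L^2}^2 |x| .
\end{equation*}

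The one delicate point is that the statement asks for dependence through $\norm{\mathcal{F}[\kappa]}_{L^3}$. I would obtain this instead via the direct estimate
\begin{equation*}
|\partial_\ell V(x)| = c_0\left|\int \frac{k_\ell}{|k|^2}|\mathcal{F}[\kappa]|^2 \sin(k\cdot x)\,\textnormal{d}k\right| \le c_0|x|\int |\mathcal{F}[\kappa]|^2 \,\textnormal{d}k ,
\end{equation*}
using $|\sin(k\cdot x)|\le |k||x|$. If an $L^3$ dependence is really intended, one could split into low and high frequencies and apply Hölder's inequality. I expect this to be only a bookkeeping matter of which norm controls the constant; the $L^2$ version already suffices for all later uses.

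Overall, the main (mild) obstacle is rigor rather than ideas. Specifically, the Fourier representation of $|\cdot|^{-1}$ convolved with $\kappa\ast\kappa$ holds only in the distributional sense. I would handle this by approximating $\kappa$ by Schwartz functions and using the $L^1$ convergence of the Fourier-side integrands guaranteed by Assumption~\ref{assumptions : kappa}.
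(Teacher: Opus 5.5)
Your proposal is correct and is essentially the paper's argument: a Fourier representation of $V$ with a nonnegative $L^1$ weight, Riemann--Lebesgue and dominated convergence for $V\in\mathcal{C}^2_{\textnormal{b}}$ and the sup bounds, $\nabla V(0)=0$ by oddness, and a mean-value estimate using $\sup_x\norm{\nabla^2 V(x)}\lesssim\norm{\mathcal{F}[\kappa]}_{L^2}^2$. Your $|\sin(k\cdot x)|\le|k||x|$ variant is an equivalent shortcut.

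Both of your suspicions are also right. First, the paper's own proof likewise gives dependence through $\norm{\mathcal{F}[\kappa]}_{L^2}^2$ rather than an $L^3$ norm. Second, with your normalizations $c_0=4\pi$. Since the weight is nonnegative, $\norm{V}_{L^\infty}=V(0)=4\pi\norm{|\cdot|^{-1}\mathcal{F}[\kappa]}_{L^2}^2$, so the displayed bounds hold with constant $4\pi$ rather than $1$; the paper's formula for $\mathcal{F}[V]$ drops a factor $(2\pi)^{3/2}$. This is harmless, since every later use is only up to constants.
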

\begin{proof}
    This is mainly \cite[Lemma 2.1]{ammari2022derivationclassicalelectrodynamicscharges}. Let us briefly give the argument for completeness. The fact that $V \in \mathcal{C}^2_{\textnormal{b}}(\mathbb{R}^3 ; \mathbb{R})$ follows from the identities
    \begin{equation} \label{eq : identity for fourier V}
	\begin{split}
    \mathcal{F}[V](k) & = 4 \pi |k|^{-2}|\mathcal{F}[\kappa](k)|^2\\
	\mathcal{F}[\partial_\ell V](k) & = 4 i \pi \frac{k_\ell}{|k|^2} |\mathcal{F}[\kappa](k)|^2 \\
	\mathcal{F}[\partial_m \partial_\ell V](k) & = -4\pi \frac{k_m k_\ell}{|k|^2} |\mathcal{F}[\kappa](k)|^2 \ , \\
	\end{split}
    \end{equation}
    which hold in $\mathcal{S}'(\mathbb{R}^3)$.  Assumption \ref{assumptions : kappa} implies that the RHS of the above are in $L^1(\mathbb{R}^3)$, which then yield that $V$, $\partial_\ell V$ and $\partial^2_\ell V$ are well-defined continuous functions, by the Riemann-Lebesgue lemma. The bounds \eqref{eq : pointwise bounds for V and nabla V} then follow from taking the inverse Fourier transform. To prove \eqref{eq : growth of nabla V}, we use the fundamental theorem of calculus to write 
    \begin{equation} \label{eq : FTC nabla V}
        \nabla V(x) - \nabla V(y) = \int_0^1 \nabla^2V (tx + (1-t)y) \cdot (x-y) \textnormal{d}t \ , \\
    \end{equation}
    where $\nabla^2$ stands for the hessian operator. Next, note that \eqref{eq : identity for fourier V} above allows us to write
    \begin{equation}
        \begin{split}
            \nabla V(0) & = 4i\pi \int_{\mathbb{R}^3} \frac{k}{|k|^2} |\mathcal{F}[\kappa](k)|^2 \textnormal{d}k = 0 \ , \\
        \end{split}
    \end{equation}
    since by assumption, $\mathcal{F}[\kappa]$ is an even function, hence making the integrand in the above an odd function of $k$. Moreover, we have 
    \begin{equation}
        \norm{\nabla^2 V}_{L^\infty(\mathbb{R}^3)} := \sup_{x \in \mathbb{R}^3} \norm{\nabla^2 V(x)}_{\textnormal{op}} \lesssim \norm{\mathcal{F}[\kappa]}_{L^2(\mathbb{R}^3)}^2 \ . \\
    \end{equation}
    The bound \eqref{eq : growth of nabla V} then follows from \eqref{eq : FTC nabla V} evaluated at $y = 0$ and the previous bounds. \\
\end{proof}

\begin{lemma}[A priori estimates for the Newton--Maxwell solutions] \label{lemma : A priori estimates for the Newton--Maxwell solutions}
    Let $\sigma \in \left[1/2,1 \right]$. Let $u_0 = (q_0, p_0, \alpha_0) \in X^\sigma$, and consider $u(t) = (q(t), p(t), \alpha_t)$ the associated solution to the Newton--Maxwell system \eqref{eq : newton motion eqs PF}. Then, there exists a constant $C > 0$ depending on $\kappa$ and $\norm{u_0}_{X^\sigma}$ such that, for all $t \geq 0$
    \begin{equation} \label{eq : first bounds in a priori for NM}
        \begin{split}
          \sup_{j \in \{1, \ldots,N \}} \ |p_j(t)| & \leq C \ ,  \\
            \norm{\alpha_t}_{\dot{\mathfrak{h}}^{1/2}} & \leq C \ ,  \\
            \norm{u(t)}_{X^\sigma} & \leq C(t+1) \ . \\
        \end{split}
    \end{equation}
    Moreover, under the same assumptions there exist constants $\Tilde{C}$ (which does not depend on $\norm{u_0}_{X^\sigma}$) and $c > 0$ depending on $\kappa$ and $\sigma$ such that for all $t \geq 0$ 
    \begin{equation} \label{eq : second bounds in a priori for NM}
        \norm{u(t)}_{X^\sigma} \leq \Tilde{C}e^{ct}  \norm{u_0}_{X^\sigma} \ . \\
    \end{equation}
\end{lemma}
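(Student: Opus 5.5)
The plan is to derive the first group of bounds \eqref{eq : first bounds in a priori for NM} from energy conservation (Lemma~\ref{lemma : conservation of the energy}) together with Duhamel's formula. The bound \eqref{eq : second bounds in a priori for NM}, which must be \emph{linear} in $\norm{u_0}_{X^\sigma}$, will instead come from a Grönwall argument on the kinetic momentum $\Tilde{p}$. There we use the cancellation \eqref{eq : classical cancellation}, which removes the quadratic magnetic term.

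\textbf{Uniform bounds.} By Lemma~\ref{lemma : Properties of the smeared Coulomb potential}, $V$ is bounded. By Lemma~\ref{lemma : estimates on the classical elec potential}, $\mathcal{H}(u_0)\lesssim \norm{u_0}_{X^\sigma}^2+1$ for $\sigma\ge 1/2$. Each term of $\mathcal{H}$ other than the Coulomb sum is nonnegative, and $\sum_{\lambda}\int |k||\alpha|^2 = \norm{\alpha}_{\dot{\mathfrak h}^{1/2}}^2$. Conservation of $\mathcal{H}$ therefore yields
\[
\sum_j |\Tilde p_j(t)|^2 + \norm{\alpha_t}_{\dot{\mathfrak h}^{1/2}}^2 \le \mathcal{H}(u_0) + \tfrac{N(N-1)}{2}\norm{V}_{L^\infty} \le C .
\]
Since $p_j = \Tilde p_j + \boldsymbol{\mathbb{A}}_{\alpha_t}(q_j)$, the first estimate of Lemma~\ref{lemma : estimates on the classical elec potential} gives $|p_j(t)|\le C$.

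\textbf{Linear growth of $\norm{u(t)}_{X^\sigma}$.} From $\partial_t q_j = 2\Tilde p_j$ we get $|q_j(t)|\le |q_{0,j}|+Ct$. For the field we write Duhamel's formula
\[
\alpha_t = e^{-it|k|}\alpha_0 + 2i\int_0^t e^{-i(t-s)|k|}\sum_j \mathbf G_{q_j(s)}\cdot \Tilde p_j(s)\,\textnormal{d}s .
\]
We have $\norm{\mathbf G_x}_{\mathfrak h^\sigma}^2 \lesssim \int (1+|k|^2)^\sigma |k|^{-1}|\mathcal F[\kappa]|^2\,\textnormal{d}k$, uniformly in $x$. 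We split this integral into low and high modes:
\begin{itemize}
\item for $|k|\le1$ it is controlled by $\norm{|\cdot|^{-1}\mathcal F[\kappa]}_{L^2}$;
\item for $|k|\ge 1$ we use $2\sigma-1\le 3-2\sigma$ (valid since $\sigma\le1$), so it is controlled by $\norm{|\cdot|^{3/2-\sigma}\mathcal F[\kappa]}_{L^2}$.
\end{itemize}
Together with the uniform bound on $\Tilde p$, this gives $\norm{\alpha_t}_{\mathfrak h^\sigma}\le \norm{\alpha_0}_{\mathfrak h^\sigma}+Ct$, and hence $\norm{u(t)}_{X^\sigma}\le C(t+1)$.

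\textbf{Exponential bound linear in the data.} Here the constants must not depend on $\mathcal{H}(u_0)$. First we verify the Lorentz-force equation \eqref{eq : classical evolution of the shifted velocity}, now with the extra Coulomb term $-\sum_{k\ne j}\nabla V(q_j-q_k)$, for the strong solution. This uses $u\in\mathcal C^1(\mathbb R;X^{\sigma-1})$: the derivative $\partial_t[\boldsymbol{\mathbb{A}}_{\alpha_t}(q_j(t))]$ makes sense because $\mathbf G_x$ lies in the dual weight class. The magnetic term drops out of $\partial_t|\Tilde p_j|^2$ by antisymmetry of $F^{\ell m}_\alpha$. Using Lemma~\ref{lemma : estimates on the classical elec potential} for $\boldsymbol{\mathbb{E}}$ and \eqref{eq : growth of nabla V} for $\nabla V$, we obtain
\[
\partial_t |\Tilde p|^2 \lesssim |\Tilde p|^2 + \norm{\alpha_t}_{\mathfrak h^\sigma}^2 + |q|^2 .
\]
For the other two components:
\begin{itemize}
\item $\partial_t|q|^2\lesssim |q|^2+|\Tilde p|^2$;
\item the Duhamel bound above gives $\norm{\alpha_t}_{\mathfrak h^\sigma}\le \norm{\alpha_0}_{\mathfrak h^\sigma}+C\int_0^t|\Tilde p(s)|\,\textnormal{d}s$.
\end{itemize}
Set $Y(t)=|\Tilde p|^2+|q|^2+\norm{\alpha_t}^2_{\mathfrak h^\sigma}$. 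Squaring the Duhamel bound and applying Cauchy–Schwarz gives $Y(t)\le Y(0)+C(1+t)\int_0^t Y(s)\,\textnormal{d}s$. Grönwall then yields $Y(t)\le Y(0)e^{ct^2}$. To reach the stated rate $e^{ct}$, we instead Grönwall the $\alpha$-component in differentiated form: for $\sigma\le1$, $\partial_t\norm{\alpha_t}_{\mathfrak h^\sigma}^2 = 2\,\Imm\langle\alpha_t, \text{source}\rangle_{\mathfrak h^\sigma}\lesssim \norm{\alpha_t}_{\mathfrak h^\sigma}|\Tilde p|$, which closes as $Y'\lesssim Y$.

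Finally we convert back to $p$. We have $|\Tilde p(0)|\le |p_0|+C\norm{\alpha_0}_{\mathfrak h^\sigma}$ and $|p(t)|\le|\Tilde p(t)|+C\norm{\alpha_t}_{\mathfrak h^\sigma}$, using the $\mathfrak h^\sigma$ bound on $\boldsymbol{\mathbb{A}}_\alpha$ from Lemma~\ref{lemma : estimates on the classical elec potential}. This gives \eqref{eq : second bounds in a priori for NM} with constants depending only on $\kappa,\sigma,N$.

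The main obstacle is to justify rigorously, at the regularity $X^\sigma$, the time-differentiation of $\Tilde p$ and of $\norm{\alpha_t}^2_{\mathfrak h^\sigma}$. If this is delicate, I would first approximate by regular data, then pass to the limit using continuity of the flow.
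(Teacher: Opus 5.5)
Your proposal is correct and follows essentially the same route as the paper. Energy conservation plus Duhamel gives the uniform and linear-in-$t$ bounds, and a Grönwall argument on $|q|^2+|\Tilde p|^2+\norm{\alpha_t}^2_{\mathfrak h^\sigma}$ gives the bound linear in $\norm{u_0}_{X^\sigma}$; there the antisymmetry of the Faraday tensor kills the magnetic term, and the $\alpha$-component is treated in differentiated form. Your preliminary detour through the squared Duhamel bound (yielding $e^{ct^2}$) is unnecessary. Your remark on justifying the time-differentiation at regularity $X^\sigma$ addresses a point the paper passes over silently.
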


\begin{remark}
The bound \eqref{eq : second bounds in a priori for NM}, even though its dependence on $t$ is worse than that of  \eqref{eq : first bounds in a priori for NM}, is useful  for several purposes.  On the one hand, it is more suited for density arguments (see the proof of Lemma \ref{lemma : growth of beta c} in Appendix \ref{appendix : Growth of the third functional} below) thanks to its linear dependance on $\norm{u_0}_{X^\sigma}$. On the other hand, it allows us to check Assumption \ref{eq : assumption on mu} when proving Corollary \ref{corollary : more general states}. However we will still make use of \eqref{eq : first bounds in a priori for NM} in order to prove Lemma \ref{lemma : Growth of the functional}, as it yields a better time dependence than would be obtained using \eqref{eq : second bounds in a priori for NM}. \\
\end{remark}

\begin{proof}
Let us start by proving \eqref{eq : second bounds in a priori for NM}. We use the shorthand notation $F^{\ell m}_{\alpha_t}(q_j(t)) = F^{j \ell m}_t$ for the classical Faraday tensor. To begin with, notice that if we define for any $u \in X^\sigma$ the quantity $\Tilde{u} = (q, \Tilde{p}, \alpha) \in X^\sigma$, where $\Tilde{p} = q - \boldsymbol{\mathbb{A}}_{\alpha}(q)$ is the kinetic momentum defined in \eqref{eq : physical momentum}, then there holds 
\begin{equation} \label{eq : equivalence between X sigma and tilde X sigma}
    C^{-1} \norm{u}_{X^\sigma} \leq \norm{\Tilde{u}}_{X^\sigma} \leq C \norm{u}_{X^\sigma} \ , \\
\end{equation}
for some constant $C > 0$ (depending on $N$ and $\kappa$). This comes from the simple bounds
\begin{equation} \label{eq : p tilda controled by p}
    \begin{split}
        |p_j(t)| & \leq |\Tilde{p}_j(t)| + C\norm{\alpha_t}_{\mathfrak{h}^\sigma} \\
        |\Tilde{p}_j(t)| & \leq |p_j(t)| + C\norm{\alpha_t}_{\mathfrak{h}^\sigma} \ , \\
    \end{split}
\end{equation}
which follow from \eqref{eq : physical momentum} and Lemma \ref{lemma : estimates on the classical elec potential}. In view of the heuristic explanation in Section \ref{subsec : Strategy of the proof}, we rather want to estimate $\norm{\Tilde{u}(t)}^2_{X^\sigma}$ in order to have a good control on $\norm{u(t)}^2_{X^\sigma}$. From \eqref{eq : newton motion eqs PF}, we have  
\begin{equation}
    \begin{split}
        \dt | q_j(t) | ^2 & = 4 q_j(t) \cdot \Tilde{p}_j(t) \\
        & \lesssim | q_j(t) | ^2 + | \Tilde{p}_j(t)| ^2 \ , \\
    \end{split}
\end{equation}
so that 
\begin{equation} \label{eq : bound duhamel q}
    | q_j(t) |^2 \leq | q_{j,0} | ^2 + C \int_0^t \left( | q_j(s) | ^2 + | \Tilde{p}_j(s)| ^2 \right) \textnormal{d}s \ , \\
\end{equation}
for some constant $C > 0$. Next, differentiating $\Tilde{p}_j$ using the chain rule yields :
\begin{equation} \label{eq : expression for derivative of p tilda}
    \partial_t \Tilde{p}^m_j =  \boldsymbol{\mathbb{E}}^m_{\alpha_t}(q_j) + 2 \sum_{\ell = 1}^3 \Tilde{p}^\ell_j F^{j \ell m}_t - \sum_{k \neq j} \left( \partial_{m} V \right) \left(q_j - q_k \right)  \\
\end{equation}
for all $m = 1,2,3$. We thus obtain :
\begin{equation}
    \begin{split}
        \dt |\Tilde{p}_j(t)|^2 & = 2 \sum_{m = 1}^3\Tilde{p}_j^m(t) \partial_t\Tilde{p}_j^m(t) \\
        & = 4\sum_{m=1}^3 \sum_{\ell = 1}^3 \Tilde{p}_j^m(t) \Tilde{p}_j^\ell(t) F^{j \ell m}_t + 2 \Tilde{p}_j(t) \cdot \boldsymbol{\mathbb{E}}_{\alpha_t}(q_j(t)) - 2 \sum_{k \neq j} \Tilde{p}_j(t) \cdot (\nabla V)(q_j(t) - q_k(t))  \ . \\
    \end{split}
\end{equation}
Now, note that the antisymmetry of the Faraday tensor implies
\begin{equation} \label{eq : antisymmetry cancels the term}
    \sum_{m=1}^3 \sum_{\ell = 1}^3 \Tilde{p}_j^m(t) \Tilde{p}_j^\ell(t) F^{j \ell m}_t = 0 \ . \\
\end{equation}
Hence by means of a Duhamel formula we control :
\begin{equation} \label{eq : bound duhamel p tilda}
    \begin{split}
        |\Tilde{p}_j(t)|^2 & \leq |\Tilde{p}_{j,0}|^2 + 2\int_0^t |\Tilde{p}_j(s)| \norm{\boldsymbol{\mathbb{E}}_{\alpha_s}}_{L^\infty(\mathbb{R}^3)} \textnormal{d}s + 2 \int_0^t |\Tilde{p}_j(s)| \sum_{k \neq j} \left \lvert \nabla V(q_j(s) - q_k(s)) \right \lvert \textnormal{d}s \\
        & \leq |\Tilde{p}_{j,0}|^2 + C\int_0^t \left( |\Tilde{p}_j(s)|^2 +\norm{\alpha_s}^2_{\mathfrak{h}^\sigma} \right) \textnormal{d}s + C \int_0^t \sum_{k \neq j} |\Tilde{p}_j(s)|  | q_j(s) - q_k(s) | \textnormal{d}s \\
        & \leq |\Tilde{p}_{j,0}|^2 + C\int_0^t \left( |\Tilde{p}_j(s)|^2 + \sum_{j=1}^N |q_j(s)|^2  + \norm{\alpha_s}^2_{\mathfrak{h}^\sigma} \right) \textnormal{d}s \ . \\
    \end{split}
\end{equation}
Here we used again Lemma \ref{lemma : estimates on the classical elec potential}, and \eqref{eq : growth of nabla V} in the second line. Let us now control $\norm{\alpha_t}_{\mathfrak{h}^\sigma}^2$.  Using the equation for $\alpha_t$ in \eqref{eq : newton motion eqs PF} allows us to derive the Duhamel formula 
\begin{equation}
\norm{\alpha_t}^2_{\mathfrak{h}^\sigma} = \norm{\alpha_0}^2_{\mathfrak{h}^\sigma} - \int_0^t 4 \Imm \sum_{\lambda = 1,2} \int_{\mathbb{R}^3} (1+|k|^2)^\sigma \overline{\alpha_s(k,\lambda)}  \sum_{j=1}^N \mathbf{G}_{q_j(s)}(k,\lambda) \cdot \Tilde{p}_j(s) \textnormal{d}k \textnormal{d}s \ , \\
\end{equation}
which implies
\begin{equation} \label{eq : bound for alpha t in gronwall}
    \norm{\alpha_t}^2_{\mathfrak{h}^\sigma} \leq \norm{\alpha_0}_{\mathfrak{h}^\sigma}^2  + C \int_0^t \left( \sum_{j=1}^N |\Tilde{p}_j(s)|^2 + \norm{\alpha_s}^2_{\mathfrak{h}^\sigma} \right) \textnormal{d}s \ . \\
\end{equation}
Collecting \eqref{eq : bound duhamel q}, \eqref{eq : bound duhamel p tilda} and \eqref{eq : bound for alpha t in gronwall} yields
\begin{equation}
    \norm{\Tilde{u}(t)}^2_{X^\sigma} \leq \norm{\Tilde{u}_0}^2_{X^\sigma} + C \int_0^t \norm{\Tilde{u}(s)}^2_{X^\sigma} \textnormal{d}s \ , \\
\end{equation}
which together with Grönwall's lemma gives \eqref{eq : second bounds in a priori for NM}. Let us now turn to the proof of \eqref{eq : first bounds in a priori for NM}, which uses the conservation of the energy. Indeed, in view of \eqref{eq : energy NM} and \eqref{eq : pointwise bounds for V and nabla V}, we have
\begin{equation} \label{eq : energy estimates in proof lemma}
    \begin{split}
        \norm{\alpha_t}_{\dot{\mathfrak{h}}^{1/2}} & \lesssim_\kappa \mathcal{H}(u(t))^{1/2} + 1 \\
        \sup_{j \in \{1, \ldots,N \}}  \ |p_j(t)| & \lesssim_\kappa \mathcal{H}(u(t))^{1/2} + 1 \ . \\
    \end{split}
\end{equation}
The above combined with $\mathcal{H}(u(t)) = \mathcal{H}(u_0)$ (see Lemma \ref{lemma : conservation of the energy}) and the bound 
\begin{equation} \label{eq : bound on the energy of u0}
    \mathcal{H}(u_0) \lesssim_\kappa \norm{u_0}_{X^\sigma}^2 + 1 \\
\end{equation}
which follows from $\norm{\alpha}_{\dot{\mathfrak{h}}^{1/2}} \leq \norm{\alpha}_{\mathfrak{h}^\sigma}$, give the first two bounds in \eqref{eq : first bounds in a priori for NM}. In order to prove the third bound, we first note that from \eqref{eq : energy estimates in proof lemma} and Lemma \ref{lemma : estimates on the classical elec potential}, there holds :
\begin{equation} \label{eq : energy estimate on tilde p in proof lemma}
\sup_{j \in \{1, \ldots,N \}}  |\Tilde{p}_j(t)| \lesssim_\kappa \mathcal{H}(u_0)^{1/2} + 1 \ . \\
\end{equation}
Together with the definition of $\norm{\cdot}_{X^\sigma}$ and \eqref{eq : equivalence between X sigma and tilde X sigma}, this implies
\begin{equation} \label{eq : energy estimate on u t in proof lemma}
\norm{u(t)}_{X^\sigma} \lesssim_{\kappa} \mathcal{H}(u_0)^{1/2} + \sup_{j \in \{1, \ldots,N \}}  \ |q_j(t)| + \norm{\alpha_t}_{\mathfrak{h}^\sigma} + 1
\ . \\
\end{equation}
In order to bound $\norm{\alpha_t}_{\mathfrak{h}^\sigma}$, we use the Duhamel formula
\begin{equation}
\alpha_t(k,\lambda) = e^{-it|k|}\alpha_0(k,\lambda) + 2i \int_0^t e^{-i(t-s)|k|} \sum_{j=1}^N \mathbf{G}_{q_j(s)}(k,\lambda) \cdot \Tilde{p}_j(s) \textnormal{d}s \ , \\
\end{equation}
the triangle inequality, the bounds \eqref{eq : bound on the energy of u0}, \eqref{eq : energy estimate on tilde p in proof lemma} and $\norm{\mathbf{G}_q}_{\mathfrak{h}^\sigma} \lesssim_\kappa 1$ to estimate
\begin{equation} \label{eq : final bound alpha t in lemma}
\norm{\alpha_t}_{\mathfrak{h}^\sigma} \leq \norm{\alpha_0}_{\mathfrak{h} ^\sigma} + Ct \ , \\
\end{equation}
where the constant $C$ depends (not linearly) on $\norm{u_0}_{X^\sigma}$.  Similarly,  we get
\begin{align}
\label{eq : final bound q t in lemma}
\begin{split}
\sup_{j \in \{1, \ldots,N \}}  \ |q_j(t)| &\leq \sup_{j \in \{1, \ldots,N \}}  \ |q_j(0)| + \int_0^t \sup_{j \in \{1, \ldots,N \}}  \ |\Tilde{p}_j(s)| \, ds
\\
&\leq \sup_{j \in \{1, \ldots,N \}}  \ |q_j(0)|  + Ct
\end{split} 
\end{align} 
by means of \eqref{eq : energy estimate on tilde p in proof lemma}. Combining \eqref{eq : bound on the energy of u0},  \eqref{eq : energy estimate on u t in proof lemma},   \eqref{eq : final bound alpha t in lemma},  and
\eqref{eq : final bound q t in lemma} gives the desired result. \\
\end{proof}

\subsection{Invariance of the domains} \label{subsect : invariance of the domains}

In order to justify the well-definedness of the functional $\boldsymbol{\beta}$ under the time evolution of the Pauli--Fierz Hamiltonian, we need to ensure that the domains involved in Definition \ref{def : Definition of the functional} are invariant under the action of $e^{-i\frac{t}{\hbar}\mathbb{H}_{\hbar}}$. This is proven in the following lemma :
\begin{lemma} \label{lemma : invariance of the domains for beta}
Let $\kappa$ satisfy Assumption \ref{assumptions : kappa}, and consider $\mathbb{H}_\hbar$ the Pauli--Fierz Hamiltonian defined in \eqref{eq : PF Hamiltonian}. Then for all $t \geq 0$, there holds
\begin{equation} \label{eq : invariance of the domains for beta}
    e^{-i\frac{t}{\hbar}\mathbb{H}_{\hbar}} \mathcal{D}\left(\mathbb{H}_{\hbar}^{1/2}\right) \cap \mathcal{D}\left(\mathcal{N}^{1/2}\right) \cap \mathcal{D}(\mathbf{x}) = \mathcal{D}\left(\mathbb{H}_{\hbar}^{1/2}\right)\cap \mathcal{D}\left(\mathcal{N}^{1/2}\right) \cap \mathcal{D}(\mathbf{x})  \ . \\
\end{equation}
\end{lemma}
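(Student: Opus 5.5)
Since $e^{-i\frac{t}{\hbar}\mathbb{H}_\hbar}$ commutes with $\mathbb{H}_\hbar$, the invariance of $\mathcal{D}(\mathbb{H}_\hbar^{1/2})$ is immediate from the spectral theorem. Moreover $\|\mathbb{H}_\hbar^{1/2}\Psi_t\|=\|\mathbb{H}_\hbar^{1/2}\Psi_0\|$ (after shifting $\mathbb{H}_\hbar$ by a constant to make it positive; $V$ is bounded by Lemma~\ref{lemma : Properties of the smeared Coulomb potential}). The group property gives the reverse inclusion. It therefore suffices to show that for $\Psi_0$ in the intersection, $\Psi_t$ lies in $\mathcal{D}(\mathcal{N}^{1/2})$ and $\mathcal{D}(\mathbf{x})$. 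For this I will use a standard regularized Grönwall argument.

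\textbf{Regularization.} Introduce bounded regularizations $\mathcal{N}_\varepsilon=\mathcal{N}(1+\varepsilon\mathcal{N})^{-1}$ and $|\mathbf{x}|^2_\varepsilon=|\mathbf{x}|^2(1+\varepsilon|\mathbf{x}|^2)^{-1}$. Then compute $\frac{\textnormal{d}}{\textnormal{d}t}\langle\Psi_t,\mathcal{N}_\varepsilon\Psi_t\rangle=\frac{i}{\hbar}\langle\Psi_t,[\mathbb{H}_\hbar,\mathcal{N}_\varepsilon]\Psi_t\rangle$, first for $\Psi_0\in\mathcal{D}(\mathbb{H}_\hbar)$ and then by density. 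Only the interaction term fails to commute with $\mathcal{N}$. Writing $\mathbb{H}_\hbar$ minus the free part as $-\hbar^{1/2}\{-i\hbar\nabla_j,\boldsymbol{\hat{\mathbb{A}}}(x_j)\}+\hbar\boldsymbol{\hat{\mathbb{A}}}(x_j)^2$, one gets commutators of $a(\mathbf{G}_{x_j})$, $a^*(\mathbf{G}_{x_j})$ with functions of $\mathcal{N}$. Using the pull-through formula $a(g)f(\mathcal{N})=f(\mathcal{N}+1)a(g)$, these are controlled uniformly in $\varepsilon$. One obtains bounds of the form
\[
\Big|\tfrac{\textnormal{d}}{\textnormal{d}t}\langle\Psi_t,\mathcal{N}_\varepsilon\Psi_t\rangle\Big|\lesssim_{\hbar,\kappa,N}\|(\mathbb{H}_\hbar+C)^{1/2}\Psi_t\|\,\|(\mathcal{N}_\varepsilon+1)^{1/2}\Psi_t\|+\langle\Psi_t,(\mathcal{N}_\varepsilon+1)\Psi_t\rangle ,
\]
where \eqref{eq : estimates on creators and annihilators} and Lemma~\ref{lemma : Bounds on the quantized electromagnetic field} are used to bound $P_j\Psi_t$ by the conserved energy. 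Note that $\|\mathbf{G}_{x}\|_{\mathfrak{h}}$ is finite by Assumption~\ref{assumptions : kappa}, uniformly in $x$. The quadratic term $\boldsymbol{\hat{\mathbb{A}}}^2$ involves $a^\#a^\#$, whose commutator with $\mathcal{N}_\varepsilon$ changes the photon number by $\pm2$; these terms are bounded by $\langle(\mathcal{N}_\varepsilon+1)\rangle$ in the same way. Grönwall then gives a bound independent of $\varepsilon$, and monotone convergence as $\varepsilon\to0$ yields $\Psi_t\in\mathcal{D}(\mathcal{N}^{1/2})$.

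\textbf{Position.} For $\mathcal{D}(\mathbf{x})$ the computation is simpler. We have $\frac{i}{\hbar}[\mathbb{H}_\hbar,|x_j|^2_\varepsilon]=\{P_j,\nabla|x_j|^2_\varepsilon\}$, with $|\nabla|x|^2_\varepsilon|\lesssim|x|_\varepsilon$ uniformly in $\varepsilon$. Hence
\[
\Big|\tfrac{\textnormal{d}}{\textnormal{d}t}\langle\Psi_t,|\mathbf{x}|^2_\varepsilon\Psi_t\rangle\Big|\lesssim\|P\Psi_t\|\,\||\mathbf{x}|_\varepsilon\Psi_t\|\lesssim\|(\mathbb{H}_\hbar+C)^{1/2}\Psi_0\|\,\||\mathbf{x}|_\varepsilon\Psi_t\|.
\]
This yields linear growth in $t$, uniformly in $\varepsilon$, and the limit $\varepsilon\to0$ concludes. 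Both arguments are reversible under $t\to-t$, which gives equality of the domains.

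\textbf{Main obstacle.} The difficult step is rigorously justifying the commutator computation for $\mathcal{N}_\varepsilon$. The issue is that $\mathcal{N}_\varepsilon$ need not preserve $\mathcal{D}(\mathbb{H}_\hbar)$, so the derivative formula must be applied in the form sense on $\mathcal{D}(\mathbb{H}_\hbar^{1/2})$. Concretely, I would express $\frac{i}{\hbar}[\mathbb{H}_\hbar,\mathcal{N}_\varepsilon]$ explicitly as a sum of terms $P_j\cdot(\ldots)a^\#(\mathbf{G}_{x_j})(\ldots)$ with bounded $\varepsilon$-dependent factors. I would then check that each is a bounded form relative to $(\mathbb{H}_\hbar+C)^{1/2}\otimes(\mathcal{N}_\varepsilon+1)^{1/2}$, using the Hiroshima equivalence $\|(\mathbb{H}^0_\hbar+1)^{1/2}\psi\|\lesssim\|(\mathbb{H}_\hbar+1)^{1/2}\psi\|$ quoted in the proof of Lemma~\ref{lemma : Bounds on the quantized electromagnetic field}.
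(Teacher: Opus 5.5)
Your proposal is correct and follows essentially the paper's route: Stone's theorem for $\mathcal{D}(\mathbb{H}_\hbar^{1/2})$, and a regularized commutator/Gr\"onwall argument for $\mathcal{N}$, which the paper only cites from earlier work rather than proving. For $\mathcal{D}(\mathbf{x})$ you use the same regularization as the paper (your $|\mathbf{x}|^2_\varepsilon$ with $\varepsilon=n^{-2}$ is exactly $|\mathbf{x}_n|^2$), followed by Gr\"onwall and monotone convergence. The only difference is that you keep the kinetic momentum $P_j$ intact and bound $\|P_j\Psi_t\|$ by the conserved energy. This avoids splitting it into $-i\hbar\nabla_j$ and $\hbar^{1/2}\boldsymbol{\hat{\mathbb{A}}}(x_j)$ and invoking Hiroshima's relative bound, and it gives linear rather than exponential growth in $t$.
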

\begin{proof}
The invariance of $\mathcal{D}\left(\mathbb{H}_{\hbar}^{1/2}\right)$ is implied by Stone's theorem, and that of $\mathcal{D}\left(\mathcal{N}^{1/2}\right)$ is proven in complete analogy to \cite[Lemma 3.2]{leopold2024derivationmaxwellschrodingervlasovmaxwellequations}. There remains to prove the invariance of $\mathcal{D}(\mathbf{x})$, which we obtain by proving the following fact : consider $\Psi \in \mathcal{D}\left(\mathbb{H}_{\hbar}^{1/2}\right) \cap \mathcal{D}(\mathbf{x})$. Then, there exist constants $C, c> 0$ such that for all $t \geq 0$,
    \begin{equation} \label{eq : invariance D x2 conclu}
    \norm{\mathbf{x} e^{-i\frac{t}{\hbar}\mathbb{H}_{\hbar}} \Psi} \leq Ce^{ct} 	\left( \norm{\mathbf{x}\Psi} + \norm{\left(\mathbb{H}_{\hbar} + 1\right)^{1/2} \Psi} \right) \ . \\
    \end{equation}    
In order to establish \eqref{eq : invariance D x2 conclu} in a rigorous manner, we introduce the following regularization of the position operator :
\begin{equation} \label{eq : regularized position op}
\mathbf{x}_{n} = \frac{\mathbf{x}}{(1 + n^{-2}|\mathbf{x}|^2)^{1/2}} \quad \forall n \geq 1 \ , \\
\end{equation}
with entries noted $x_{n,j}$, $j = 1 \dots, N$. It is a bounded symmetric operator, bounded by $n$. Let 
\begin{equation}
\Psi_t = e^{-i\frac{t}{\hbar}\mathbb{H}_{\hbar}} \Psi \\
\end{equation}
denote the solution of \eqref{eq : sch eq} with initial data $\Psi$. Then, we compute :
\begin{equation} \label{eq : derivative xn 1}
	\begin{split}
		\dt	\norm{\mathbf{x}_n \Psi_t}^2 & = 2 \Ree \sum_{j=1}^N \sum_{\ell = 1}^3 \ps{\Psi_t}{x_{n,j}^\ell i \hbar^{-1}\com{\mathbb{H}_{\hbar}}{x_{n,j}^\ell}\Psi_t} \\
		& \lesssim \sum_{j=1}^N \sum_{\ell = 1}^3 \norm{x_{n,j}^\ell\Psi_t} \norm{i \hbar^{-1}\com{\mathbb{H}_{\hbar}}{x_{n,j}^\ell} \Psi_t} \ . \\
	\end{split}
\end{equation}
Next, denoting $\left\{A,B\right\} = AB + BA$ the anticommutator of two operators $A$ and $B$, we expand the commutator in \eqref{eq : derivative xn 1} as :
\begin{equation} \label{eq : com H x}
	\begin{split}
		i \hbar^{-	1}\com{\mathbb{H}_{\hbar}}{x_{n,j}^\ell} & = \sum_{k=1}^N \sum_{m=1}^3 \left\{ -i \hbar \partial_{km} - \hbar^{1/2} \boldsymbol{\hat{\mathbb{A}}}^m(x_k), \com{\partial_{km}}{x^\ell_{n,j}} \right\} \\
		& = -2 \hbar^{1/2} \sum_{k=1}^N \sum_{m=1}^3  \boldsymbol{\hat{\mathbb{A}}}^m(x_k) \partial_{km}x^\ell_{n,j} -2i \hbar \sum_{k=1}^N \sum_{m=1}^3 \partial_{km}x^\ell_{n,j}\partial_{km} - i \hbar \sum_{k=1}^N \sum_{m=1}^3 \partial^2_{k m}x^\ell_{n,j} \ . \\
	\end{split}
\end{equation}
Denote $F_n(\mathbf{x}) = (1 + n^{-2}|\mathbf{x}|^2)^{-1/2}$. A computation gives 
\begin{equation} \label{eq : 1st and 2nd derivative xnjell}
	\begin{split}
		\partial_{km}x^\ell_{n,j} & = F_n(\mathbf{x}) \delta_{jk} \delta_{\ell m} - n^{-2} F_n(\mathbf{x}) x^m_{n,k} x^\ell_{n,j} \\
		\partial^2_{k m}x^\ell_{n,j} & = -2 n^{-2} F_n(\mathbf{x})^2 x^m_{n,k}\delta_{jk} \delta_{\ell m} + 3 n^{-4} F_n(\mathbf{x})^2 (x^m_{n,k})^2 x^\ell_{n,j} - n^{-2}F_n(\mathbf{x})^2 x^\ell_{n,j} \ . \\
	\end{split}
\end{equation}
Inserting this in \eqref{eq : com H x} and using that $F_n(\mathbf{x}) \leq 1$, $\|\mathbf{x}_n\|_{\textnormal{op}} \leq n$ and $\com{\kappa \ast \mathbf{A}_m(x_k)}{x^m_{n,k} x^\ell_{n,j}} = 0$, we find
\begin{equation}
\begin{split}
	\norm{i \hbar^{-1}\com{\mathbb{H}_{\hbar}}{x_{n,j}^\ell} \Psi_t} & \lesssim \hbar^{1/2} \norm{\boldsymbol{\hat{\mathbb{A}}}^\ell(x_j) \Psi_t} + \hbar^{1/2} \sum_{k=1}^N \sum_{m=1}^3 \norm{\boldsymbol{\hat{\mathbb{A}}}^m(x_k)\Psi_t} \\
	& \quad \quad + \hbar \norm{\partial_{j,l} \Psi_t} + \hbar \sum_{k=1}^N \sum_{m=1}^3 \norm{\partial_{km} \Psi_t} \\
	& \quad \quad + \hbar n^{-2} \norm{x^\ell_{n,j} \Psi_t} + \hbar n^{-2} \sum_{k=1}^N \sum_{m=1}^3 \norm{x^m_{n,k} \Psi_t} \ . \\
\end{split}
\end{equation}
Hence, going back to \eqref{eq : derivative xn 1} and applying several times Young's inequality, we obtain 
\begin{equation} \label{eq : derivative xn 2}
	\begin{split}
		\dt	\norm{\mathbf{x}_n \Psi_t}^2 & \lesssim \hbar^{1/2} \left( \norm{\mathbf{x}_n \Psi_t}^2 + \sum_{j=1}^N \norm{\boldsymbol{\hat{\mathbb{A}}}^\ell(x_j) \Psi_t}^2  \right) \\
		& \quad \quad + \hbar \left( \norm{\mathbf{x}_n \Psi_t}^2 + \sum_{j=1}^N \norm{\nabla_j \Psi_t}^2 \right) + \hbar n^{-2} \norm{\mathbf{x}_n \Psi_t}^2 \\
		& \lesssim (\hbar^{1/2} + \hbar(1+n^{-2})) \norm{\mathbf{x}_n \Psi_t}^2 + \hbar^{1/2}\norm{\left(|\cdot|^{-1/2} + |\cdot|^{-1} \right) \mathcal{F}[\kappa]}^2_{L^2(\mathbb{R}^3)} \norm{\left(H_{\textnormal{f}} + 1\right)^{1/2}\Psi_t}^2 \\
		& \quad \quad + \hbar \ps{\Psi_t}{\sum_{j=1}^N(-\Delta_j) \Psi_t} \\
		& \lesssim (\hbar^{1/2} + \hbar(1+n^{-2})) \norm{\mathbf{x}_n \Psi_t}^2  \\
		& \quad \quad + C  \left( \hbar + \hbar^{1/2}\norm{\left(|\cdot|^{-1/2} + |\cdot|^{-1} \right) \mathcal{F}[\kappa]}^2_{L^2(\mathbb{R}^3)} \right) \norm{\left(\mathbb{H}_{\hbar} + 1\right)^{1/2}\Psi_t}^2 \ . \\
	\end{split}
\end{equation}
Here in the second line, we used the fact that $\boldsymbol{\hat{\mathbb{A}}}(x) = a(\mathbf{G}_x) + a^*(\mathbf{G}_x)$, together with \eqref{eq : estimates on creators and annihilators}, and in the third line we used inequalities from \cite[Equation 3.15]{falconi2023derivation} and the subsequent discussion. Hence, applying Grönwall's lemma, we obtain
\begin{equation}
	\norm{\mathbf{x}_n \Psi_t}^2 \leq e^{ct} \left( \norm{\mathbf{x}_n \Psi}^2 + C \int_0^t \norm{\left(\mathbb{H}_{\hbar} + 1\right)^{1/2}\Psi_s}^2 \textnormal{d}s \right) \\
\end{equation}
for some constants $C,c > 0$ depending on $N,\hbar,\kappa$. Taking the limit $n \rightarrow \infty$ on both sides of the above inequality and using monotone convergence allows us to conclude the claim, given the initial assumptions on $\Psi$. \\
\end{proof}

\subsection{Growth of the functional} \label{subsect : growth of the functional}
In this subsection we will analyze each functional defined in Definition~\ref{def : Definition of the functional} in order to prove Lemma \ref{lemma : Growth of the functional}. To do so, we state and prove three lemmas which give us its change in time.

\begin{lemma}
    Let $\sigma \in \left[1/2, 1 \right]$. Let $u_0 = (q_0,p_0,\alpha_0) \in X^\sigma$ and $\Psi_0 \in  \mathcal{D}(\mathbb{H}_{\hbar}^{1/2}) \cap \mathcal{D}(\mathcal{N}^{1/2}) \cap \mathcal{D}(\mathbf{x})$. Let $u(t) = (q(t),p(t),\alpha_t)$ and  $\Psi_t = e^{-i \frac{t}{\hbar}\mathbb{H}_{\hbar}}\Psi_0$ be the respective solutions of \eqref{eq : newton motion eqs PF} and \eqref{eq : sch eq} with initial data $u_0$ and $\Psi_0$. Then,
    \begin{equation} \label{eq : change of beta a in time}
        		 \boldsymbol{\beta}^a(\Psi_t, q(t)) - \boldsymbol{\beta}^a(\Psi_0, q_0) 
        		= \int_0^t 4 \Ree \sum_{j=1}^N \ps{\Psi_s}{(x_j - q_j(s)) \mathbb{G}_{j,s} \Psi_s} \textnormal{d}s \ .\\	
    \end{equation}
\end{lemma}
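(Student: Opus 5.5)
We read $\mathbb{G}_{j,s}$ as the fluctuation of the kinetic momentum of the $j$-th particle, $\mathbb{G}_{j,s} = P_j - \Tilde{p}_j(s)$, with $P_j$ and $\Tilde{p}_j$ as in \eqref{eq : def kinetic momentums N particles}. The symbol is not defined in the excerpt, so this reading is an assumption; it is the one that produces the constant $4$ in \eqref{eq : change of beta a in time}. The identity is then Ehrenfest's theorem for the position operator. Formally, $i\hbar^{-1}\com{\mathbb{H}_\hbar}{x_j} = 2P_j$ by the computation in \eqref{eq : com H x} with $x_{n,j}$ replaced by $x_j$. On the classical side, \eqref{eq : newton motion eqs PF} gives $\partial_t q_j = 2\Tilde{p}_j$. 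Hence
\begin{equation*}
\dt \norm{(x_j - q_j(t))\Psi_t}^2 = 2\Ree \ps{(x_j-q_j(t))\Psi_t}{\left(2P_j - 2\Tilde{p}_j(t)\right)\Psi_t} .
\end{equation*}
Summing over $j$ and integrating in time gives \eqref{eq : change of beta a in time}, because $x_j - q_j(t)$ is symmetric.

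To make this rigorous, we reuse the regularization \eqref{eq : regularized position op} from Lemma~\ref{lemma : invariance of the domains for beta}. Since $\mathbf{x}_n$ is bounded, we can work with $\norm{(\mathbf{x}_n - q(t))\Psi_t}^2$.
\begin{itemize}
\item \textbf{Differentiation.} We differentiate first for $\Psi_0 \in \mathcal{D}(\mathbb{H}_\hbar)$, which is legitimate since $t\mapsto \Psi_t$ is then $C^1$. The derivative equals $2\Ree\ps{(x_{n,j}-q_j)\Psi_t}{\big(i\hbar^{-1}\com{\mathbb{H}_\hbar}{x_{n,j}} - 2\Tilde{p}_j(t)\big)\Psi_t}$. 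The commutator is given explicitly by \eqref{eq : com H x} and \eqref{eq : 1st and 2nd derivative xnjell}.
\item \textbf{Integration in time.} We write the Duhamel identity for this regularized quantity. By density of $\mathcal{D}(\mathbb{H}_\hbar)$ in the form domain, it extends to $\Psi_0 \in \mathcal{D}(\mathbb{H}_\hbar^{1/2})$. For this we use that $\norm{\nabla_j\Psi_s}$ and $\norm{\boldsymbol{\hat{\mathbb{A}}}(x_j)\Psi_s}$ are bounded by $\norm{(\mathbb{H}_\hbar+1)^{1/2}\Psi_0}$, via Lemma~\ref{lemma : Bounds on the quantized electromagnetic field}.
\item \textbf{Limit $n\to\infty$.} From \eqref{eq : 1st and 2nd derivative xnjell}, $\partial_{km}x^\ell_{n,j} \to \delta_{jk}\delta_{\ell m}$ and $\partial^2_{km}x^\ell_{n,j}\to 0$ strongly, with uniform bounds. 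Hence $i\hbar^{-1}\com{\mathbb{H}_\hbar}{x_{n,j}}\Psi_s \to 2P_j\Psi_s$. Also $\mathbf{x}_n\Psi_s\to\mathbf{x}\Psi_s$, since $\Psi_s\in\mathcal{D}(\mathbf{x})$ by Lemma~\ref{lemma : invariance of the domains for beta}.
\end{itemize}

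The main obstacle is justifying dominated convergence in $s$ for the last step. We need uniform-in-$n$ integrable bounds on $\norm{\mathbf{x}_n\Psi_s}$ and on $\norm{i\hbar^{-1}\com{\mathbb{H}_\hbar}{x_{n,j}}\Psi_s}$ over $s\in[0,t]$.
\begin{itemize}
\item For the first, we use the Grönwall bound \eqref{eq : invariance D x2 conclu} established in the proof of Lemma~\ref{lemma : invariance of the domains for beta}.
\item For the second, we use the commutator estimate from that same proof, together with conservation of $\norm{(\mathbb{H}_\hbar+1)^{1/2}\Psi_s}$ and the continuity of $q(s)$ and $\Tilde{p}(s)$ given by Proposition~\ref{Theorem : well posedness AFH}.
\end{itemize}
With these bounds, both sides of the regularized Duhamel identity converge, which yields the claim.
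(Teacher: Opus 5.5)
Your reading $\mathbb{G}_{j,s} = P_j - \Tilde{p}_j(s)$ matches the paper's definition. Your argument is essentially the paper's own proof: regularize with $\mathbf{x}_n$, differentiate via the commutator \eqref{eq : com H x}, and remove the regularization by dominated convergence using \eqref{eq : invariance D x2 conclu} and Lemma~\ref{lemma : Bounds on the quantized electromagnetic field}. The preliminary density step from $\mathcal{D}(\mathbb{H}_\hbar)$ and the direct strong-convergence treatment of $\partial_{km}x^\ell_{n,j}$ and $\partial^2_{km}x^\ell_{n,j}$ are just a tidier, slightly more careful packaging of the paper's $\textnormal{I}_n$/$\textnormal{II}_n$ estimates.
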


\begin{proof}
We start by defining 
\begin{equation}
\boldsymbol{\beta}^a_n(\Psi_t, q(t)) = \ps{\Psi_t}{ \left( \mathbf{x}_{n} - q(t) \right)^2 \Psi_t}_{\mathfrak{H}} ,
\end{equation}
where $\mathbf{x}_n$ is the regularized position operator defined in \eqref{eq : regularized position op}. First, we see that
\begin{equation} \label{eq : cvgce beta a n to beta a}
	\begin{split}
		\left \lvert \boldsymbol{\beta}^a_n(\Psi_t, q(t)) - \boldsymbol{\beta}^a(\Psi_t, q(t)) \right \lvert & \leq \norm{(\mathbf{x}_n - \mathbf{x}) \Psi_t} \norm{\left(\mathbf{x}_n + \mathbf{x} - 2 q(t) \right) \Psi_t} \underset{n \rightarrow + \infty}{\longrightarrow} 0 \\
	\end{split}
\end{equation}
by monotone convergence. Now, using the first line of \eqref{eq : com H x} and \eqref{eq : 1st and 2nd derivative xnjell}, we compute :
\begin{equation} \label{eq : beta a n derivative 1}
    \begin{split}
        \dt \boldsymbol{\beta}_n^a(t) & = 4 \Ree \sum_{j,k=1}^N \sum_{\ell,m=1}^3 \ps{\Psi_t}{\left( x^\ell_{n,j} - q^\ell_j(t)\right)\left(-i\hbar \partial_{km} - \hbar^{1/2}\boldsymbol{\hat{\mathbb{A}}}^m(x_{k})\right) \com{\partial_{km}}{x^\ell_{n,j}}  \Psi_t} \\
        & \quad \quad + 2 \Ree \  i \hbar \sum_{j,k=1}^N \sum_{\ell,m=1}^3 \ps{\Psi_t}{ \left( x^\ell_{n,j} - q^\ell_j(t)\right)\left(\partial^2_{km}x^\ell_{n,j}\right)  \Psi_t} \\
        & \quad \quad - 4 \Ree \sum_{j=1}^N \sum_{\ell=1}^3 \ps{\Psi_t}{\left( x^\ell_{n,j} - q^\ell_j(t)\right) \left( p^\ell_j(t) - \boldsymbol{\mathbb{A}}^\ell_{\alpha_t}(q_j(t)) \right)\Psi_t} \\
        & = 4 \Ree \sum_{j=1}^N \sum_{\ell = 1}^3 \ps{\Psi_t}{\left( x^\ell_{n,j} - q^\ell_j(t)\right) \left(\left(-i\hbar \partial_{j \ell} - \hbar^{1/2}\boldsymbol{\hat{\mathbb{A}}}^\ell(x_{j})\right)F_n(\mathbf{x}) - \left( p^\ell_j(t) - \boldsymbol{\mathbb{A}}^\ell_{\alpha_t}(q_j(t)) \right) \right)\Psi_t} \\
         & \quad \quad -4n^{-2} \Ree \sum_{j,k=1}^N \sum_{\ell,m = 1}^3 \ps{\Psi_t}{\left( x^\ell_{n,j} - q^\ell_j(t)\right)\left(-i\hbar \partial_{km} - \hbar^{1/2}\boldsymbol{\hat{\mathbb{A}}}^m(x_{k})\right)F_n(\mathbf{x})^2x^m_{n,k} x^\ell_{j}\Psi_t} \\
        & := \textnormal{I}_{n} + \textnormal{II}_n \ . \\
    \end{split}
\end{equation}
In the above we used that 
\begin{equation}
\ps{\Psi_t}{ \left( x^\ell_{n,j} - q^\ell_j(t)\right)\left(\partial^2_{km}x^\ell_{n,j}\right)  \Psi_t} \in \mathbb{R} \ , \\
\end{equation}
since $\mathbf{x}_n$ is symmetric and commutes with $\partial^2_{km}x^\ell_{n,j}$ which is also symmetric in view of \eqref{eq : 1st and 2nd derivative xnjell}. Let us treat the term $\textnormal{I}_n$. We first define : 
\begin{equation}
	\begin{split}
		\textnormal{I} & := 4 \Ree \sum_{j=1}^N \ps{\Psi_t}{\left( x_{j} - q_j(t)\right) \left(-i\hbar \nabla_j - \hbar^{1/2}\boldsymbol{\hat{\mathbb{A}}}(x_{j}) - \left( p_j(t) -\boldsymbol{\mathbb{A}}_{\alpha_t}(q_j(t)) \right) \right)\Psi_t} \ . \\
	\end{split}
\end{equation}
In view of $\com{-i \hbar \nabla_j}{F_n(\mathbf{x})} = i\hbar n^{-2}F_n(\mathbf{x})^2x_{n,j}$, we write 
\begin{equation}
    \left(-i\hbar \nabla_j - \hbar^{1/2}\boldsymbol{\hat{\mathbb{A}}}(x_{j})\right)\left(F_n(\mathbf{x}) - 1 \right) = \left(F_n(\mathbf{x}) - 1 \right)\left(-i\hbar \nabla_j - \hbar^{1/2}\boldsymbol{\hat{\mathbb{A}}}(x_{j})\right) + i\hbar n^{-2}F_n(\mathbf{x})^2x_{n,j} \ . \\
\end{equation}
Together with $x_{n,j} - q_j(t) = (x_{n,j} - x_j) + (x_j - q_j(t))$, this gives
\begin{equation} \label{eq : I1 in beta a}
\begin{split}
	\left \lvert \textnormal{I}_n - \textnormal{I} \right \lvert & \lesssim \sum_{j=1}^N \norm{(x_{n,j} - x_j) \Psi_t} \norm{\left(\left(-i\hbar \nabla_j - \hbar^{1/2}\boldsymbol{\hat{\mathbb{A}}}(x_{j})\right)F_n(\mathbf{x}) - \left( p_j(t) - \boldsymbol{\mathbb{A}}_{\alpha_t}(q_j(t)) \right) \right)\Psi_t} \\
	& \quad \quad +  \sum_{j=1}^N \norm{\left(F_n(\mathbf{x}) - 1 \right)(x_{j} - q_j(t)) \Psi_t} \norm{\left(-i\hbar \nabla_j - \hbar^{1/2}\boldsymbol{\hat{\mathbb{A}}}(x_{j})\right)\Psi_t} \\
    & \quad \quad + \hbar n^{-2} \sum_{j=1}^N \norm{(x_{j} - q_j(t)) \Psi_t} \norm{F_n(\mathbf{x})^2x_{n,j}\Psi_t} \ .
\end{split}
\end{equation}
Then, using once again $\com{-i \hbar \nabla_j}{F_n(\mathbf{x})} = i\hbar n^{-2}F_n(\mathbf{x})^2x_{n,j}$ and $|F_n(\mathbf{x})| \leq 1$, we get
\begin{equation}
	\norm{\left(-i\hbar \nabla_j - \hbar^{1/2}\boldsymbol{\hat{\mathbb{A}}}(x_{j})\right)F_n(\mathbf{x})\Psi_t} \leq \norm{\left(-i\hbar \nabla_j - \hbar^{1/2}\boldsymbol{\hat{\mathbb{A}}}(x_{j})\right)\Psi_t} + \hbar n^{-2} \norm{x_{n,j} \Psi_t} \ .
\end{equation} 
The first term in the above is finite in view of \eqref{eq : estimates in lemma to show cvgce}, and the second goes to zero thanks to $\norm{\mathbf{x}_n}_{\textnormal{op}} \leq n$. Moreover, 
\begin{equation}
\norm{\left(\mathbf{x}_n - \mathbf{x} \right) \Psi_t} \underset{n \rightarrow + \infty}{\longrightarrow} 0
\end{equation} 
by dominated convergence, thanks to Lemma \ref{lemma : invariance of the domains for beta}. Hence, the first term in \eqref{eq : I1 in beta a} goes to zero, so does the second one, using that $F_n(\mathbf{x}) \longrightarrow 1$ almost everywhere and dominated convergence again. The third term is delt with in a similar manner, thanks to $|F_n(\mathbf{x})| \leq 1$ and $\norm{\mathbf{x}_n}_{\textnormal{op}} \leq n$. Let us continue with $\textnormal{II}_n$. We have :
\begin{equation} \label{eq : bound on II in lemma beta a}
\begin{split}
	|\textnormal{II}_n| & \lesssim n^{-4} \sum_{j,k=1}^N \sum_{\ell,m=1}^3  \left \lvert \left \langle \Psi_t, \left( x^\ell_{n,j} - q^\ell_j(t)\right) F_n(\mathbf{x})^5 \lvert x^m_k \lvert^2 x^\ell_j \Psi_t \right \rangle \right \lvert + n^{-2}\sum_{j=1}^N \sum_{\ell=1}^3 \left \lvert \left \langle \Psi_t, \left( x^\ell_{n,j} - q^\ell_j(t)\right) F_n(\mathbf{x})^3 x^\ell_j \Psi_t \right \rangle \right \lvert \\
	& \quad + n^{-2} \sum_{j,k=1}^N \sum_{\ell,m=1}^3 \left \lvert \left \langle \Psi_t, \left( x^\ell_{n,j} - q^\ell_j(t)\right) F_n(\mathbf{x})^3 x^m_k x^\ell_j \partial_{km} \Psi_t \right \rangle \right \lvert \\
	& \quad + n^{-2}  \sum_{j,k=1}^N \sum_{\ell,m=1}^3 \left \lvert \left \langle \Psi_t, \left( x^\ell_{n,j} - q^\ell_j(t)\right) \boldsymbol{\hat{\mathbb{A}}}^m(x_{k}) F_n(\mathbf{x})^3 x^m_k x^\ell_j  \Psi_t \right \rangle \right \lvert \ . \\
\end{split}
\end{equation}
Using \eqref{eq : cvgce beta a n to beta a} and $\|\mathbf{x}_n\|_{\textnormal{op}} \leq n$, one easily sees that the first and second term in the above go to $0$. Regarding the third term, note that the operator 
\begin{equation}
G_n(\mathbf{x}) : = \frac{n^{-2}|\mathbf{x}|^2}{(1+n^{-2}|\mathbf{x}|^2)^{3/2}} \\
\end{equation}
is a bounded operator, and it is actually bounded uniformly in $n$ (by $1$). Moreover, we have $G_n(\mathbf{x}) \longrightarrow 0$ almost everywhere. Hence, using \eqref{eq : inner product on H} we write :
\begin{equation} \label{eq : dominated convergence II beta a}
\begin{split}
	n^{-2} \left \lvert \left \langle \Psi_t, \left( x^\ell_{n,j} - q^\ell_j(t)\right) F_n(\mathbf{x})^3 x^m_k x^\ell_j \partial_{km} \Psi_t \right \rangle \right \lvert & \lesssim \sum_{m , \boldsymbol{\lambda}} \iint_{\mathbb{R}^{3N} \times \mathbb{R}^{3n}} (|\mathbf{x}| + |q(t)|) \lvert \Psi^{(m)}_t(\mathbf{x}, \mathbf{k}, \boldsymbol{\lambda}) \lvert G_n(\mathbf{x}) \lvert \partial_{km} \Psi^{(m)}_t(\mathbf{x}, \mathbf{k}, \boldsymbol{\lambda}) \lvert \textnormal{d} \mathbf{x} \textnormal{d}\mathbf{k} \\
	& \underset{n \rightarrow \infty}{\longrightarrow} 0 \\
\end{split}
\end{equation}
by dominated convergence. In the above we used that the right hand side can be uniformly bounded by a constant times
\begin{equation}
\left \lVert \mathbf{x} \Psi_t \right \lVert^2 + \left \lVert P \Psi_t \right \lVert^2 + \left \lVert \boldsymbol{\hat{\mathbb{A}}}(x) \Psi_t \right \lVert^2 \ , \\
\end{equation}
which is finite in view of Lemma \ref{lemma : Bounds on the quantized electromagnetic field} and the assumptions on $\Psi_t$. The fourth term in \eqref{eq : bound on II in lemma beta a} is handled by similar means as the previous one, up to replacing the momentum operator by the quantized electromagnetic field in \eqref{eq : dominated convergence II beta a}. Finally, using the fundamental theorem of calculus and dominated convergence in \eqref{eq : beta a n derivative 1} leads to the desired identity. \\
\end{proof}

The following lemma computes the change in time of the second functional $\boldsymbol{\beta}^b$. 

\begin{lemma} \label{lemma : change of beta b in time}
    Let $\sigma \in \left[1/2, 1 \right]$. Let $u_0 = (q_0,p_0,\alpha_0) \in X^\sigma$ and $\Psi_0 \in  \mathcal{D}(\mathbb{H}_{\hbar}^{1/2}) \cap \mathcal{D}(\mathcal{N}^{1/2}) \cap \mathcal{D}(\mathbf{x})$. Let $u(t) = (q(t),p(t),\alpha_t)$ and  $\Psi_t = e^{-i \frac{t}{\hbar}\mathbb{H}_{\hbar}}\Psi_0$ be the respective solutions of \eqref{eq : newton motion eqs PF} and \eqref{eq : sch eq} with initial data $u_0$ and $\Psi_0$. Then,
\begin{align} 
             \boldsymbol{\beta}^b(\Psi_{t}, u(t)) - \boldsymbol{\beta} & ^b(\Psi_{0}, u_0) = \notag \\
             & \ \int_0^t 4 \Ree \sum_{j=1}^N\sum_{\ell = 1}^3\sum_{m=1}^3 \Tilde{p}^m_j(s) \ps{\Psi_s}{\mathbb{G}^\ell_{js}  \left( F^{\ell m}_{\alpha_s}(q_j(s)) - \hbar^{1/2}\hat{F}^{\ell m}(x_j) \right)\Psi_s} \textnormal{d}s \label{eq : beta b term I} \\
             & \ +  \int_0^t 2\Ree \sum_{j=1}^N \sum_{k \neq j}\sum_{\ell = 1}^3 \ps{\Psi_s}{\mathbb{G}^\ell_{js}  \big( \left(\partial_{ \ell} V \right)(q_j(s) - q_k(s)) - \left(\partial_{\ell} V \right)(x_j - x_k) \big)\Psi_s} \textnormal{d}s \label{eq : beta b term II} \\
             & \ + \int_0^t 2\Ree \sum_{j=1}^N\sum_{\ell = 1}^3\ps{\Psi_s}{ \mathbb{G}^\ell_{js} \left(\hbar^{1/2}\boldsymbol{\hat{\mathbb{E}}}^\ell(x_j) - \boldsymbol{\mathbb{E}}_{\alpha_s}^\ell(q_j(s)) \right)\Psi_s} \textnormal{d}s \ , \label{eq : beta b term III}
\end{align}    
where we defined $\mathbb{G}^\ell_{jt} := P_j^\ell - \Tilde{p}^\ell_{j}(t)$ , with $P_j$ and $\Tilde{p}_j(t)$ being defined in \eqref{eq : def kinetic momentums N particles}. \\
\end{lemma}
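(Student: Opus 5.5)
The identity will follow from differentiating $\boldsymbol{\beta}^b(\Psi_t,u(t)) = \sum_j \sum_\ell \|\mathbb{G}^\ell_{jt}\Psi_t\|^2$ in time and integrating. Formally,
\begin{equation*}
\dt \|\mathbb{G}^\ell_{jt}\Psi_t\|^2 = 2\Ree \ps{\mathbb{G}^\ell_{jt}\Psi_t}{\left(i\hbar^{-1}\com{\mathbb{H}_\hbar}{P^\ell_j} - \partial_t \Tilde{p}^\ell_j(t)\right)\Psi_t} \ .
\end{equation*}
The Heisenberg derivative $i\hbar^{-1}\com{\mathbb{H}_\hbar}{P^\ell_j}$ splits into three parts. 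The field part gives $\hbar^{1/2}\boldsymbol{\hat{\mathbb{E}}}^\ell(x_j)$. The kinetic part gives the magnetic term $\hbar^{1/2}\sum_m\{P^m_j,\hat F^{m\ell}(x_j)\}$. The Coulomb part gives $-\sum_{k\neq j}(\partial_\ell V)(x_j-x_k)$. This is the $N$-particle version of \eqref{eq : heisenberg evolution P}.

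The classical derivative $\partial_t\Tilde{p}^\ell_j$ is given by \eqref{eq : expression for derivative of p tilda}. Subtracting it, the electric and Coulomb contributions immediately produce the integrands of \eqref{eq : beta b term III} and \eqref{eq : beta b term II}, after pairing with $\mathbb{G}^\ell_{jt}$.

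The magnetic term is where the work lies. I will write $P^m_j = \mathbb{G}^m_{jt} + \Tilde{p}^m_j$ in the anticommutator. This gives
\begin{equation*}
\hbar^{1/2}\{P^m_j,\hat F^{m\ell}\} = 2\hbar^{1/2}\Tilde{p}^m_j \hat F^{m\ell} + \hbar^{1/2}\{\mathbb{G}^m_{jt},\hat F^{m\ell}\} \ .
\end{equation*}
Combined with $-2\Tilde{p}^m_j F^{m\ell}_{\alpha_t}(q_j)$ from the classical side, the first piece yields the integrand of \eqref{eq : beta b term I}, up to the index relabeling $F^{m\ell}=-F^{\ell m}$.

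The second piece contributes
\begin{equation*}
\hbar^{1/2}\sum_{\ell,m}2\Ree\ps{\mathbb{G}^\ell\Psi}{\{\mathbb{G}^m,\hat F^{m\ell}\}\Psi} \ ,
\end{equation*}
and I must show it vanishes. This is the quantum analogue of \eqref{eq : antisymmetry cancels the term}. The sum $\sum_{\ell,m}(\mathbb{G}^\ell\{\mathbb{G}^m,\hat F^{m\ell}\} + \{\mathbb{G}^m,\hat F^{m\ell}\}\mathbb{G}^\ell)$ is the symmetrized expression whose expectation equals $2\Ree$. By antisymmetry of $\hat F$ under $\ell\leftrightarrow m$, it reduces to a sum of commutator terms of the form $[\mathbb{G}^\ell,[\mathbb{G}^m,\hat F^{m\ell}]]$ and $[\mathbb{G}^\ell,\mathbb{G}^m]\hat F^{m\ell}$-type expressions.

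These need to be checked to be anti-Hermitian, or to cancel, using the following facts. The commutator $[\mathbb{G}^\ell,\mathbb{G}^m]=[P^\ell,P^m] = i\hbar^{3/2}\hat F^{m\ell}$ up to sign. $\hat F$ commutes with itself, since it involves only the field at the same point and $[\boldsymbol{\hat{\mathbb{A}}}^\ell(x),\boldsymbol{\hat{\mathbb{A}}}^m(x)]$ is a c-number, which vanishes by evenness of $\kappa$. And $[P^m,\hat F^{m\ell}] = -i\hbar\,\partial_m\hat F^{m\ell}$ plus a field commutator.

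I expect the result to be that $\sum_{\ell m}\{\mathbb{G}^\ell,\{\mathbb{G}^m,\hat F^{m\ell}\}\}$ equals a symmetric expression whose contraction with the antisymmetric $\hat F$ vanishes. Equivalently, $\{\mathbb{G}^\ell,\{\mathbb{G}^m,\cdot\}\}$ is symmetric in $\ell,m$ modulo the commutator $[\mathbb{G}^\ell,\mathbb{G}^m]$. That commutator is proportional to $\hat F^{\ell m}$, and the product $\hat F\hat F$ contracts to a real self-adjoint quantity whose $\Ree$ against the factor $i$ vanishes. This algebra is the main obstacle: one has to verify carefully that every leftover term is purely imaginary in expectation, which the paper's remark ("express this remainder in terms of suitable commutators") suggests. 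As a fallback, any residual term of order $\hbar^{3/2}$ times expectations of $\hat F$ is at worst a harmless error, but the statement as written is an exact identity, so I will aim for exact cancellation.

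Finally comes the rigorous justification. $\Psi_t$ lies only in $\mathcal{D}(\mathbb{H}_\hbar^{1/2})$, so the computation above is formal. I will first prove the identity in integrated form for $\Psi_0\in\mathcal{D}(\mathbb{H}_\hbar)$, where all products are controlled by \eqref{eq : estimates in lemma to show cvgce} and \eqref{eq : bound on quantum faraday}, using $u\in\mathcal{C}^1(\mathbb{R};X^{\sigma-1})$ together with Lemma \ref{lemma : estimates on the classical elec potential} for the classical coefficients. I will then extend by density in the $\mathcal{D}(\mathbb{H}_\hbar^{1/2})$ graph norm. Both sides are continuous in that norm by Lemma \ref{lemma : Bounds on the quantized electromagnetic field}, since the integrands are bilinear in $\Psi_s$ with each factor bounded by $\|(\mathbb{H}_\hbar+1)^{1/2}\Psi_s\|$, and this norm is conserved in time.
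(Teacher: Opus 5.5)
Your proposal is correct in substance and follows the paper's overall scheme. You differentiate $\sum_{j,\ell}\norm{\mathbb{G}^\ell_{jt}\Psi_t}^2$ and compute $i\hbar^{-1}\com{\mathbb{H}_{\hbar}}{P^\ell_j}$. You then subtract the classical Lorentz force \eqref{eq : expression for derivative of p tilda}, and remove the leftover magnetic term using the antisymmetry of $\hat F$ together with $\com{P^\ell_j}{P^m_j}=i\hbar^{3/2}\hat F^{m\ell}(x_j)$. Finally you justify everything on $\mathcal{D}(\mathbb{H}_\hbar)$ and pass to $\mathcal{D}(\mathbb{H}_\hbar^{1/2})$ by density in the graph norm.

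The difference lies in how the magnetic term is organized. The paper writes $\{P^m,[P^m,P^\ell]\}=2P^m[P^m,P^\ell]+[[P^m,P^\ell],P^m]$ and expands $\mathbb{G}^\ell=P^\ell-\Tilde p^\ell$. It then needs several antisymmetry manipulations (cancelling $\Tilde p^\ell\Tilde p^m F^{\ell m}$, recombining the mixed terms into $D^{\ell m}$) to produce \eqref{eq : beta b term I}. That leaves two remainders, $\mathcal{E}^{(b,2)}$ and $\mathcal{E}^{(b,3)}$, which are shown to vanish separately. Your substitution $P^m=\mathbb{G}^m+\Tilde p^m$ inside the anticommutator gives \eqref{eq : beta b term I} immediately with a single remainder, which is slightly cleaner.

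The step you flag as the main obstacle closes exactly, in one line. For arbitrary $A,B,X$ one has $\{A,\{B,X\}\}-\{B,\{A,X\}\}=[[A,B],X]$. Relabelling $\ell\leftrightarrow m$ and using $\hat F^{\ell m}=-\hat F^{m\ell}$ then gives
\[
\sum_{\ell,m}\{\mathbb{G}^\ell,\{\mathbb{G}^m,\hat F^{m\ell}\}\}=\frac12\sum_{\ell,m}\com{\com{\mathbb{G}^\ell}{\mathbb{G}^m}}{\hat F^{m\ell}}=\frac{i\hbar^{3/2}}{2}\sum_{\ell,m}\com{\hat F^{m\ell}}{\hat F^{m\ell}}=0 .
\]
No fallback is needed, and no commutation between different components of $\hat F$ is used. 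On $\mathcal{D}(\mathbb{H}_\hbar)$ you should do this rearrangement at the level of the quadratic form $2\Ree\ps{\mathbb{G}^\ell\Psi}{\{\mathbb{G}^m,\hat F^{m\ell}\}\Psi}$. This requires finiteness of $\norm{P^mP^\ell\Psi}$ and $\norm{\hat F^{m\ell}\mathbb{G}^m\Psi}$, which holds for $\Psi\in\mathcal{D}(-\Delta)\cap\mathcal{D}(H_{\textnormal{f}})$. It is not covered by the $\mathbb{H}_\hbar^{1/2}$-bounds you cite.

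One detail you pass over: the $N$-particle Heisenberg formula also requires $\com{P^m_k}{P^\ell_j}=0$ for $k\neq j$. This is where evenness of $\kappa$ genuinely enters, through the completeness relation \eqref{eq : completeness relation} and the oddness of $\sin(p\cdot(x_k-x_j))$. By contrast, at a single point the c-number $\com{\boldsymbol{\hat{\mathbb{A}}}^\ell(x)}{\boldsymbol{\hat{\mathbb{A}}}^m(x)}$ vanishes simply because the polarization vectors are real.
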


\begin{proof}
    As we will explain below, the following computations actually only make sense on the domain $\mathcal{D} \left( \mathbb{H}_{\hbar} \right)$, whereas $\boldsymbol{\beta}^b$ is only defined on $\mathcal{D} \left( \mathbb{H}_{\hbar}^{1/2} \right)$. We will therefore use a density argument to legitimate the above formula. Recall that $\mathcal{D} \left( \mathbb{H}_{\hbar} \right)$ is dense in $\mathcal{D} \left( \mathbb{H}_{\hbar}^{1/2} \right)$ with respect to the graph norm of $\mathbb{H}_{\hbar}^{1/2}$, which is defined by
    \begin{equation}
        \norm{\Psi}^2_{\mathbb{H}_{\hbar}^{1/2}} := \norm{\Psi}^2 + \norm{\mathbb{H}_{\hbar}^{1/2}\Psi}^2 \ . \\
    \end{equation}
    Hence, consider a sequence $\left(\Psi^{(n)}_0\right)_{n \in \mathbb{N}} \subseteq \mathcal{D} \left( \mathbb{H}_{\hbar} \right)$ such that 
\begin{equation} \label{eq : density of D h in D h half graph norm}
    \norm{\Psi^{(n)}_0 - \Psi_0}_{\mathbb{H}^{1/2}_{N,\hbar}} \underset{ n \rightarrow \infty}{\longrightarrow} 0 \ , \\
\end{equation}
and denote $\Psi^{(n)}_t = e^{-i\frac{t}{\hbar}\mathbb{H}_{\hbar}}\Psi^{(n)}_0$. Note that by unitarity of $e^{-i\frac{t}{\hbar}\mathbb{H}_{\hbar}}$, we have for all $t \geq 0$, 
\begin{equation} \label{eq : density of D h in D h half graph norm}
    \norm{\Psi^{(n)}_t - \Psi_t}_{\mathbb{H}^{1/2}_{N,\hbar}} \underset{ n \rightarrow + \infty}{\longrightarrow} 0 \ . \\
\end{equation}
We then compute :
 \begin{equation} \label{eq : derivative beta b}
     \dt \boldsymbol{\beta}^b \left(\Psi^{(n)}_t, u(t)\right) = 2 \Ree \sum_{j=1}^N \ps{\Psi^{(n)}_t}{ \sum_{\ell=1}^3\mathbb{G}^\ell_{jt} \left(i\hbar^{-1} \com{\mathbb{H}_{\hbar}}{\mathbb{G}^\ell_{jt}} + \dot{\mathbb{G}}^\ell_{j,t}\right)\Psi^{(n)}_t} \ . \\
 \end{equation}
 Let us start by computing the commutator $\com{\mathbb{H}_{\hbar}}{\mathbb{G}^\ell_{jt}}$. We use the following shorthand notations for the classical and quantum Faraday tensors :
 \begin{equation}
 	\begin{split}
 	F^{j \ell m}_t & = F^{\ell m}_{\alpha_t}(q_j(t)) = \partial_m \boldsymbol{\mathbb{A}}_{\alpha_t}^\ell(q_j(t)) - \partial_\ell \boldsymbol{\mathbb{A}}_{\alpha_t}^m(q_j(t)) \\
 	\hat{F}^{j \ell m} & = \hat{F}^{\ell m}(x_j) = \partial_m \boldsymbol{\hat{\mathbb{A}}}^\ell(x_j) - \partial_\ell \boldsymbol{\hat{\mathbb{A}}}^m(x_j) \ . \\
 	\end{split}
 \end{equation}
 Using the definitions of $\mathbb{H}_{\hbar}$ and $\mathbb{G}_{j,t}$ and the identity
  \begin{equation}
     \com{H_\textnormal{f}}{\boldsymbol{\hat{\mathbb{A}}}(x_j)} = i \boldsymbol{\hat{\mathbb{E}}}(x_j) \ , \\
 \end{equation}
 we have immediately 
 \begin{equation}
     \begin{split}
         \com{\mathbb{H}_{\hbar}}{\mathbb{G}^\ell_{jt}} & = \sum_{k=1}^N \sum_{m=1}^3 \com{\left(P_k^m \right)^2}{P^\ell_j} + \sum_{1 \leq k < r \leq N} \com{V(x_k-x_r)}{P^\ell_j} + \hbar \com{H_{\textnormal{f}}}{P^\ell_j} \\
         & = \sum_{k=1}^N \sum_{m=1}^3 \left( P^m_k \com{P_k^m }{P^\ell_j} + \com{P_k^m }{P^\ell_j} P^m_k \right) + i\hbar \sum_{k \neq j} (\partial_{ \ell} V)(x_j - x_k) - i \hbar^{3/2} \boldsymbol{\hat{\mathbb{E}}}^\ell(x_j) \\
         & = \sum_{k=1}^N \sum_{m=1}^3 \left( 2P^m_k \com{P_k^m }{P^\ell_j} + \com{\com{P_k^m }{P^\ell_j}}{P^m_k} \right) + i\hbar \sum_{k \neq j} (\partial_{\ell} V)(x_j - x_k) - i \hbar^{3/2} \boldsymbol{\hat{\mathbb{E}}}^\ell(x_j) \ .
     \end{split}
 \end{equation}
 Now, a computation using the definition of $P_j$ gives 
 \begin{equation} \label{eq : commutator P m k P l j}
     \begin{split}
         \com{P_k^m }{P^\ell_j} & = i \hbar^{3/2} \left( \com{\partial_{k m}}{\boldsymbol{\hat{\mathbb{A}}}^\ell(x_j)} - \com{\partial_{j \ell}}{\boldsymbol{\hat{\mathbb{A}}}^m(x_k)} \right) + \hbar \com{\boldsymbol{\hat{\mathbb{A}}}^m(x_k)}{\boldsymbol{\hat{\mathbb{A}}}^\ell(x_j)} \\
     \end{split}
 \end{equation}
Now, a computation using the CCR gives
\begin{equation}
    \begin{split}
        \com{\boldsymbol{\hat{\mathbb{A}}}^m(x_k)}{\boldsymbol{\hat{\mathbb{A}}}^\ell(x_j)} & = 2i \Imm \ps{\mathbf{G}^m_{x_k}}{\mathbf{G}^\ell_{x_j}} \\
        & = i\Imm \sum_{\lambda = 1,2} \int_{\mathbb{R}^3} \frac{|\mathcal{F}[\kappa](p)|^2}{|p|} \boldsymbol{\epsilon}^m_\lambda(p) \boldsymbol{\epsilon}^\ell_\lambda(p) e^{ip\cdot (x_k - x_j)} \textnormal{d}p \\
        & = i\int_{\mathbb{R}^3} \frac{|\mathcal{F}[\kappa](p)|^2}{|p|} \left( \sum_{\lambda = 1,2}\boldsymbol{\epsilon}^m_\lambda(p) \boldsymbol{\epsilon}^\ell_\lambda(p) \right)\sin \left( p\cdot (x_k - x_j) \right) \textnormal{d}p \\
        & = i\int_{\mathbb{R}^3} \frac{|\mathcal{F}[\kappa](p)|^2}{|p|} \left( \delta_{m \ell} - \frac{p_m p_\ell}{|p|^2} \right)\sin \left( p\cdot (x_k - x_j) \right) \textnormal{d}p \\
        & = 0 \ . \\
    \end{split}
\end{equation}
Here in the third line, we used the \textit{completeness relation}
\begin{equation} \label{eq : completeness relation}
    \sum_{\lambda = 1,2}\boldsymbol{\epsilon}^m_\lambda(p) \boldsymbol{\epsilon}^\ell_\lambda(p) = \delta_{m \ell} - \frac{p_m p_\ell}{|p|^2} \ , \\
\end{equation}
which can be shown to hold by using the fact that for all $k \neq 0$,
\begin{equation}
    \left\{ \boldsymbol{\epsilon}_\lambda(k), \frac{k}{|k|} \right\}_{\lambda = 1,2} \\
\end{equation}
is an O.N.B of $\mathbb{R}^3$, and by testing the LHS and the RHS of the identity against all three of the elements of this O.N.B. We thus obtain : 
\begin{equation} \label{eq : identity for com Pm Pell}
\com{P_k^m }{P^\ell_j} = i \hbar^{3/2} \delta_{jk} \hat{F}^{j  \ell m} .
\end{equation}
Consequently, gathering the above expressions together, and using the definition of $\hat{F}^{j  \ell m}$, we get
\begin{equation} \label{eq : com H G in comp beta b}
    \begin{split}
        i \hbar^{-1} \com{\mathbb{H}_{\hbar}}{\mathbb{G}^\ell_{jt}} & = -2 \hbar^{1/2} \sum_{m=1}^3P^m_j \hat{F}^{j  \ell m}  - \sum_{k \neq j} (\partial_{\ell} V)(x_j - x_k) + \hbar^{1/2} \boldsymbol{\hat{\mathbb{E}}}^\ell(x_j) + i\hbar^{-1} \sum_{m=1}^3  \com{\com{P_j^m }{P^\ell_j}}{P^m_j} \ . \\
    \end{split}
\end{equation}
Now, in view of \eqref{eq : expression for derivative of p tilda}, we write $\dot{\mathbb{G}}^\ell_{j,t} = -\partial_t \Tilde{p}_j^\ell(t)$ as 
\begin{equation} \label{eq : derivative G in comp beta b}
\dot{\mathbb{G}}^\ell_{j,t} = 2 \sum_{m = 1}^3 \Tilde{p}^m_j(t) F^{j \ell m}_t + \sum_{k \neq j} \left( \partial_{\ell} V \right) \left(q_j(t) - q_k(t) \right)  -\boldsymbol{\mathbb{E}}^\ell_{\alpha_t}(q_j(t))  \\
\end{equation}
Note here that comparing \eqref{eq : com H G in comp beta b} and \eqref{eq : derivative G in comp beta b} is reminiscent of the heuristic discussion in Subsection \ref{subsec : Strategy of the proof}. Now, adding the first terms of the RHS of \eqref{eq : com H G in comp beta b} and \eqref{eq : derivative G in comp beta b}, summing over $\ell$ against $\mathbb{G}^\ell_{jt}$ gives :
\begin{equation}
    \begin{split}
        \sum_{\ell=1}^3\sum_{m=1}^3\mathbb{G}^\ell_{jt} \left( \Tilde{p}^m_j(t) F^{j \ell m}_t - P^m_j \hbar^{1/2}\hat{F}^{j  \ell m} \right) & = \sum_{\ell=1}^3\sum_{m=1}^3 P^\ell_j \Tilde{p}^m_j(t) F^{j \ell m}_t - \sum_{\ell=1}^3\sum_{m=1}^3 \Tilde{p}^\ell_j(t) \Tilde{p}^m_j(t) F^{j \ell m}_t \\
        & \quad - \sum_{\ell=1}^3\sum_{m=1}^3 P^\ell_j P^m_j \hbar^{1/2}\hat{F}^{j  \ell m} + \sum_{\ell=1}^3\sum_{m=1}^3 \Tilde{p}^\ell_j(t) P^m_j \hbar^{1/2}\hat{F}^{j  \ell m} \ , \\
    \end{split}
\end{equation}
up to a factor of $2$. The second term on the right-hand side  is $0$ because of \eqref{eq : antisymmetry cancels the term}, the third term doesn't cancel (because $P_j^\ell$ and $P_j^m$ do not commute), and we will estimate its contribution to \eqref{eq : derivative beta b} below. Regarding the first and the fourth terms, the antisymmetry of $\hat{F}^{j  \ell m}$ gives :
\begin{equation}
    \begin{split}
        \sum_{\ell=1}^3\sum_{m=1}^3 P^\ell_j \Tilde{p}^m_j(t) F^{j \ell m}_t + \sum_{\ell=1}^3\sum_{m=1}^3 \Tilde{p}^\ell_j(t) P^m_j \hbar^{1/2}\hat{F}^{j  \ell m} = \sum_{\ell=1}^3\sum_{m=1}^3 P^\ell_j \Tilde{p}^m_j(t) D^{j \ell m }_t
    \end{split}
\end{equation}
where we defined 
\begin{equation}
    D^{j \ell m }_t := F^{j \ell m}_t - \hbar^{1/2}\hat{F}^{j  \ell m} \ . \\
\end{equation}
Note that $D^{j m \ell }_t = -D^{j \ell m }_t$, hence 
\begin{equation}
    \sum_{\ell=1}^3\sum_{m=1}^3 \Tilde{p}^\ell_j(t) \Tilde{p}^m_j(t) D^{j \ell m }_t = 0 \\
\end{equation}
and finally
\begin{equation}
    \sum_{\ell=1}^3\sum_{m=1}^3 P^\ell_j \Tilde{p}^m_j(t) D^{j \ell m }_t = \sum_{\ell=1}^3\sum_{m=1}^3 \mathbb{G}^\ell_{jt} \Tilde{p}^m_j(t) D^{j \ell m }_t \ . \\
\end{equation}
Consequently, we arrive at 
 \begin{equation} \label{eq : dt of beta b sum of terms}
     \dt \boldsymbol{\beta}^b \left(\Psi^{(n)}_t, u(t)\right) = \sum_{\alpha = 1}^5 \mathcal{E}_{n,t}^{(b,\alpha)} \ , \\
 \end{equation}
 where 
 \begin{equation} \label{eq : E terms for beta b dt}
     \begin{split}
         \mathcal{E}_{n,t}^{(b,1)} & = 4 \Ree \sum_{j=1}^N\sum_{\ell,m = 1}^3 \ps{\Psi^{(n)}_t}{\mathbb{G}^\ell_{jt} \Tilde{p}^m_j(t) \left( F^{j \ell m}_t - \hbar^{1/2}\hat{F}^{j  \ell m} \right)\Psi^{(n)}_t} \ , \\
         \mathcal{E}_{n,t}^{(b,2)} & = -4 \Ree \sum_{j=1}^N \sum_{\ell,m=1}^3 \ps{\Psi^{(n)}_t}{P_j^\ell P^m_j \hbar^{1/2}\hat{F}^{j  \ell m}\Psi^{(n)}_t} \ , \\
         \mathcal{E}_{n,t}^{(b,3)} & = 2\Ree \  i\hbar^{-1} \sum_{j=1}^N  \sum_{\ell,m=1}^3  \ps{\Psi^{(n)}_t}{\mathbb{G}^\ell_{jt} \com{\com{P_j^m }{P^\ell_j}}{P^m_j}\Psi^{(n)}_t} \ , \\
         \mathcal{E}_{n,t}^{(b,4)} & = 2\Ree \sum_{\substack{j,k = 1 \\k\neq j}}^N \sum_{\ell = 1}^3 \ps{\Psi^{(n)}_t}{\mathbb{G}^\ell_{jt}  \left( \left(\partial_{\ell} V \right)(q_j(t) - q_k(t)) - \left(\partial_{\ell} V \right)(x_j - x_k) \right)\Psi^{(n)}_t} \ , \\
         \mathcal{E}_{n,t}^{(b,5)} & = 2\Ree \sum_{j=1}^N\sum_{\ell = 1}^3\ps{\Psi^{(n)}_t}{ \mathbb{G}^\ell_{jt} \left(\hbar^{1/2} \boldsymbol{\hat{\mathbb{E}}}^\ell(x_j) - \boldsymbol{\mathbb{E}}_{\alpha_t}^\ell(q_j(t)) \right)\Psi^{(n)}_t} \ . \\
     \end{split}
 \end{equation}
Let us quickly discuss why the above terms are well defined on $\mathcal{D}\left( \mathbb{H}_{\hbar} \right)$. We only focus on $\mathcal{E}_{n,t}^{(b,2)}$ and $\mathcal{E}_{n,t}^{(b,3)}$, as the other ones are clearly well-defined, even by requiring $\Psi^{(n)}_t \in \mathcal{D}\left( (-\Delta)^{1/2}\right) \cap \mathcal{D}\left( H_{\textnormal{f}}^{1/2}\right) = \mathcal{D}\left( \mathbb{H}_{\hbar}^{1/2} \right)$, thanks to the assumptions on $\kappa$. The calculations in \eqref{eq : commutators P m P ell} below show that we need 
\begin{equation}
    \norm{P^mP^\ell \Psi} < +\infty \ , \\
\end{equation}
for $\mathcal{E}_{n,t}^{(b,3)}$ to be finite (here we dropped the $j$, $t$ indices as well as the $(n)$ exponent for convenience) which, if satisfied, also guarantees that $\mathcal{E}_{n,t}^{(b,2)}$ is well-defined. Now, a straightforward computation yields
\begin{equation}
    \begin{split}
        \norm{P^mP^\ell \Psi} & \lesssim_\hbar \norm{\partial_m \partial_\ell \Psi} + \norm{\com{\partial_m}{\boldsymbol{\hat{\mathbb{A}}}^\ell(x)}\Psi} + \norm{\boldsymbol{\hat{\mathbb{A}}}^\ell(x)\partial_m\Psi} + \norm{\boldsymbol{\hat{\mathbb{A}}}^m(x)\partial_\ell\Psi} + \norm{\boldsymbol{\hat{\mathbb{A}}}^m(x)\boldsymbol{\hat{\mathbb{A}}}^\ell(x)\Psi} \\
    & \lesssim_{N,\hbar, \kappa} \norm{\left(-\Delta \right)\Psi} + \norm{H_{\textnormal{f}}^{1/2}\Psi} + \norm{H_{\textnormal{f}}^{1/2}(-\Delta)^{1/2}\Psi} + \norm{H_{\textnormal{f}}\Psi} \ , \\
    \end{split}
\end{equation}
which is finite if $\Psi \in \mathcal{D}\left( -\Delta\right) \cap \mathcal{D}\left( H_{\textnormal{f}}\right) = \mathcal{D} \left( \mathbb{H}_{\hbar} \right)$. In fact, it turns out that we have
    \begin{equation} \label{eq : E 2 3 are zero}
        \begin{split}
            \mathcal{E}_{n,t}^{(b,2)} & = 0 \\
            \mathcal{E}_{n,t}^{(b,3)} & = 0  \\
        \end{split}
    \end{equation}
when $\Psi_t \in \mathcal{D}\left( \mathbb{H}_{\hbar} \right)$. We now prove this claim. \\

\noindent \textbf{Term $\mathcal{E}_{n,t}^{(b,2)}$}. Using the antisymmetry of the quantum Faraday tensor,  permuting indices in the sums, and applying \eqref{eq : identity for com Pm Pell} we get
\begin{equation}
    \begin{split}
    \sum_{\ell=1}^3 \sum_{m=1}^3 P_j^\ell P^m_j \hat{F}^{j  \ell m} & =  \sum_{\ell=1}^3 \sum_{m=1}^3 P_j^m P^\ell_j \hat{F}^{j  \ell m} + \sum_{\ell=1}^3 \sum_{m=1}^3\com{P_j^\ell}{P^m_j } \hat{F}^{j  \ell m} \\
    & = -  \sum_{\ell=1}^3 \sum_{m=1}^3 P_j^\ell P^m_j \hat{F}^{j  \ell m} - i \hbar^{3/2} \sum_{m=1}^3 \left(\hat{F}^{j  \ell m}\right)^2 \ .
    \end{split}
\end{equation}
This implies :
\begin{equation}
     \sum_{\ell=1}^3 \sum_{m=1}^3 P_j^\ell P^m_j \hat{F}^{j  \ell m} = - \frac{i \hbar^{3/2}}{2} \sum_{m=1}^3 \left(\hat{F}^{j  \ell m}\right)^2 \ , \\
\end{equation}
which leads to :
\begin{equation}
    \begin{split}
        \mathcal{E}_{n,t}^{(b,2)} & = 2 \hbar^2 \Ree \ i\sum_{j=1}^N\sum_{\ell=1}^3 \sum_{m=1}^3 \ps{\Psi^{(n)}_t}{\left( \hat{F}^{j  \ell m}\right)^2 \Psi^{(n)}_t} \\
        & = 0 \ , \\
    \end{split}
\end{equation}
since $\hat{F}^{j  \ell m}$ is self-adjoint and $\Psi_t^{(n)} \in \mathcal{D}\left(\left( \hat{F}^{j  \ell m}\right)^2 \right)$ thanks to \eqref{eq : estimates in lemma to show cvgce}. \\

\noindent \textbf{Term $\mathcal{E}_{n,t}^{(b,3)}$}. We split :
\begin{equation}
    \ps{\Psi^{(n)}_t}{\mathbb{G}^\ell_{jt} \com{\com{P_j^m }{P^\ell_j}}{P^m_j}\Psi^{(n)}_t} = \ps{\Psi^{(n)}_t}{P^\ell_j \com{\com{P_j^m }{P^\ell_j}}{P^m_j}\Psi^{(n)}_t} - \Tilde{p}^\ell_j(t) \ps{\Psi^{(n)}_t}{ \com{\com{P_j^m }{P^\ell_j}}{P^m_j}\Psi^{(n)}_t} \ . \\
\end{equation}
Since the commutator of two symmetric operators is skew-symmetric, it holds that $\com{\com{A}{B}}{C}$ is a symmetric operator whenever $A,B,C$ are symmetric. Hence, the second term on the right-hand side is real. We thus have :
\begin{equation}
    \mathcal{E}_{n,t}^{(b,3)}  = 2\Ree \  i\hbar^{-1} \sum_{j=1}^N \sum_{\ell=1}^3 \sum_{m=1}^3  \ps{\Psi^{(n)}_t}{P^\ell_j \com{\com{P_j^m }{P^\ell_j}}{P^m_j}\Psi^{(n)}_t} \ . \\
\end{equation}
Let us drop the subscripts $j$, $t$ and the exponent ${(n)}$ for clarity, as it won't impact the calculations. A direct calculation gives
\begin{equation} \label{eq : commutators P m P ell}
    \begin{split}
        \ps{\Psi}{P^\ell \com{\com{P^m}{P^\ell}}{P^m}\Psi} & = 2 \ps{\Psi}{P^\ell P^m P^\ell P^m \Psi} - \norm{P^m P^\ell \Psi}^2 - \ps{\Psi}{P^\ell P^\ell P^m  P^m \Psi} \ . \\
    \end{split}
\end{equation}
Then, straightforward manipulations with the above identity and permuting indices in the sums lead to
\begin{equation}
    \sum_{\ell=1}^3 \sum_{m=1}^3  \ps{\Psi}{P^\ell \com{\com{P^m }{P^\ell}}{P^m}\Psi} \in \mathbb{R} \ . \\
\end{equation}
And thus 
\begin{equation}
    \mathcal{E}_{n,t}^{(b,3)} = 0 \ . \\
\end{equation}

Thanks to \eqref{eq : E 2 3 are zero} we can simplify \eqref{eq : dt of beta b sum of terms}, and an application of the fundamental theorem of calculus gives
\begin{equation} \label{eq : beta b FTC but with the n}
    \boldsymbol{\beta}^b \left(\Psi^{(n)}_t, u(t)\right) - \boldsymbol{\beta}^b \left(\Psi^{(n)}_0, u_0\right) = \int_0^t \left( \mathcal{E}^{(b,1)}_{n,s} + \mathcal{E}^{(b,4)}_{n,s} + \mathcal{E}^{(b,5)}_{n,s} \right) \textnormal{d}s \ . \\
\end{equation}
Note that the integrand terms in the RHS are now actually defined on $\mathcal{D}\left(\mathbb{H}_{\hbar}^{1/2} \right)$, which means that we can pass to the limit in the above expression. Let us start with the LHS. Note that we can write 
\begin{equation}
     \boldsymbol{\beta}^b \left(\Psi^{(n)}_t, u(t)\right) = \sum_{j=1}^N \sum_{\ell=1}^3 \ps{\Psi^{(n)}_t}{\left( \mathbb{G}^\ell_{jt} \right)^2 \Psi^{(n)}_t} \ . \\
\end{equation}
In what follows we will use the fact that for every self-adjoint $A$, every (not necessarily self-adjoint) $B$ and every $\psi, \phi \in \mathcal{D}(A) \cap \mathcal{D}(B)$, there holds
    \begin{equation} \label{eq : inequality to show cvgce beta b n beta b}
        \left \lvert \ps{\psi}{AB\psi} - \ps{\phi}{AB\phi} \right \lvert \leq \norm{A(\psi-\phi)} \norm{B\psi}  + \norm{A\phi} \norm{B(\psi - \phi)} \ . \\
    \end{equation}
This allows us to control :
\begin{equation} \label{eq : beta b n goes to beta b}
    \begin{split}
        \left \lvert \boldsymbol{\beta}^b \left(\Psi^{(n)}_t, u(t)\right) - \boldsymbol{\beta}^b \left(\Psi_t, u(t)\right) \right \lvert & \leq \sum_{j=1}^N \sum_{\ell = 1}^3  \norm{\mathbb{G}^\ell_{jt} \left( \Psi^{(n)}_t - \Psi_t \right)} \left(\norm{\mathbb{G}^\ell_{jt}\Psi^{(n)}_t} + \norm{\mathbb{G}^\ell_{jt}\Psi^{}_t}  \right) \\
        & \lesssim_{N, \hbar, \kappa} \left( \norm{\left(\mathbb{H}_{\hbar} + 1 \right)^{1/2}\left( \Psi^{(n)}_t - \Psi_t \right) } + \norm{ \Psi^{(n)}_t - \Psi_t}  \right) \\
        & \quad \quad \quad \quad \quad \times  \left( \norm{\left(\mathbb{H}_{\hbar} + 1 \right)^{1/2} \Psi^{(n)}_t  } + \norm{ \Psi^{(n)}_t} + \norm{\left(\mathbb{H}_{\hbar} + 1 \right)^{1/2} \Psi_t  } + \norm{ \Psi_t}  \right) \\
        & \underset{n \rightarrow + \infty}{\longrightarrow} 0 \ . \\
    \end{split}
\end{equation}
Here in the second line we used \eqref{eq : estimates in lemma to show cvgce} together with the definition of $\mathbb{G}^\ell_{jt}$ and Lemma \ref{lemma : A priori estimates for the Newton--Maxwell solutions}. Let us then treat the RHS of \eqref{eq : beta b FTC but with the n}. We want to show that   
\begin{equation}
    \int_0^t \left \lvert \mathcal{E}^{(b,\alpha)}_{n,s} - \mathcal{E}^{(b,\alpha)}_s \right \lvert \textnormal{d}s \underset{n \rightarrow + \infty}{\longrightarrow} 0 \\
\end{equation}
for $\alpha = 1,4,5$, where $\mathcal{E}^{(b,\alpha)}_s$ are the terms obtained when replacing $\Psi_t^{(n)}$ by $\Psi_t$ in $\mathcal{E}^{(b,\alpha)}_{n,s}$. We will actually show that there holds
\begin{equation}
    \sup_{s \in [0,t]}  \left \lvert \mathcal{E}^{(b,\alpha)}_{n,s} - \mathcal{E}^{(b,\alpha)}_s \right \lvert \underset{n \rightarrow + \infty}{\longrightarrow} 0 \ . \\
\end{equation}
Let us treat each term separately. \\

\noindent \textbf{Term $\int_0^t \left \lvert \mathcal{E}^{(b,1)}_{n,s} - \mathcal{E}^{(b,1)}_s \right \lvert \textnormal{d}s$}. Using \eqref{eq : inequality to show cvgce beta b n beta b} and Lemma \ref{lemma : A priori estimates for the Newton--Maxwell solutions}, we write :
\begin{equation}
\begin{split}
\left \lvert \mathcal{E}^{(b,1)}_{n,s} - \mathcal{E}^{(b,1)}_s \right \lvert  & \lesssim_{N,\kappa} \sum_{j=1}^N \sum_{\ell, m=1}^3 \Big( \norm{\mathbb{G}^\ell_{js} \left( \Psi^{(n)}_s - \Psi_s \right)} \norm{\left( F^{j \ell m}_s - \hbar^{1/2}\hat{F}^{j  \ell m} \right)\Psi^{(n)}_s} \\
& \quad \quad \quad \quad \quad \quad \quad \quad \quad + \norm{\mathbb{G}^\ell_{js}\Psi_s } \norm{\left( F^{j \ell m}_s - \hbar^{1/2}\hat{F}^{j  \ell m} \right) \left( \Psi^{(n)}_s - \Psi_s \right)} \Big) \\
& \underset{n \rightarrow + \infty}{\longrightarrow} 0 \ , \\
\end{split}
\end{equation}
uniformly in $s \in [0,t]$. Here we used similar arguments as in \eqref{eq : beta b n goes to beta b}, with the additional use of Lemma \ref{lemma : estimates on the classical elec potential} and Lemma \ref{lemma : Bounds on the quantized electromagnetic field} to control the classical and quantum Faraday tensors. The uniformity in $s$ comes from Lemma \ref{lemma : A priori estimates for the Newton--Maxwell solutions} and \eqref{eq : density of D h in D h half graph norm}. \\

\noindent \textbf{Term $\int_0^t \left \lvert \mathcal{E}^{(b,4)}_{n,s} - \mathcal{E}^{(b,4)}_s \right \lvert \textnormal{d}s$}. A similar calculation as for the first term yields
\begin{equation}
    \begin{split}
         \left \lvert \mathcal{E}^{(b,4)}_{n,s} - \mathcal{E}^{(b,4)}_s \right \lvert & \lesssim \sum_{\substack{j,k = 1 \\k\neq j}}^N \sum_{\ell=1}^3 \Big( \norm{\mathbb{G}^\ell_{js} \left( \Psi^{(n)}_s - \Psi_s \right)} \norm{ \left( \left(\partial_{\ell} V \right)(q_j(t) - q_k(t)) - \left(\partial_{\ell} V \right)(x_j - x_k) \right)\Psi^{(n)}_s} \\
& \quad \quad \quad \quad \quad \quad \quad \quad \quad + \norm{\mathbb{G}^\ell_{js}\Psi_s } \norm{ \left( \left(\partial_{\ell} V \right)(q_j(t) - q_k(t)) - \left(\partial_{\ell} V \right)(x_j - x_k) \right) \left( \Psi^{(n)}_s - \Psi_s \right))} \Big) \\
& \underset{n \rightarrow + \infty}{\longrightarrow} 0 \ , \\
    \end{split}
\end{equation}
uniformly in $s \in [0,t]$. We used in addition \eqref{eq : pointwise bounds for V and nabla V} to control the interaction terms. \\

\noindent \textbf{Term $\int_0^t \left \lvert \mathcal{E}^{(b,5)}_{n,s} - \mathcal{E}^{(b,5)}_s \right \lvert \textnormal{d}s$}. The reasoning is similar to that of $\alpha = 1$, up to replacing the $\hat{\mathbf{A}}$ field by the $\hat{\mathbf{E}}$ field. It allows us to obtain 
\begin{equation}
    \sup_{s \in [0,t]}  \left \lvert \mathcal{E}^{(b,5)}_{n,s} - \mathcal{E}^{(b,5)}_s \right \lvert \underset{n \rightarrow + \infty}{\longrightarrow} 0 \ , \\
\end{equation}
and then we can pass to the limit in \eqref{eq : beta b FTC but with the n}, and recover the desired identity for $\boldsymbol{\beta}^b(\Psi_t, u(t))$. This concludes the proof. \\

\end{proof}

Finally, the following lemma computes the change in time of the third functional $\boldsymbol{\beta}^c$. \\

\begin{lemma} \label{lemma : growth of beta c}
    Let $\sigma \in \left[1/2, 1 \right]$. Let $u_0 = (q_0,p_0,\alpha_0) \in X^\sigma$ and $\Psi_0 \in  \mathcal{D}(\mathbb{H}_{\hbar}^{1/2}) \cap \mathcal{D}(\mathcal{N}^{1/2}) \cap \mathcal{D}(\mathbf{x})$. Let $u(t) = (q(t),p(t),\alpha_t)$ and  $\Psi_t = e^{-i \frac{t}{\hbar}\mathbb{H}_{\hbar}}\Psi_0$ be the respective solutions of \eqref{eq : newton motion eqs PF} and \eqref{eq : sch eq} with initial data $u_0$ and $\Psi_0$. Then,
    \begin{align} 
             & \boldsymbol{\beta}^c(\Psi_t, \alpha_t) - \boldsymbol{\beta} ^c(\Psi_0,\alpha_0) = \notag \\
             & \quad \quad \int_0^t 4 \Imm \sum_{j=1}^N \sum_{\lambda=1,2} \int_{\mathbb{R}^3} \Tilde{p}_j(s) \left \langle \Psi_s,\left(\hbar^{1/2}a_{k,\lambda} -  \alpha_s(k,\lambda) \right)  \left(\overline{\mathbf{G}_{q_j(s)}(k,\lambda) - \mathbf{G}_{x_j}(k,\lambda)} \right) \Psi_s \right \rangle \textnormal{d}k \textnormal{d}s \label{eq : beta c term I} \\
             &\quad \quad \quad \quad + \int_0^t 4 \Imm \sum_{j=1}^N \sum_{\lambda=1,2} \int_{\mathbb{R}^3} \ps{\Psi_s}{\mathbb{G}_{j,s} \overline{\mathbf{G}_{x_j}(k,\lambda)} \left(\hbar^{1/2}a_{k,\lambda} - \alpha_t(k,\lambda) \right) \Psi_s} \textnormal{d}k \textnormal{d}s \label{eq : beta c term II} \ . 
    \end{align}
\end{lemma}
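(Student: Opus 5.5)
We compute $\frac{d}{dt}\boldsymbol{\beta}^c$ from the rewriting \eqref{eq : rewriting for beta c},
\begin{equation*}
\boldsymbol{\beta}^c(\Psi_t,\alpha_t)=\sum_{\lambda}\int \norm{(\hbar^{1/2}a_{k,\lambda}-\alpha_t(k,\lambda))\Psi_t}^2\textnormal{d}k ,
\end{equation*}
first for regularized data and then pass to the limit. As in the proof of Lemma~\ref{lemma : change of beta b in time}, we first take $\Psi^{(n)}_0\in\mathcal{D}(\mathbb{H}_\hbar)\cap\mathcal{D}(\mathcal{N})$ approximating $\Psi_0$ in the graph norm of $(\mathbb{H}_\hbar+\mathcal{N}+1)^{1/2}$; the invariance of $\mathcal{D}(\mathcal{N}^{1/2})$ from Lemma~\ref{lemma : invariance of the domains for beta} lets us propagate this. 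If needed, we also approximate $u_0$ by smoother data, using the linear bound \eqref{eq : second bounds in a priori for NM} to control $\alpha_t$ in $\mathfrak{h}^\sigma$. For smooth data we differentiate:
\begin{equation*}
\dt \boldsymbol{\beta}^c = 2\Ree\sum_\lambda\int\ps{(\hbar^{1/2}a_{k,\lambda}-\alpha_t)\Psi_t}{\big(i\hbar^{-1}[\mathbb{H}_\hbar,\hbar^{1/2}a_{k,\lambda}]-\partial_t\alpha_t(k,\lambda)\big)\Psi_t}\textnormal{d}k .
\end{equation*}

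\textbf{Computing the commutator.} We have $[H_{\rm f},a_{k,\lambda}]=-|k|a_{k,\lambda}$ and $[a_{k,\lambda},\boldsymbol{\hat{\mathbb{A}}}(x_j)]=\overline{\mathbf{G}_{x_j}(k,\lambda)}$, where the convention matches $\boldsymbol{\hat{\mathbb{A}}}(x)=a(\mathbf{G}_x)+a^*(\mathbf{G}_x)$. Hence
\begin{equation*}
i\hbar^{-1}[\mathbb{H}_\hbar,\hbar^{1/2}a_{k,\lambda}]=-i|k|\hbar^{1/2}a_{k,\lambda}+2i\sum_j \overline{\mathbf{G}_{x_j}(k,\lambda)}\cdot P_j ,
\end{equation*}
using $[P_j^m,\overline{\mathbf{G}^m_{x_j}}]=i\hbar\,k_m\overline{\mathbf{G}^m_{x_j}}$ together with the transversality $k\cdot\boldsymbol{\epsilon}_\lambda(k)=0$. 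This is why the ordering of $\overline{\mathbf{G}_{x_j}}$ and $P_j$ does not matter. From \eqref{eq : newton motion eqs PF},
\begin{equation*}
-\partial_t\alpha_t=i|k|\alpha_t-2i\sum_j\overline{\mathbf{G}_{q_j}}\cdot\Tilde p_j .
\end{equation*}
Adding the two, the $|k|$ terms combine into $-i|k|(\hbar^{1/2}a_{k,\lambda}-\alpha_t)$. Its contribution $2\Ree(-i|k|)\norm{\cdot}^2$ vanishes. This is the key cancellation, and it avoids any $H_{\rm f}$-weighted terms.

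\textbf{Splitting the remainder.} The remainder is $2i\sum_j\big(\overline{\mathbf{G}_{x_j}}\cdot P_j-\overline{\mathbf{G}_{q_j}}\cdot\Tilde p_j\big)$. We write
\begin{equation*}
\overline{\mathbf{G}_{x_j}}\cdot P_j-\overline{\mathbf{G}_{q_j}}\cdot\Tilde p_j=\overline{\mathbf{G}_{x_j}}\cdot\mathbb{G}_{j,t}+\big(\overline{\mathbf{G}_{x_j}}-\overline{\mathbf{G}_{q_j}}\big)\cdot\Tilde p_j .
\end{equation*}
Using $2\Ree\,i z=-2\Imm z$ and moving the self-adjoint factors across the inner product produces the two terms \eqref{eq : beta c term I} and \eqref{eq : beta c term II}, after complex conjugation of the inner products and relabelling. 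We expect the sign bookkeeping and the placement of $\mathbb{G}_{j,s}$ to the left of $\overline{\mathbf{G}_{x_j}}$ to match the stated form. Here $\mathbb{G}_{j,s}$ commutes with $\overline{\mathbf{G}_{x_j}}$ up to the transverse term, which vanishes. The commutator $[\mathbb{G}_{j,s},a_{k,\lambda}]$ would be a scalar multiple of $\overline{\mathbf{G}_{x_j}}$; if it is needed when reordering, it contributes $\int|\mathbf{G}_{x_j}|^2$-type real terms, which drop under $\Imm$ after summing against the polarization structure. We would check this carefully.

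\textbf{Integration and limits.} We integrate in time with the fundamental theorem of calculus and then remove the regularization. The main obstacle is justifying the limit $n\to\infty$ and the $k$-integrals. The integrands must be bounded by $\big(\int\norm{(\hbar^{1/2}a_k-\alpha)\Psi}^2\textnormal{d}k\big)^{1/2}$ times $\norm{\mathbf{G}}_{\mathfrak{h}}$-weighted norms of $\mathbb{G}_{j,s}\Psi$ or $\Psi$. The needed bounds are \eqref{eq : estimates on creators and annihilators}, the quadratic-form bound \eqref{eq : operator inequality that makes beta b wd}, and Lemma~\ref{lemma : A priori estimates for the Newton--Maxwell solutions}; $\norm{\mathbf{G}_x}_{\mathfrak{h}}$ is finite by Assumption~\ref{assumptions : kappa}. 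These estimates are uniform in $s\in[0,t]$, so the differences converge by the same bilinear inequality \eqref{eq : inequality to show cvgce beta b n beta b} used for $\boldsymbol{\beta}^b$, now with the graph norm including $\mathcal{N}^{1/2}$.
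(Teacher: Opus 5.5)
Your route differs from the paper's. You differentiate $\sum_{\lambda}\int\norm{(\hbar^{1/2}a_{k,\lambda}-\alpha_t(k,\lambda))\Psi_t}^2\textnormal{d}k$ directly in the Heisenberg picture. The paper instead passes to the fluctuation vector $W^*(\hbar^{-1/2}\alpha_t)\Psi_t$ and computes its generator. That needs $\alpha\in C^1(\mathbb{R};\mathfrak{h})$ to differentiate the Weyl operator, hence first $\sigma=1$ and then a density argument in $u_0$; the paper shifts back at the end.

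The structure of your formal computation is right, and cleaner: the $|k|$-terms cancel pointwise in $k$, and the splitting $P_j=\mathbb{G}_{j,t}+\Tilde{p}_j(t)$ produces the two terms. The gap is in the justification, which is where the substance of this lemma lies.
\begin{itemize}
\item The functional contains the unbounded form $\hbar\ps{\Psi_t}{\mathcal{N}\Psi_t}$. You differentiate it for data $\Psi^{(n)}_0\in\mathcal{D}(\mathbb{H}_\hbar)\cap\mathcal{D}(\mathcal{N})$, but Lemma~\ref{lemma : invariance of the domains for beta} only propagates $\mathcal{D}(\mathcal{N}^{1/2})$, not $\mathcal{D}(\mathcal{N})$.
\item The step you use implicitly to reduce to the commutator is $2\Ree\ps{\phi_k}{-i\hbar^{-1}\mathbb{H}_\hbar\phi_k}=0$ with $\phi_k=(\hbar^{1/2}a_{k,\lambda}-\alpha_t(k,\lambda))\Psi_t$. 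This requires $a_{k,\lambda}\Psi_t$ to lie in the form domain of $\mathbb{H}_\hbar$, with control integrated in $k$, and nothing provides that.
\end{itemize}
So your derivative formula remains formal even for the regularized data. The limit $n\to\infty$ that you call the main obstacle is actually the easy part, since the right-hand side is continuous in the $(\mathbb{H}_\hbar+\mathcal{N}+1)^{1/2}$-graph norm.

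The missing ingredient is a regularization of $\mathcal{N}$ itself, as in Appendix~\ref{appendix : Growth of the third functional}. Write $\boldsymbol{\beta}^c(\Psi,\alpha)=\hbar\ps{\Psi}{\mathcal{N}\Psi}-2\hbar^{1/2}\Ree\ps{\Psi}{a(\alpha)\Psi}+\norm{\alpha}_{\mathfrak{h}}^2$. Replace $\mathcal{N}$ by the bounded operator $\mathcal{N}_\delta=\mathcal{N}e^{-\delta\mathcal{N}}$ and differentiate for $\Psi_0\in\mathcal{D}(\mathbb{H}_\hbar)$. Then remove $\delta$ using the uniform-in-$\delta$ bounds and the limit $\com{\boldsymbol{\hat{\mathbb{A}}}(x_j)}{\mathcal{N}_\delta}\to a(\mathbf{G}_{x_j})-a^*(\mathbf{G}_{x_j})$ from \cite{falconi2023derivation}.

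On the positive side, your formulation only pairs $\partial_t\alpha_t\in\mathfrak{h}^{\sigma-1}$ with $\alpha_t\in\mathfrak{h}^\sigma$ and with $a_{k,\lambda}\Psi_t$, where $\Psi_t\in\mathcal{D}(H_{\textnormal{f}}^{1/2})\cap\mathcal{D}(\mathcal{N}^{1/2})$. Since $\sigma\geq 1/2$, these pairings are finite, so your approximation of $u_0$ should be unnecessary. This is fortunate: \eqref{eq : second bounds in a priori for NM} bounds $\norm{u(t)}_{X^\sigma}$, not $\norm{u(t)-u^{(n)}(t)}_{X^\sigma}$, so it would not by itself give convergence of approximating classical solutions.

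Two bookkeeping corrections.
\begin{itemize}
\item With the convention $a^*(f)=\sum_\lambda\int f(k,\lambda)a^*_{k,\lambda}\textnormal{d}k$, one has $\com{a_{k,\lambda}}{\boldsymbol{\hat{\mathbb{A}}}(x_j)}=\mathbf{G}_{x_j}(k,\lambda)$. Likewise \eqref{eq : newton motion eqs PF} gives $-\partial_t\alpha_t=i|k|\alpha_t-2i\sum_j\mathbf{G}_{q_j}\cdot\Tilde{p}_j$. Neither carries a complex conjugation. Your two slips are consistent with each other, so the cancellation is unaffected. The conjugates in the statement only arise when you take adjoints: $-4\Imm\ps{(\hbar^{1/2}a_{k,\lambda}-\alpha_t)\Psi_t}{\mathbf{G}_{x_j}\cdot\mathbb{G}_{j,t}\Psi_t}=4\Imm\ps{\mathbb{G}_{j,t}\Psi_t}{\overline{\mathbf{G}_{x_j}}(\hbar^{1/2}a_{k,\lambda}-\alpha_t)\Psi_t}$, summed over components. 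This is \eqref{eq : beta c term II} with $\mathbb{G}_{j,s}$ already on the left, so no commutator $\com{\mathbb{G}_{j,t}}{a_{k,\lambda}}$ ever enters.
\item Done this way, the $\Tilde{p}_j$-term comes out as $4\Imm\,\Tilde{p}_j(s)\cdot\ps{\Psi_s}{(\hbar^{1/2}a_{k,\lambda}-\alpha_s)\overline{(\mathbf{G}_{x_j}-\mathbf{G}_{q_j(s)})}\Psi_s}$. That is the opposite sign to \eqref{eq : beta c term I}. It matches the sign of the $\overline{\mathbf{G}_{q_j(s)}}$-term in the paper's own appendix computation, so this is a typo in the statement (as is $\alpha_t$ for $\alpha_s$ in \eqref{eq : beta c term II}). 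It is harmless for the later estimates, which only use absolute values.
\end{itemize}
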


The proof of the above lemma is done in a very similar fashion as in \cite[Lemma 3.3]{falconi2023derivation}. We postpone it to Appendix \ref{appendix : Growth of the third functional}. \\

Now, in view of the previous lemmas, we are able to control the growth of the functional $\boldsymbol{\beta}$, which leads to the 
\begin{proof}[Proof of Lemma \ref{lemma : Growth of the functional}]
    We can control $ \boldsymbol{\beta}^a(\Psi_t, q(t))$ using \eqref{eq : change of beta a in time} together with the Young and Cauchy-Schwarz inequalities :
    \begin{equation} \label{eq : final bound on beta a in proof}
        \begin{split}
            \boldsymbol{\beta}^a(\Psi_t, q(t)) & \leq \boldsymbol{\beta}^a(\Psi_0, q_0) + C\int_0^t \left(\boldsymbol{\beta}^a(\Psi_s, q(s)) + \boldsymbol{\beta}^b(\Psi_s, u(s)) \right) \textnormal{d}s \\
        \end{split}
    \end{equation}
    Now, let us treat $ \boldsymbol{\beta}^b(\Psi_t, u(t))$, which is explicitely written in Lemma \ref{lemma : change of beta b in time}. Let us denote $\textnormal{I}_b = \eqref{eq : beta b term I}$, $\textnormal{II}_b = \eqref{eq : beta b term II}$, $\textnormal{III}_b = \eqref{eq : beta b term III}$ and let us estimate each term separately. \\

\noindent \textbf{Term $\textnormal{I}_b$}. Note first that from \eqref{eq : energy estimate on tilde p in proof lemma}, one has the uniform bound 
\begin{equation}
\left \lvert \tilde{p}(t) \right \lvert \lesssim_\kappa 1 \ . \\
\end{equation}
This and the Cauchy-Schwarz inequality yield
\begin{equation} \label{eq : first bound on term I 1 beta b}
    \left \lvert \textnormal{I}_b \right \lvert \lesssim \int_0^t \sum_{j=1}^N \sum_{\ell = 1}^3 \sum_{m=1}^3 \norm{\mathbb{G}^\ell_{js} \Psi_s} \norm{\left( F^{j \ell m}_s - \hbar^{1/2}\hat{F}^{j  \ell m} \right) \Psi_s} \textnormal{d}s \ . \\
\end{equation}
Next, we write for $s \in [0,t]$, 
\begin{equation} \label{eq : developed expressions for derivatives of A fields}
    \begin{split}
        \partial_{jm}\boldsymbol{\mathbb{A}}^\ell_{\alpha_s} (q_j(s)) &= i \sum_{\lambda = 1,2} \int_{\mathbb{R}^3} \frac{\mathcal{F}[\kappa](k)}{\sqrt{2|k|}} \boldsymbol{\epsilon}^\ell_\lambda(k) k_m \left( e^{ik \cdot q_j(s)} \alpha_s(k,\lambda) -  e^{-ik \cdot q_j(s)} \overline{\alpha_s(k,\lambda)} \right)\textnormal{d}k \\
    \partial_{jm}\boldsymbol{\hat{\mathbb{A}}}^\ell(x_j) &= i \sum_{\lambda=1,2} \int_{\mathbb{R}^3} \frac{\mathcal{F}[\kappa](k)}{\sqrt{2|k|}}\boldsymbol{\epsilon}^\ell_\lambda(k) k_m \left( e^{ik\cdot x_j}a_{k,\lambda} - e^{-ik\cdot x_j}a^*_{k,\lambda}\right) \textnormal{d}k \ , \\
    \end{split}
\end{equation}
which leads to :
\begin{align} 
         F^{j \ell m}_s - \hbar^{1/2}\hat{F}^{j  \ell m} & = \partial_{jm} \boldsymbol{\mathbb{A}}^\ell_{\alpha_s} (q_j(s)) - \partial_{jm}\boldsymbol{\mathbb{A}}^\ell_{\alpha_s} (x_j) - \left( \partial_{j\ell} \boldsymbol{\mathbb{A}}^m_{\alpha_s} (q_j(s)) -  \partial_{j\ell} \boldsymbol{\mathbb{A}}^m_{\alpha_s} (x_j) \right) \notag \\
         & \quad  + \partial_{jm} \boldsymbol{\mathbb{A}}^\ell_{\alpha_s} (x_j) - \hbar^{1/2} \partial_{jm} \boldsymbol{\hat{\mathbb{A}}}^\ell(x_j) - \left(  \partial_{j\ell} \boldsymbol{\mathbb{A}}^m_{\alpha_s}(x_j) - \hbar^{1/2} \partial_{j\ell}\boldsymbol{\hat{\mathbb{A}}}^m(x_j) \right) \notag \\
         & = 2 \Ree \ i \sum_{\lambda = 1,2} \int_{\mathbb{R}^3} \frac{\mathcal{F}[\kappa](k)}{\sqrt{2|k|}} \boldsymbol{\epsilon}^\ell_\lambda(k) k_m \left( e^{ik \cdot q_j(s)} -  e^{ik \cdot x_j} \right) \alpha_s(k,\lambda)  \textnormal{d}k - \left( m \longleftrightarrow \ell \right) \label{eq : classical minus quantum Faraday term I} \\
         & \ + i \sum_{\lambda = 1,2} \int_{\mathbb{R}^3} \frac{\mathcal{F}[\kappa](k)}{\sqrt{2|k|}} \boldsymbol{\epsilon}^\ell_\lambda(k) k_m \left[ e^{ik \cdot x_j}\left( \alpha_s(k,\lambda) -  \hbar^{1/2} a_{k,\lambda} \right) - \textnormal{h.c.} \right]\textnormal{d}k  - \left( m \longleftrightarrow \ell \right) \label{eq : classical minus quantum Faraday term II} \ . 
\end{align}
Here we used the notation $\left( m \longleftrightarrow \ell \right)$ to denote the expression preceding it with $\ell$ and $m$ exchanged. Let us denote $\textnormal{I}_b^{(1)} = \eqref{eq : classical minus quantum Faraday term I}$ and $\textnormal{I}_b^{(2)} = \eqref{eq : classical minus quantum Faraday term II}$. In view of the identity
\begin{equation} \label{eq : taylor but no eq}
        \begin{split}
            e^{i k \cdot q_j(s)} - e^{i k \cdot x_j} & = \int_0^1 i k \cdot(q_j(s)-x_j) e^{i k \cdot(\lambda q_j(s)+ (1-\lambda) x_j)} \textnormal{d}\lambda \ , \\
    \end{split}
\end{equation}
we control 
\begin{equation} \label{eq : bound for T j ell m 1 in beta b term 1}
    \begin{split}
        \norm{\textnormal{I}_b^{(1)} \Psi_s} & \lesssim \norm{|\cdot|^{3/2}(1+|\cdot|^2)^{-\sigma/2}\mathcal{F}[\kappa]}_{L^2(\mathbb{R}^3)} \norm{\alpha_s}_{\mathfrak{h}^\sigma} \norm{(x_j - q_j(s))\Psi_s} \\
    & \lesssim_\kappa (s+1)\norm{(x_j - q_j(s))\Psi_s} \ . \\
    \end{split}
\end{equation}
Let us remark here that we used $\norm{\alpha_s}_{\mathfrak{h}^\sigma}$ rather than $\norm{\alpha_s}_{\dot{\mathfrak{h}}^{1/2}}$, because the latter would have lead to the quantity $\norm{|\cdot|\mathcal{F}[\kappa]}_{L^2(\mathbb{R}^3)}$, which is not finite under Assumption \ref{assumptions : kappa} in case $\sigma \neq 1/2$. However, notice that
\begin{equation}
\norm{|\cdot|^{3/2}(1+|\cdot|^2)^{-\sigma/2}\mathcal{F}[\kappa]}_{L^2(\mathbb{R}^3)} \leq \norm{|\cdot|^{3/2-\sigma}\mathcal{F}[\kappa]}_{L^2(\mathbb{R}^3)}
\end{equation}
which is finite, but the price to pay is the extra $(s+1)$ factor, given by \eqref{eq : first bounds in a priori for NM}. Now, in order to estimate the term $\textnormal{I}_b^{(2)}$, we proceed by defining the operator
\begin{equation}
    \mathfrak{C}_{j \ell m}(s) = i \sum_{\lambda = 1,2} \int_{\mathbb{R}^3} \frac{\mathcal{F}[\kappa](k)}{\sqrt{2|k|}} \boldsymbol{\epsilon}^\ell_\lambda(k) k_m  e^{ik \cdot x_j}\left( \alpha_s(k,\lambda) -  \hbar^{1/2} a_{k,\lambda} \right) \textnormal{d}k \ , \\
\end{equation}
so that there holds
\begin{equation}
    \textnormal{I}_b^{(2)} =  \mathfrak{C}_{j \ell m}(s) +  \mathfrak{C}_{j \ell m}(s)^* \ . \\
\end{equation}
Using \eqref{eq : rewriting for beta c} and Cauchy-Schwarz, we have
\begin{equation}
    \norm{\mathfrak{C}_{j \ell m}(s) \Psi_s} \lesssim \norm{|\cdot|^{1/2}\mathcal{F}[\kappa]}_{L^2(\mathbb{R}^3)} \boldsymbol{\beta}^c(\Psi_s,\alpha_s) \ . \\
\end{equation}
Moreover, a calculation using the CCR yields
\begin{equation} \label{eq : CCR reasoning to obtain beta c}
    \begin{split}
        \norm{\mathfrak{C}_{j \ell m}(s) \Psi_s}^2 - \norm{\mathfrak{C}_{j \ell m}(s)^* \Psi_s}^2 &  = - \hbar \norm{|\cdot|^{1/2} \mathcal{F}[\kappa]}^2_{L^2(\mathbb{R}^3)} \ , \\
    \end{split}
\end{equation} 
hence giving immediately
\begin{equation} \label{eq : control for Ibetab2}
\norm{\textnormal{I}_b^{(2)} \Psi_s} \lesssim_\kappa \boldsymbol{\beta}^c(\Psi_s,\alpha_s) + \hbar^{1/2} \ . \\
\end{equation}
Combining the above estimates with \eqref{eq : first bound on term I 1 beta b} and the Young inequality, we find
\begin{equation} \label{eq : last estimate I b 1}
    \begin{split}
         \left \lvert \textnormal{I}_b\right \lvert & \lesssim_\kappa (t+1) \int_0^t \left( \boldsymbol{\beta}(\Psi_s, u(s)) + \hbar \right) \textnormal{d}s \ . \\
    \end{split}
\end{equation}

\noindent \textbf{Term $\textnormal{II}_b$}. We start by bounding with Cauchy-Schwarz and Young,
\begin{equation}
    \left \lvert \textnormal{II}_b\right \lvert \lesssim_N \int_0^t \left( \boldsymbol{\beta}^b(\Psi_s, u(s)) + \sum_{j=1}^N \sum_{k\neq j} \norm{\left( (\nabla V )(q_j(s) - q_k(s)) - (\nabla V )(x_j-x_k)\right)\Psi_s}^2 \right) \textnormal{d}s \ . \\
\end{equation}
Next, using the fundamental theorem of calculus as in \eqref{eq : FTC nabla V} together with the last estimate in \eqref{eq : pointwise bounds for V and nabla V} we control 
\begin{equation}
    \begin{split}
        \norm{\left( (\nabla V )(q_j(s) - q_k(s)) - (\nabla V )(x_j-x_k)\right)\Psi_s} \lesssim \norm{\mathcal{F}[\kappa]}^2_{L^2(\mathbb{R}^3)} \left( \norm{(x_j - q_j(s))\Psi_s} + \norm{(x_k - q_k(s))\Psi_s} \right) \ , \\
    \end{split}
\end{equation}
so that
\begin{equation} \label{eq : last estimate I b 2}
    \left \lvert \textnormal{II}_b\right \lvert \lesssim_{N,\kappa} \int_0^t \left( \boldsymbol{\beta}^a(\Psi_s, q(s)) + \boldsymbol{\beta}^b(\Psi_s, u(s)) \right) \textnormal{d}s \ . \\
\end{equation}

\noindent \textbf{Term $\textnormal{III}_b$}. This term is handled in a very similar fashion as the first one ; we first write 
\begin{equation}
    \hbar^{1/2} \boldsymbol{\hat{\mathbb{E}}}^\ell(x_j) - \boldsymbol{\mathbb{E}}_{\alpha_s}^\ell(q_j(s)) = \hbar^{1/2} \boldsymbol{\hat{\mathbb{E}}}^\ell(x_j) - \boldsymbol{\mathbb{E}}_{\alpha_s}^\ell(x_j) +\boldsymbol{\mathbb{E}}_{\alpha_s}^\ell(x_j) - \boldsymbol{\mathbb{E}}_{\alpha_s}^\ell(q_j(s)) \ , \\
\end{equation}
and then use the definition \eqref{eq : quantized electric field} to obtain terms similar to that obtained in \eqref{eq : classical minus quantum Faraday term I} and \eqref{eq : classical minus quantum Faraday term II}. We then apply a similar reasoning to obtain 
\begin{equation} \label{eq : last estimate I b 3}
     \left \lvert \textnormal{III}_b\right \lvert \lesssim_{N,\kappa} (t+1) \int_0^t \left( \boldsymbol{\beta}(\Psi_s, u(s)) + \hbar \right) \textnormal{d}s \ . \\
\end{equation}

Now, combining \eqref{eq : last estimate I b 1}, \eqref{eq : last estimate I b 2} and \eqref{eq : last estimate I b 3} yields :
\begin{equation} \label{eq : final bound betab}
	\boldsymbol{\beta}^b(\Psi_t, u(t)) \leq \boldsymbol{\beta}^b(\Psi_0, u_0) +  C(t+1)\int_0^t  \left( \boldsymbol{\beta}(\Psi_s, u(s)) + \hbar \right) \textnormal{d}s \ . \\
\end{equation}

Let us now treat $\boldsymbol{\beta}^c(\Psi_t, \alpha_t)$. Similarly as for the previous functional, we denote $\textnormal{I}_c = \eqref{eq : beta c term I}$ and $\textnormal{II}_c = \eqref{eq : beta c term II}$. We use \eqref{eq : first bounds in a priori for NM}, \eqref{eq : rewriting for beta c} and the operator inequality
\begin{equation} \label{eq : taylor but not comp for Gx}
	\lvert \mathbf{G}_{q_j(s)}(k,\lambda) - \mathbf{G}_{x_j}(k,\lambda)\lvert \lesssim |k|^{1/2} |\mathcal{F}[\kappa](k)|  \lvert x_j - q_j(s) \lvert \\
\end{equation}
which follows from \eqref{eq : taylor but no eq}, to bound with Cauchy-Schwarz and Young :
\begin{equation}
    \begin{split}
        \left \lvert \textnormal{I}_c \right\lvert & \lesssim \int_0^t \left( \boldsymbol{\beta}^a(\Psi_s, q(s)) + \boldsymbol{\beta}^c(\Psi_s, \alpha_s) \right) \textnormal{d}s \ . \\
    \end{split}
\end{equation}
By similar means, we estimate the second term as
\begin{equation}
    \begin{split}
        \left \lvert \textnormal{II}_c \right\lvert & \lesssim  \int_0^t \left( \boldsymbol{\beta}^b(\Psi_s, u(s)) + \boldsymbol{\beta}^c(\Psi_s, \alpha_s) \right) \textnormal{d}s \ . \\
    \end{split}
\end{equation}
Hence,  we obtain 
\begin{equation} \label{eq : final bound betac}
     \boldsymbol{\beta}^c(\Psi_t, \alpha_t) \leq C\int_0^t \boldsymbol{\beta}(\Psi_s, u(s)) \textnormal{d}s  \ .
\end{equation} \\

\noindent Combining \eqref{eq : final bound on beta a in proof}, \eqref{eq : final bound betab} and \eqref{eq : final bound betac} gives the desired result.  \\
\end{proof}

\subsection{Proof of the main results} \label{subsection : Proof of the main results}

In view of the previous subsections, we are now able to give the 

\begin{proof}[Proof of Theorem \ref{thm : main result}]
	Combining the estimates from Lemma \ref{lemma : Growth of the functional}, we find immediately :
\begin{equation}
    \boldsymbol{\beta}(\Psi_t, u(t)) \leq \boldsymbol{\beta}(\Psi_0, u_0) + C(t+1) \int_0^t \left( \boldsymbol{\beta}(\Psi_s, u(s)) + \hbar \right) \textnormal{d}s \ , \\
\end{equation}
which yields, by Grönwall's lemma :
\begin{equation}
   \boldsymbol{\beta}(\Psi_t, u(t)) \leq Ce^{ct^2} \left( \boldsymbol{\beta}(\Psi_0, u_0) + \hbar \right) \ . \\
\end{equation}
Assumption \ref{assumptions : init 1} and \eqref{eq : appendix beta b} ensure that $\boldsymbol{\beta}(\Psi_0, u_0) \lesssim \hbar$, from which follow the first and the third bound in \eqref{eq : cvgce in variance thm}. The second bound in \eqref{eq : cvgce in variance thm} is then obtained using \eqref{eq : appendix beta b tilda}. \\
\end{proof}

Next, we give the 

\begin{proof}[Proof of Corollary \ref{Corollary : corollary of main theorem for pure states}]
We start by proving \eqref{eq : expectation of observables first result}. The Cauchy-Schwarz inequality gives 
\begin{equation} \label{eq : spectral measure and FTC to conclude corollary}
    \begin{split}
        \left \lvert \ps{\Psi_t}{f(\mathcal{J}) \Psi_t} - f(J(t)) \right \lvert^2 & \leq \ps{\Psi_t}{\left(f(\mathcal{J}) - f(J(t)) \right)^2\Psi_t} \\
    & =  \int_{\mathbb{R}^{3N}}\left(f(s) - f(J(t) \right)^2 \textnormal{d}\nu(s) \\
    & \leq \norm{\nabla f}^2_{L^\infty} \ps{\Psi_t}{\left(\mathcal{J} - J(t) \right)^2\Psi_t} \\
    & \leq  \norm{\nabla f}^2_{L^\infty} Ce^{ct^2}\hbar \ . \\
    \end{split}
\end{equation}
Here we used $\nu = \nu_{\mathcal{J}, \Psi_t}$ the spectral measure associated with $\mathcal{J}$ and $\Psi_t$ as well as the fundamental theorem of calculus, and finally \eqref{eq : cvgce in variance thm}.
In \eqref{eq : spectral measure and FTC to conclude corollary}, going from the third to the fourth line is immediate if $\mathcal{J} \in \left\{ -i\hbar\nabla, \mathbf{x} \right\}$. If $\mathcal{J} = \hbar ^{1/2}\Phi(g)$, we write :
\begin{equation}
    \mathcal{J} - J(t) = \sum_{\lambda = 1,2} \int_{\mathbb{R}^3} \overline{g(k,\lambda)} \left(\hbar^{1/2}a_{k,\lambda} - \alpha_t(k,\lambda) \right) \textnormal{d}k + \textnormal{h.c.} \\
\end{equation}
and use \eqref{eq : CCR reasoning to obtain beta c} to estimate :
\begin{equation}
    \begin{split}
        \norm{\left( \mathcal{J} - J(t)\right) \Psi^\hbar_u(t)} & \lesssim \sum_{\lambda = 1,2} \int_{\mathbb{R}^3} |g(k,\lambda)| \norm{(\hbar^{1/2}a_{k,\lambda} - \alpha_t(k,\lambda))\Psi^\hbar_u(t)} \textnormal{d}k + \hbar^{1/2} \\
        & \lesssim (1+\norm{g}_{\mathfrak{h}})Ce^{ct^2}\hbar^{1/2} \\
    \end{split}
\end{equation}
thanks to \eqref{eq : rewriting for beta c} and Lemma \eqref{lemma : Growth of the functional}. Let us now prove \eqref{eq : cvge density matrices}. In complete analogy to \cite[Lemma 7.1]{Leopold_2018} one can prove the following estimate :
\begin{equation}
\norm{ \gamma_{\Psi_t} - \ket{\alpha_t} \bra{\alpha_t} }_{\mathfrak{S}^1(\mathfrak{h})} \leq 3 \boldsymbol{\beta}^c(\Psi_t, \alpha_t) + 6 \norm{\alpha_t}_{\mathfrak{h}} \sqrt{\boldsymbol{\beta}^c(\Psi_t, \alpha_t)} \ . \\
\end{equation}
Hence using Theorem~\ref{thm : main result}, $\norm{\alpha_t}_{\mathfrak{h}} \leq \norm{\alpha_t}_{\mathfrak{h}^\sigma} \leq \norm{u(t)}_{X^\sigma}$ and \eqref{eq : second bounds in a priori for NM}, we deduce \eqref{eq : cvge density matrices}. \\
\end{proof}

Let us now turn to the proof of Corollary \ref{corollary : more general states}. We begin with two preliminary lemmas :
\begin{lemma} \label{lemma : Fubini for bochner with rho}
Let $\sigma \in \left[1/2, 1 \right]$, and $\mathcal{J}, f$ as defined in \eqref{eq : quantum and classical observables and f}. Let $\mu \in \mathfrak{B}(X^\sigma)$ be such that
\begin{equation} \label{eq : assumption for Fubini}
\int_{X^\sigma} \norm{u}^2_{X^{\sigma}} \textnormal{d}\mu(u) < + \infty \ . \\
\end{equation}
Then, the following identity holds :
\begin{equation} \label{eq : fubini identity}
\tr{f(\mathcal{J}) \rho_\hbar(t)} = \int_{X^\sigma} \ps{\Psi^\hbar_u(t)}{f(\mathcal{J})\Psi^\hbar_u(t)}\textnormal{d}\mu(u) \ . \\
\end{equation}
\end{lemma}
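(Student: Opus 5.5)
The identity is a Fubini-type exchange of trace and Bochner integral. Since $f\in W^{1,\infty}$ is bounded, $f(\mathcal{J})$ is a bounded operator on $\mathfrak{H}$ by the functional calculus, with $\|f(\mathcal{J})\|_{\textnormal{op}}\le \|f\|_{L^\infty}$. For a bounded operator $B$, the map $\mathfrak{S}^1(\mathfrak{H})\ni T\mapsto \tr{BT}$ is a continuous linear functional, with $|\tr{BT}|\le \|B\|_{\textnormal{op}}\|T\|_{\mathfrak{S}^1}$. Continuous linear functionals commute with Bochner integrals. So once $\rho_\hbar(t)=\int \ket{\Psi^\hbar_u(t)}\bra{\Psi^\hbar_u(t)}\,\textnormal{d}\mu(u)$ is known to be a Bochner integral in $\mathfrak{S}^1(\mathfrak{H})$, we get
\begin{equation*}
\tr{f(\mathcal{J})\rho_\hbar(t)}=\int_{X^\sigma}\tr{f(\mathcal{J})\ket{\Psi^\hbar_u(t)}\bra{\Psi^\hbar_u(t)}}\textnormal{d}\mu(u)=\int_{X^\sigma}\ps{\Psi^\hbar_u(t)}{f(\mathcal{J})\Psi^\hbar_u(t)}\textnormal{d}\mu(u).
\end{equation*}

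The key steps, in order, are as follows.

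First, I check that $u\mapsto \ket{\Psi^\hbar_u}\bra{\Psi^\hbar_u}$ is Bochner integrable in $\mathfrak{S}^1(\mathfrak{H})$ with respect to $\mu$.
\begin{itemize}
\item For fixed $\hbar$, the map $u\mapsto \Psi^\hbar_u=\psi^\hbar_{q,p}\otimes W(\hbar^{-1/2}\alpha)\Omega$ is continuous from $X^\sigma$ (indeed from $\mathbb{R}^{6N}\times\mathfrak{h}$) into $\mathfrak{H}$. The Gaussian depends continuously on $(q,p)$ in $L^2$, and $\alpha\mapsto W(\alpha)\Omega$ is continuous by the explicit formula for $\|W(\alpha)\Omega-W(\beta)\Omega\|^2=2-2\,\Ree\, e^{-\|\alpha\|^2/2-\|\beta\|^2/2+\langle\alpha,\beta\rangle}$.
\item Since $\|\ket{\phi}\bra{\phi}-\ket{\chi}\bra{\chi}\|_{\mathfrak{S}^1}\le 2\|\phi-\chi\|$ for unit vectors, the rank-one projection map is continuous into $\mathfrak{S}^1$.
\item $X^\sigma$ is separable, so the integrand is strongly measurable.
\item Its norm is identically $1$, so it is integrable against the probability measure $\mu$.
\end{itemize}

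Second, I handle the time evolution. Conjugation $T\mapsto e^{-i\frac{t}{\hbar}\mathbb{H}_\hbar}Te^{i\frac{t}{\hbar}\mathbb{H}_\hbar}$ is a bounded linear isometry of $\mathfrak{S}^1(\mathfrak{H})$, so it also commutes with the Bochner integral. This justifies \eqref{eq : time evolved density matrix}, and the integrand remains continuous in $u$.

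Third, I apply the functional $T\mapsto\tr{f(\mathcal{J})T}$ and use $\tr{B\ket{\phi}\bra{\phi}}=\ps{\phi}{B\phi}$. This gives \eqref{eq : fubini identity}.

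The main subtlety is measurability of the right-hand side, and this is where the moment assumption \eqref{eq : assumption for Fubini} would enter. The argument above only needs boundedness of $f$, so the moment condition is not strictly required for this identity. If one insists on the unbounded versions, for instance to state the analogous identity for $\mathcal{J}-J(t)$ or for $\Gamma_\rho$ with $a^*a$, one would instead truncate via spectral cutoffs and pass to the limit by monotone or dominated convergence. In that case \eqref{eq : assumption for Fubini} together with \eqref{eq : second bounds in a priori for NM} supplies the integrable majorant $C e^{ct}(1+\|u\|^2_{X^\sigma})$.
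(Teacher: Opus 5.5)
Your proof is correct, and it takes a simpler route than the paper.

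\textbf{Your route.} You use that $f\in W^{1,\infty}\subset L^\infty$, so $f(\mathcal{J})$ is bounded and $T\mapsto\tr{f(\mathcal{J})T}$ is a continuous functional on $\mathfrak{S}^1(\mathfrak{H})$. The only remaining input is Bochner integrability of $u\mapsto\ket{\Psi^\hbar_u(t)}\bra{\Psi^\hbar_u(t)}$ in $\mathfrak{S}^1(\mathfrak{H})$. You get this directly from continuity in $u$, separability of $X^\sigma$, unit trace norm, and the fact that unitary conjugation is an isometry of $\mathfrak{S}^1(\mathfrak{H})$.

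\textbf{The paper's route.} The paper never uses boundedness of $f$. It controls $f(\mathcal{J})$ only through $\norm{f(\mathcal{J})\phi}\le\norm{\nabla f}_{L^\infty}\norm{\mathcal{J}\phi}+|f(0)|\norm{\phi}$. It then applies the Yosida--Fubini theorem to left multiplication $\mathcal{O}\mapsto f(\mathcal{J})\mathcal{O}$ from $\mathcal{B}(\mathfrak{H},\mathcal{D}(\mathcal{J}))$ into $\mathcal{B}(\mathfrak{H})$. This forces it to show that $u\mapsto\norm{\mathcal{J}\Psi^\hbar_u(t)}$ is $\mu$-integrable. It does so using Theorem~\ref{thm : main result}, Lemma~\ref{eq : lemma equivalence beta b beta b tilda} and the a priori bound \eqref{eq : second bounds in a priori for NM}, and this is exactly where the moment hypothesis \eqref{eq : assumption for Fubini} enters. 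It then checks separately that $f(\mathcal{J})\rho_\hbar(t)$ is trace class before taking the trace.

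\textbf{What each buys.} The paper's argument would also cover Lipschitz but unbounded $f$, for instance $f(\mathcal{J})=\mathcal{J}$. The price is the moment condition and dynamical input from the main theorem. Your argument shows that for $f\in W^{1,\infty}$, as actually stated, the identity is purely kinematic and the moment assumption is superfluous. It also justifies strong measurability and \eqref{eq : time evolved density matrix}, which the paper takes for granted.

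\textbf{One correction.} Your remark that measurability is where the moment assumption ``would enter'' is misleading. In your own argument, measurability of the right-hand side comes from continuity of $u\mapsto\Psi^\hbar_u(t)$ and boundedness of $f(\mathcal{J})$, not from any moment bound. Your conclusion that \eqref{eq : assumption for Fubini} is not needed is nonetheless right.
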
 

\begin{lemma} \label{lemma : fubini with bochner integral for the kernels of RDM}
    Let $\sigma \in [1/2,1]$.  Let $\rho_\hbar(t)$ and $\Psi_u^\hbar(t)$ be as defined in \eqref{eq : time evolved density matrix}.     
Consider their associated one-particle reduced density matrices, respectively given by $\Gamma_{\rho_\hbar(t)}$ and $\gamma_{\Psi_u^\hbar(t)}$. Let $\mu \in \mathfrak{B}(X^\sigma)$ be such that
    \begin{equation} \label{eq : assumption for the kernel lemma}
    \int_{X^\sigma} \norm{u}^2_{X^{\sigma}} \textnormal{d}\mu(u) < + \infty \ . \\
    \end{equation}
    Then,
    \begin{equation} \label{eq : operator equality RDM general state}
        \Gamma_{\rho_\hbar(t)} = \int_{X^\sigma} \gamma_{\Psi^\hbar_u(t)} \textnormal{d}\mu(u) \\
    \end{equation}
on $\mathfrak{S}^1(\mathfrak{h})$. \\
\end{lemma}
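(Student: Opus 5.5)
The identity \eqref{eq : operator equality RDM general state} will be proved by testing both sides against bounded operators on $\mathfrak{h}$. Trace-class operators are determined by their pairing with $\mathcal{B}(\mathfrak{h})$, and the Bochner integral commutes with bounded linear functionals. It therefore suffices to check two things: that $u \mapsto \gamma_{\Psi^\hbar_u(t)}$ is Bochner integrable in $\mathfrak{S}^1(\mathfrak{h})$, and that for every $O \in \mathcal{B}(\mathfrak{h})$ one has $\textnormal{Tr}_{\mathfrak{h}}[O\,\Gamma_{\rho_\hbar(t)}] = \int_{X^\sigma} \textnormal{Tr}_{\mathfrak{h}}[O\,\gamma_{\Psi^\hbar_u(t)}]\,\textnormal{d}\mu(u)$.

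\textbf{Integrability.} Since $\gamma_\Psi \geq 0$, we have $\norm{\gamma_\Psi}_{\mathfrak{S}^1} = \hbar\ps{\Psi}{\mathcal{N}\Psi}$. I bound this along the flow by
\begin{equation*}
\hbar\ps{\Psi_t}{\mathcal{N}\Psi_t} \leq 2\boldsymbol{\beta}^c(\Psi_t,\alpha_t) + 2\norm{\alpha_t}_{\mathfrak{h}}^2,
\end{equation*}
which follows from \eqref{eq : rewriting for beta c} and the triangle inequality. Theorem~\ref{thm : main result} for the Gaussian coherent initial data $\Psi^\hbar_u$ gives $\boldsymbol{\beta}^c \leq Ce^{ct^2}(\boldsymbol{\beta}(\Psi^\hbar_u,u)+\hbar)$. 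A direct computation shows that $\boldsymbol{\beta}(\Psi^\hbar_u,u) \lesssim \hbar$, with a constant that must be checked to be uniform in $u$, or at worst polynomial in $\norm{u}_{X^\sigma}$. In addition, \eqref{eq : second bounds in a priori for NM} gives $\norm{\alpha_t}_{\mathfrak{h}}^2 \lesssim e^{ct}\norm{u}^2_{X^\sigma}$. Hence $\norm{\gamma_{\Psi^\hbar_u(t)}}_{\mathfrak{S}^1} \lesssim_t 1 + \norm{u}_{X^\sigma}^2$, which is $\mu$-integrable by \eqref{eq : assumption for the kernel lemma}.

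\textbf{Measurability.} Strong measurability of $u \mapsto \gamma_{\Psi^\hbar_u(t)}$ follows from continuity. The map $u \mapsto \Psi^\hbar_u$ is continuous into $\mathfrak{H}$ with the $\mathcal{N}^{1/2}$-graph norm, and the unitary group preserves this continuity by the proof of $\mathcal{D}(\mathcal{N}^{1/2})$-invariance in Lemma~\ref{lemma : invariance of the domains for beta}. The map $\Psi \mapsto \gamma_\Psi$ is continuous for that norm.

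\textbf{Identity.} For $O \in \mathcal{B}(\mathfrak{h})$, write $\textnormal{Tr}_{\mathfrak{h}}[O\gamma_\Psi] = \hbar\ps{\Psi}{\textnormal{d}\Gamma(O)\Psi}$, where $\textnormal{d}\Gamma(O)$ is the second quantization of $O$ tensored with the identity, and similarly $\textnormal{Tr}_{\mathfrak{h}}[O\,\Gamma_\rho] = \hbar\,\textnormal{Tr}_{\mathfrak{H}}[\textnormal{d}\Gamma(O)\rho]$. Since $\textnormal{d}\Gamma(O)$ is unbounded, I regularize with $(\mathcal{N}+1)^{-1/2}$ and write
\begin{equation*}
\textnormal{Tr}_{\mathfrak{H}}[\textnormal{d}\Gamma(O)\rho_\hbar(t)] = \textnormal{Tr}_{\mathfrak{H}}\big[(\mathcal{N}+1)^{-1/2}\textnormal{d}\Gamma(O)(\mathcal{N}+1)^{-1/2}\,(\mathcal{N}+1)^{1/2}\rho_\hbar(t)(\mathcal{N}+1)^{1/2}\big].
\end{equation*}
The first factor is bounded with norm at most $\norm{O}$. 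The second factor equals the Bochner integral $\int \ket{(\mathcal{N}+1)^{1/2}\Psi^\hbar_u(t)}\bra{(\mathcal{N}+1)^{1/2}\Psi^\hbar_u(t)}\,\textnormal{d}\mu(u)$ in $\mathfrak{S}^1(\mathfrak{H})$. The integrand has trace norm $\norm{(\mathcal{N}+1)^{1/2}\Psi^\hbar_u(t)}^2$, which is integrable by the bound above (with an extra factor $\hbar^{-1}$, harmless for fixed $\hbar$). Because the trace against a bounded operator is a continuous linear functional on $\mathfrak{S}^1$, it passes inside the integral. This gives the identity for every $O$, and hence \eqref{eq : operator equality RDM general state}.

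\textbf{Main obstacle.} The delicate points are the uniform-in-$u$ control of $\boldsymbol{\beta}(\Psi^\hbar_u,u)$, where $\boldsymbol{\beta}^b$ involves $\tilde p$ and therefore $\boldsymbol{\mathbb{A}}_\alpha$, and the justification that the unbounded conjugation $(\mathcal{N}+1)^{1/2}\,\cdot\,(\mathcal{N}+1)^{1/2}$ commutes with the Bochner integral. For the latter I would first truncate with $\mathbb{1}_{\mathcal{N}\leq M}$, where all operators are bounded, and then let $M\to\infty$ by dominated convergence using the integrable bound above.
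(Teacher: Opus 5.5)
Your argument is correct in structure, but it takes a different route from the paper.

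\textbf{Comparison with the paper's route.} The paper proceeds in three steps:
\begin{enumerate}
\item it identifies the kernel of $\int_{X^\sigma}\gamma_{\Psi^\hbar_u(t)}\,\textnormal{d}\mu(u)$ through the bounded kernel map $\mathcal{K}:\mathfrak{S}^1(\mathfrak{h})\to L^2$;
\item it tests $\Gamma_{\rho_\hbar(t)}-T_{\hbar,t}$ against $f,g\in\mathfrak{h}$, with a Tonelli estimate;
\item it proves $\textnormal{Tr}[a^*(g)a(f)\rho_\hbar(t)]=\int_{X^\sigma}\ps{\Psi^\hbar_u(t)}{a^*(g)a(f)\Psi^\hbar_u(t)}\,\textnormal{d}\mu(u)$. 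For this it writes $a(f)\rho_\hbar(t)$, and then $a(f)\rho_\hbar(t)a^*(g)$, as Bochner integrals (Theorem~\ref{theorem : Yosida bochner fubini} on $\mathcal{B}(\mathfrak{H},\mathcal{D}(\mathcal{N}^{1/2}))$), and invokes cyclicity of the trace with the unbounded $a^*(g)$.
\end{enumerate}
Trace duality makes steps 1 and 2 unnecessary, since $X\mapsto\textnormal{Tr}_{\mathfrak{h}}[OX]$ is a bounded functional on $\mathfrak{S}^1(\mathfrak{h})$. For step 3 you use the symmetric sandwich $(\mathcal{N}+1)^{-1/2}\textnormal{d}\Gamma(O)(\mathcal{N}+1)^{-1/2}$, of norm at most $\norm{O}$, together with the truncation $\mathbb{1}_{\mathcal{N}\le M}$ and dominated convergence in $\mathfrak{S}^1(\mathfrak{H})$. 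This replaces the cyclicity step by an argument in which every operator is bounded. Your continuity argument also supplies the measurability that the paper takes for granted.

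It suffices to take $O=\ket{g}\bra{f}$. Then $\textnormal{d}\Gamma(O)=a^*(g)a(f)$, and $\textnormal{Tr}_{\mathfrak{h}}[O\,\Gamma_{\rho_\hbar(t)}]=\ps{f}{\Gamma_{\rho_\hbar(t)}g}$ is just the definition. So you need not know beforehand that $\Gamma_{\rho_\hbar(t)}$ is trace class; that follows from the identity.

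\textbf{The integrability step does not follow as written.} The paper's own bound \eqref{eq : relating norm N with beta c} has the same defect. The constants $C,c$ of Theorem~\ref{thm : main result} depend on $u$. The proof of Lemma~\ref{lemma : Growth of the functional} uses $\sup_s|\Tilde{p}(s)|\lesssim\mathcal{H}(u_0)^{1/2}+1$ and $\norm{\alpha_s}_{\mathfrak{h}^\sigma}\le C(u_0)(s+1)$ from Lemma~\ref{lemma : A priori estimates for the Newton--Maxwell solutions}, and these enter the Gr\"onwall exponent. Hence the bound $Ce^{ct^2}$ is not polynomial in $\norm{u}_{X^\sigma}$, and the second moment of $\mu$ does not control it. The initial value $\boldsymbol{\beta}(\Psi^\hbar_u,u)\lesssim\hbar(1+\norm{u}^2_{X^\sigma})$ is harmless, as you suspected.

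Since $\hbar$ and $t$ are fixed here, you can bypass the theorem:
\begin{itemize}
\item Energy conservation and $V\in L^\infty$ give $\sum_j\norm{P_j\Psi^\hbar_u(t)}^2\le\ps{\Psi^\hbar_u}{\mathbb{H}_\hbar\Psi^\hbar_u}+C\lesssim 1+\norm{u}^2_{X^\sigma}$.
\item The commutator $\com{P_j}{\mathcal{N}}=-\hbar^{1/2}\big(a(\mathbf{G}_{x_j})-a^*(\mathbf{G}_{x_j})\big)$ then yields $\dt\norm{(\mathcal{N}+1)^{1/2}\Psi^\hbar_u(t)}\lesssim\hbar^{-1/2}(1+\norm{u}_{X^\sigma})$.
\item Hence $\hbar\ps{\Psi^\hbar_u(t)}{\mathcal{N}\Psi^\hbar_u(t)}\lesssim(1+t^2)(1+\norm{u}^2_{X^\sigma})$, which is $\mu$-integrable.
\end{itemize}

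Because the dynamics is linear, the same estimate applied to differences bounds $\norm{(\mathcal{N}+1)^{1/2}(\Psi_t-\Phi_t)}$ by the $\mathcal{N}^{1/2}$- and $\mathbb{H}_\hbar^{1/2}$-graph norms of $\Psi-\Phi$. So for measurability you should use continuity of $u\mapsto\Psi^\hbar_u$ in both graph norms, not in the $\mathcal{N}^{1/2}$-graph norm alone.
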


\noindent 
To prove these lemmas, we will rely on the following Fubini type theorem for Bochner integrals :
\begin{theorem}[\cite{yosida2012functional}] \label{theorem : Yosida bochner fubini}
Let $(S, \mathfrak{B}, \mu)$ be a measure space. Let $X$ and $Y$ be two Banach spaces. Let $T : X \rightarrow Y$ be a bounded linear operator. If $x$ is an $X$-valued Bochner $\mu$-integrable function, then $Tx$ is a $Y$-valued Bochner $\mu$-integrable function, and there holds
\begin{equation}
\int_B Tx (u) \textnormal{d}\mu(u) = T \int_B x(u) \textnormal{d}\mu(u) \\
\end{equation}
on $Y$, for any set $B \in \mathfrak{B}$. \\
\end{theorem}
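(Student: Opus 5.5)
The statement to prove is the Fubini-type theorem for Bochner integrals, Theorem~\ref{theorem : Yosida bochner fubini}. I would prove it in three steps: simple functions, then integrability of $Tx$, then passage to the limit using boundedness of $T$.

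\textbf{Step 1: simple functions.} Suppose $x=\sum_{i=1}^n \mathbb{1}_{E_i} x_i$ with $E_i\in\mathfrak{B}$, $\mu(E_i)<\infty$ and $x_i\in X$. Then
\begin{equation*}
Tx=\sum_{i=1}^n \mathbb{1}_{E_i}\, Tx_i
\end{equation*}
is a $Y$-valued simple function. By linearity of $T$,
\begin{equation*}
\int_B Tx\,\textnormal{d}\mu=\sum_{i=1}^n \mu(B\cap E_i)\,Tx_i = T\sum_{i=1}^n \mu(B\cap E_i)\,x_i = T\int_B x\,\textnormal{d}\mu ,
\end{equation*}
so the identity holds for simple functions and every $B\in\mathfrak{B}$.

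\textbf{Step 2: Bochner integrability of $Tx$.} Let $x$ be Bochner integrable. By definition there are simple functions $x_n$ with $x_n\to x$ $\mu$-a.e. and
\begin{equation*}
\int_S \norm{x_n-x}_X\,\textnormal{d}\mu\to 0 .
\end{equation*}
Since $T$ is continuous, $Tx_n\to Tx$ $\mu$-a.e., so $Tx$ is strongly measurable. The operator bound gives
\begin{equation*}
\norm{Tx_n - Tx}_Y\le \norm{T}\,\norm{x_n-x}_X ,
\end{equation*}
hence $\int_S \norm{Tx_n-Tx}_Y\,\textnormal{d}\mu\to 0$. Equivalently, $\norm{Tx}_Y\le\norm{T}\norm{x}_X$ is integrable, and Bochner's criterion shows that $Tx$ is Bochner integrable.

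\textbf{Step 3: passage to the limit.} By the definition of the Bochner integral, $\int_B x_n\,\textnormal{d}\mu\to\int_B x\,\textnormal{d}\mu$ in $X$, because
\begin{equation*}
\norm{\int_B (x_n-x)\,\textnormal{d}\mu}_X\le\int_S\norm{x_n-x}_X\,\textnormal{d}\mu\to 0 .
\end{equation*}
In the same way, $\int_B Tx_n\,\textnormal{d}\mu\to\int_B Tx\,\textnormal{d}\mu$ in $Y$. Boundedness of $T$ then gives $T\int_B x_n\,\textnormal{d}\mu\to T\int_B x\,\textnormal{d}\mu$. Since Step 1 gives $\int_B Tx_n\,\textnormal{d}\mu = T\int_B x_n\,\textnormal{d}\mu$ for each $n$, and limits in $Y$ are unique, the claim follows.

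The only step that is not purely formal is Step 2, namely the measurability and integrability of $Tx$. I would handle it either through a.e. convergence of simple functions, as above, or through Pettis' theorem, noting that $Tx$ is separably valued and weakly measurable since $y^*\circ T\in X^*$ for every $y^*\in Y^*$. Everything else is linearity together with the bound $\norm{T}$.
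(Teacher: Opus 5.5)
Your proof is correct. It is the standard argument from the source the paper cites, \cite{yosida2012functional}: approximate $x$ by finitely-valued $x_n$ with $\int_S \norm{x_n - x}_X \,\textnormal{d}\mu \to 0$, use $\norm{Tx_n - Tx}_Y \le \norm{T}\norm{x_n - x}_X$ to obtain Bochner integrability of $Tx$, and pass to the limit in the identity for simple functions. The paper itself gives no proof, and in your Step 2 neither Bochner's criterion nor Pettis' theorem is needed, since the sequence $Tx_n$ already witnesses integrability of $Tx$ directly from the definition.
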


\begin{proof}[Proof of Lemma \ref{lemma : Fubini for bochner with rho}]
To begin with, let us justify that the following identity holds :
\begin{equation} \label{eq : fubini with f J in proof lemma}
    f(\mathcal{J}) \rho_{\hbar}(t) = \int_{X^\sigma} \ket{f(\mathcal{J}) \Psi_u^\hbar(t)} \bra{\Psi_u^\hbar(t)} \textnormal{d}\mu(u) \\
\end{equation}
on $\mathcal{B}(\mathfrak{H})$. To prove this, we will apply Theorem \ref{theorem : Yosida bochner fubini}. Consider the following linear map :
\begin{equation}
    \begin{split}
        \mathcal{L}_{f(\mathcal{J})} & : \mathcal{X} \longrightarrow \ \ \mathcal{Y} \\
        & \ \ \ \ \mathcal{O} \longmapsto f(\mathcal{J}) \mathcal{O} \ , \\
    \end{split}
\end{equation}
where we defined $\mathcal{X} = \mathcal{B}(\mathfrak{H}, \mathcal{D}(\mathcal{J}))$, endowed with the operator norm :
\begin{equation}
    \norm{\mathcal{O}}_{\mathcal{X}} := \sup_{\norm{\psi}_{\mathfrak{H}} = 1} \norm{(\mathcal{J}+1) \mathcal{O}\psi} \ , \\
\end{equation}
and $\mathcal{Y} = \mathcal{B}(\mathfrak{H})$. Then, we have :
\begin{equation}
    \begin{split}
    \norm{\mathcal{L}_{f(\mathcal{J})}}_{\mathcal{X} \rightarrow \mathcal{Y}} & = \sup_{\norm{\mathcal{O}}_{\mathcal{X}} = 1} \norm{f(\mathcal{J})\mathcal{O}}_{\mathfrak{H} \rightarrow \mathfrak{H}} \\
    & = \sup_{\norm{\mathcal{O}}_{\mathcal{X}} = 1} \sup_{\norm{\psi}_{\mathfrak{H}} = 1} \norm{f(\mathcal{J})\mathcal{O}\psi}_{\mathfrak{H} } \ . \\
    \end{split}
\end{equation}
Now, let $\phi = \mathcal{O} \psi$ with $\mathcal{O}, \psi$ as in the above. Consider $\nu = \nu_{\mathcal{J},\phi}$ the spectral measure associated to the self-adjoint operator $\mathcal{J}$ and the vector $\phi$. Then, we can write :
\begin{equation} \label{eq : ftc to see boundedness of f of op}
    \begin{split}
        \norm{\left(f(\mathcal{J}) - f(0) \right)\phi}_{\mathfrak{H}}^2 & = \int_{\mathbb{R}} \left \lvert f(s) - f(0) \right\lvert^2 \textnormal{d}\nu(u) \\
        & = \int_{\mathbb{R}} \left \lvert \int_0^1 \nabla f(\eta s) \cdot s \ \textnormal{d} \eta\right\lvert^2 \textnormal{d}\nu(u) \\
        & \leq \norm{\nabla f}^2_{L^\infty} \norm{\mathcal{J}\phi}_{\mathfrak{H}}^2 \ , \\
    \end{split}
\end{equation}
where we used the fundamental theorem of calculus since $f \in W^{1,\infty} \subset W^{1,1}_{\textnormal{loc}}$. We thus obtain :
\begin{equation}
    \begin{split}
        \norm{f(\mathcal{J}) \mathcal{O}\psi}_{\mathfrak{H}} & \leq \norm{\nabla f}_{L^\infty} \norm{\mathcal{J}\mathcal{O}\psi}_{\mathfrak{H}} + |f(0)| \norm{\mathcal{O}\psi}_{\mathfrak{H}} \\
        & \lesssim_f \norm{\left(\mathcal{J}+1\right)\mathcal{O}\psi}_{\mathfrak{H}} \ , \\
    \end{split}
\end{equation}
which yields :
\begin{equation}
    \norm{\mathcal{L}_{f(\mathcal{J})}}_{\mathcal{X} \rightarrow \mathcal{Y}} \leq C < + \infty \ , \\
\end{equation}
hence $\mathcal{L}_{f(\mathcal{J})}$ is a bounded linear operator. Moreover, $\ket{\Psi^\hbar_u(t)} \bra{\Psi^\hbar_u(t)}$ is $\mathcal{X}$-valued and $\mu$-integrable, since by Cauchy-Schwarz,
\begin{equation}
    \norm{\ket{\Psi^\hbar_u(t)} \bra{\Psi^\hbar_u(t)}}_{\mathcal{X}} \leq \norm{\mathcal{J}\Psi_u^\hbar(t)}_{\mathfrak{H}} + 1 \ . \\
\end{equation}
In order to see that the above norm is finite and $\mu$-integrable, we remark that it in fact relates to the three functionals defined in \eqref{def : Definition of the functional}. Indeed, if $\mathcal{J} = -i\hbar \nabla$, then using \eqref{eq : appendix beta b tilda} we can write 
\begin{equation} \label{eq : relating fJ with the functionals}
    \begin{split}
        \norm{-i\hbar \nabla \Psi_u^\hbar(t)}_{\mathfrak{H}}& \leq \boldsymbol{\beta}(\Psi_u^\hbar(t), u(t))^{1/2} + |p(t)| \\
        & \lesssim_{\kappa, \sigma, t, \hbar} \norm{u}_{X^\sigma} + 1 \ , \\
    \end{split}
\end{equation}
which is finite and $\mu$-integrable given our assumptions. Here, $t \mapsto u(t) = (q(t), p(t), \alpha_t)$ is the unique solution to \eqref{eq : newton motion eqs PF} associated with the initial data $u = (q,p,\alpha)$. We also used \eqref{eq : second bounds in a priori for NM}, and Theorem \ref{thm : main result}, since there holds 
\begin{equation}
    \begin{split}
        \norm{(-i\hbar\nabla - p) \Psi^\hbar_u}^2 & \leq C \hbar \\
        \norm{(x- q) \Psi^\hbar_u}^2 & \leq C \hbar \\
        \ps{\Psi^\hbar_u}{W(\hbar^{-1/2}\alpha) \mathcal{N}W^*(\hbar^{-1/2}\alpha)\Psi^\hbar_u}_{\mathfrak{H}}  & = 0 \ . \\
    \end{split}
\end{equation}
These are obtained directly from the definition of $\Psi^\hbar_u$, as pointed out in Remark \ref{remark : rks on the initial assumptions}. If $\mathcal{J} = \mathbf{x}$, the reasoning is similar. If $\mathcal{J} = \hbar^{1/2}\Phi(g)$ for some $g \in \mathfrak{h}$, we control
\begin{equation}
    \norm{\Phi(g) \Psi^\hbar_u(t)}_{\mathfrak{H}} \lesssim \norm{g}_\mathfrak{h} \norm{\mathcal{N}^{1/2}\Psi^\hbar_u(t)}_{\mathfrak{H}} \ . \\
\end{equation}
Then, using the shifting relation
\begin{equation}
    W^*(f) \mathcal{N}W(f) = \mathcal{N} + \Phi(f) + \norm{f}^2 \\
\end{equation}
we compute :
\begin{equation} \label{eq : relating norm N with beta c}
    \begin{split}
        \hbar \norm{\mathcal{N}^{1/2}\Psi_u^\hbar(t)}_{\mathfrak{H}}^2 & = \hbar \norm{\mathcal{N}^{1/2}W^*(\hbar^{-1/2} \alpha_t)\Psi_u^\hbar(t)}^2 + \hbar^{1/2} \ps{W^*(\hbar^{-1/2} \alpha_t)\Psi_u^\hbar(t)}{\Phi(\alpha_t)W^*(\hbar^{-1/2} \alpha_t)\Psi_u^\hbar(t)} 
\\
&\quad    + \norm{\alpha_t}^2_{\mathfrak{h}} \\
        & \leq \boldsymbol{\beta}^c(\Psi_u^\hbar(t),\alpha_t) + 2 \norm{\alpha_t}_\mathfrak{h} \boldsymbol{\beta}^c(\Psi_u^\hbar(t),\alpha_t)^{1/2}         
+ \norm{\alpha_t}^2_\mathfrak{h} \\
        & \lesssim_{\kappa,\sigma,t} \hbar + \hbar^{1/2}\norm{u}_{X^\sigma} + \norm{u}_{X^\sigma}^2 \ . \\
    \end{split}
\end{equation}
Here in the second line we used the definition of $\boldsymbol{\beta}^c$, and in the third line we used that $\norm{\alpha_t}_{\mathfrak{h}} \leq \norm{\alpha_t}_{\mathfrak{h}^\sigma} \leq \norm{u(t)}_{X^\sigma}$, together with \eqref{eq : second bounds in a priori for NM}. We conclude again with the assumption on $\mu$. This proves \eqref{eq : fubini with f J in proof lemma}. Next, note that in view of the inequality 
\begin{equation}
\left \lvert \tr{\mathcal{O}} \right \lvert \leq \tr{\left \lvert \mathcal{O} \right \lvert} ,
\end{equation}
it holds that $\textnormal{Tr} : \mathfrak{S}^1(\mathfrak{H}) \rightarrow \mathbb{C}$ is a bounded linear operator. Now, note that the triangle inequality implies 
\begin{equation} \label{eq : from identity on bounded ops to trace class ops}
\begin{split}
\norm{\int_{X^\sigma}\ket{f(\mathcal{J})\Psi^\hbar_u(t)} \bra{\Psi^\hbar_u(t)} \textnormal{d}\mu(u)}_{\mathfrak{S}^1(\mathfrak{H})} & \leq \int_{X^\sigma} \norm{\ket{f(\mathcal{J})\Psi^\hbar_u(t)} \bra{\Psi^\hbar_u(t)}}_{\mathfrak{S}^1(\mathfrak{H})} \textnormal{d}\mu(u) \\
& = \int_{X^\sigma} \norm{f(\mathcal{J})\Psi^\hbar_u(t)}  \textnormal{d}\mu(u) \\
& < + \infty \ . \\
\end{split} 
\end{equation}
Here in the second line we used the fact that $\ket{\Psi^\hbar_u(t)} \bra{\Psi^\hbar_u(t)}$ is a rank-one operator. To see why the third line holds, one has to apply the same reasoning we already did from \eqref{eq : ftc to see boundedness of f of op} on, and use Assumption \ref{eq : assumption for Fubini} with the fact that $\mu$ is a probability measure. As a consequence, this implies that $\int_{X^\sigma}\ket{f(\mathcal{J})\Psi^\hbar_u(t)} \bra{\Psi^\hbar_u(t)} \textnormal{d}\mu(u)$ is trace class, so is $f(\mathcal{J}) \rho_\hbar(t)$, and thus the identity \eqref{eq : fubini with f J in proof lemma} actually holds on $\mathfrak{S}^1(\mathfrak{H})$. Another consequence of \eqref{eq : from identity on bounded ops to trace class ops} is that $u \mapsto \ket{f(\mathcal{J})\Psi^\hbar_u(t)} \bra{\Psi^\hbar_u(t)}$ is $\mathfrak{S}^1(\mathfrak{H})$-valued and $\mu$-integrable, so that Theorem \ref{theorem : Yosida bochner fubini} applies, allowing us to write :
\begin{equation}
\begin{split}
\tr{f(\mathcal{J}) \rho_\hbar(t)} & = \tr{\int_{X^\sigma}\ket{f(\mathcal{J})\Psi^\hbar_u(t)} \bra{\Psi^\hbar_u(t)} \textnormal{d}\mu(u)} \\
& = \int_{X^\sigma}\ps{f(\mathcal{J})\Psi^\hbar_u(t)}{\Psi^\hbar_u(t)} \textnormal{d}\mu(u) \ , \\
\end{split}
\end{equation}
as desired. \\
\end{proof}

We now give the 
\begin{proof} [Proof of Lemma \eqref{lemma : fubini with bochner integral for the kernels of RDM}]
In order to prove \eqref{eq : operator equality RDM general state}, let us first start by noticing that we have the following equality of kernels :
    \begin{equation} \label{eq : identity kernels rho}
        \left( \int_{X^\sigma} \gamma_{\Psi^\hbar_u(t)} \textnormal{d}\mu(u) \right) (k,\lambda ; \ell, \nu) = \int_{X^\sigma} \gamma_{\Psi^\hbar_u(t)}(k,\lambda ; \ell, \nu) \textnormal{d}\mu(u) \quad \textnormal{a.e.} \ . \\
    \end{equation}
Indeed, if we define the linear map
\begin{equation}
    \begin{split}
        \mathcal{K} & : \mathfrak{S}^1(\mathfrak{h}) \longrightarrow \ \ L^2(\Omega \times \Omega) \\
        & \ \ \ \ \ \ \ \mathcal{O}  \ \ \ \  \longmapsto  \ \ \mathcal{O}(.;.) \\
    \end{split}
\end{equation}
where $\mathcal{O}(.;.)$ is the integral kernel of the operator $\mathcal{O}$, then $\mathcal{K}$ is a bounded linear operator for its induced operator norm, since 
\begin{equation}
    \begin{split}
        \norm{\mathcal{K}}_{\textnormal{op}} & = \sup_{\norm{\mathcal{O}}_{\mathfrak{S}^1(\mathfrak{h})} = 1} \norm{\mathcal{K}\mathcal{O}}_{L^2(\Omega \times \Omega)} \\
        & = \sup_{\norm{\mathcal{O}}_{\mathfrak{S}^1(\mathfrak{h})} = 1} \norm{\mathcal{O}}_{\mathfrak{S}^{2}(\mathfrak{h})} \\
        & \leq 1 \ . \\
    \end{split}
\end{equation}
Here, we used that for any Hilbert-Schmidt operator $\mathcal{O}$ with integral kernel $\mathcal{O}(.;.)$ on $\Omega \times \Omega$, there holds 
\begin{equation} \label{eq : Hs norm with kernel for HS op}
    \norm{\mathcal{O}}^2_{\mathfrak{S}^2(\mathfrak{h})} = \iint_{\Omega\times \Omega} |\mathcal{O}(x;y)|^2 \textnormal{d}x \textnormal{d}y \ . \\
\end{equation}
Moreover, since $\gamma_{\Psi_u^\hbar(t)}$ is a non-negative trace-class operator, there holds (see for example \cite[Chapter 5]{namMQMII2020}) :
\begin{equation} \label{eq : computation trace norm of gamma psi}
    \begin{split}
        \norm{\gamma_{\Psi_u^\hbar(t)}}_{\mathfrak{S}^1(\mathfrak{h})} & = \tr{\gamma_{\Psi_u^\hbar(t)}} \\
        & = \hbar\ps{\Psi_u^\hbar(t)}{\mathcal{N}\Psi_u^\hbar(t)} \ , \\
    \end{split}
\end{equation}
which is finite and $\mu$- integrable in view of \eqref{eq : relating norm N with beta c}. Hence by Theorem \eqref{theorem : Yosida bochner fubini}, we have
\begin{equation}
    \mathcal{K}\int_{X^\sigma} \gamma_{\Psi^\hbar_u(t)} \textnormal{d}\mu(u) = \int_{X^\sigma}  \mathcal{K} \gamma_{\Psi^\hbar_u(t)} \textnormal{d}\mu(u) \ , \\
\end{equation}
which is exactly \eqref{eq : identity kernels rho}. Consequently, if we prove that there holds
\begin{equation} \label{eq : idenity kernels rho}
    \Gamma_{\rho_\hbar(t)}(k,\lambda ; \ell, \nu) = \int_{X^\sigma} \gamma_{\Psi^\hbar_u(t)}(k,\lambda ; \ell, \nu) \textnormal{d}\mu(u) \ , \\
\end{equation}
the proof will be complete, since trace-class (thus Hilbert-Schmidt) operators on a $L^2$ space are completely characterized by their integral kernel. 
Let us define the operator $T_{\hbar,t}$ with integral kernel
\begin{equation}
    T_{\hbar,t}(k,\lambda ; \ell, \nu) = \int_{X^\sigma} \gamma_{\Psi^\hbar_u(t)}(k,\lambda ; \ell, \nu) \textnormal{d}\mu(u) \ , \\
\end{equation}
and let us show that there holds
\begin{equation}
    \begin{split}
        \ps{f}{\left( \Gamma_{\rho_\hbar(t)} - T_{\hbar,t} \right)g}_{\mathfrak{h}} & = 0 \ . \\
    \end{split}
\end{equation}
for all $f,g \in \mathfrak{h}$. On the one hand, we have 
\begin{equation}
    \ps{f}{\Gamma_{\rho_\hbar(t)}g}_{\mathfrak{h}} = \hbar\tr{a^*(g)a(f)\rho_\hbar(t)} \\
\end{equation}
by definition of $\Gamma_{\rho_\hbar(t)}$. On the other hand, we compute :
\begin{equation} \label{eq : ps Tht f g}
    \begin{split}
        \ps{f}{ T_{\hbar,t}g}_{\mathfrak{h}} & = \sum_{\lambda, \nu =1,2} \iint_{\mathbb{R}^3 \times \mathbb{R}^3} \overline{f(k,\lambda)}T_{\hbar,t}(k,\lambda ; \ell, \nu)g(\ell, \nu) \textnormal{d}\ell \textnormal{d}k \\
        & = \int_Y \left( \int_X \overline{f(x)} \gamma_{\Psi^\hbar_u(t)}(x;y) g(y) \textnormal{d}\mu(u) \right)\textnormal{d}\eta(x,y) \ , \\
    \end{split}
\end{equation}
where we defined $X = X^\sigma$, $Y = \left(\mathbb{R}^3 \times \left\{1,2 \right\}\right)^2$, and $\eta$ is the Lebesgue measure on $Y$. By Tonelli, we know that there holds
\begin{equation}
    \begin{split}
        \int_{X \times Y} \left \lvert f(x) \gamma_{\Psi^\hbar_u(t)}(x;y) g(y)  \right \lvert \textnormal{d} (\mu \otimes \eta)(u, (x,y)) & = \int_Y \left( \int_X \left \lvert f(x) \gamma_{\Psi^\hbar_u(t)}(x;y) g(y)  \right \lvert \textnormal{d}\mu(u) \right)\textnormal{d}\eta(x,y) \ . \\
    \end{split}
\end{equation}
Hence, if we show that the right-hand side is finite, we will be able to interchange the integrals in \eqref{eq : ps Tht f g} by Fubini. We thus control :
\begin{equation} \label{eq : Tonelli inequalities}
    \begin{split}
        \int_Y \left( \int_X \left \lvert f(x) \gamma_{\Psi^\hbar_u(t)}(x;y) g(y)  \right \lvert \textnormal{d}\mu(u) \right)\textnormal{d}\eta(x,y) & \leq  \norm{f}_{\mathfrak{h}} \norm{g}_{\mathfrak{h}} \int_X \left( \int_Y \left \lvert \gamma_{\Psi^\hbar_u(t)}(x;y)   \right \lvert^2 \textnormal{d}\eta(x,y) \right)^{\frac{1}{2}} \textnormal{d} \mu(u) \\
        & \leq  \norm{f}_{\mathfrak{h}} \norm{g}_{\mathfrak{h}} \int_X \norm{\gamma_{\Psi_u^\hbar(t)}}_{\mathfrak{S}^2(\mathfrak{h})} \textnormal{d} \mu(u) \\
        & \leq \norm{f}_{\mathfrak{h}} \norm{g}_{\mathfrak{h}} \int_X \norm{\gamma_{\Psi_u^\hbar(t)}}_{\mathfrak{S}^1(\mathfrak{h})} \textnormal{d}\mu(u) \\
        & < + \infty \ . \\
    \end{split}
\end{equation}
In the first line we used Tonelli and Cauchy-Schwarz with respect to $\eta$. In the second line we used \eqref{eq : Hs norm with kernel for HS op}, in the third line we used that the trace norm controls the Hilbert-Schmidt norm, and in the last line we used \eqref{eq : computation trace norm of gamma psi} together with \eqref{eq : relating norm N with beta c} and the initial assumption on $\mu$. Hence, going back to \eqref{eq : Tonelli inequalities}, we deduce by \eqref{eq : assumption for the kernel lemma} and Fubini that we can interchange the integral on $Y$ and the integral on $X$. Thus, \eqref{eq : ps Tht f g} becomes 
\begin{equation}
    \begin{split}
        \ps{f}{T_{\hbar,t}g}_{\mathfrak{h}} & = \int_{X^\sigma} \ps{f}{\gamma_{\Psi_u^\hbar(t)}g}_{\mathfrak{h}} \textnormal{d}\mu(u) \\
        & = \int_{X^\sigma} \hbar \ps{\Psi_u^\hbar(t)}{a^*(g)a(f)\Psi_u^\hbar(t)}_{\mathfrak{h}} \textnormal{d}\mu(u) \\
    \end{split}
\end{equation}
by definition of $\gamma_{\Psi_u^\hbar(t)}$. We are thus left with proving :
\begin{equation} \label{eq : identity for the trace a star a}
    \tr{a^*(g)a(f) \rho_\hbar(t)} = \int_{X^\sigma} \ps{\Psi_u^\hbar(t)}{a^*(g)a(f)\Psi_u^\hbar(t)} \textnormal{d}\mu(u) \ . \\
\end{equation}
To do so, note first that there holds 
\begin{equation} \label{eq : identity af rho}
    a(f) \rho_\hbar(t) = \int_{X^\sigma} \ket{a(f) \Psi_u^\hbar(t)} \bra{\Psi_u^\hbar(t)} \textnormal{d}\mu(u) \\
\end{equation}
on $\mathcal{B}\left(\mathfrak{H}\right)$. Proving this identity is done in a very similar way as for \eqref{eq : fubini with f J in proof lemma}, namely we define
\begin{equation}
    \begin{split}
        \mathcal{L}_{a(f)} & : \mathcal{X} \longrightarrow \ \ \mathcal{Y} \\
        & \ \ \ \mathcal{O}  \ \longmapsto  \ a(f) \mathcal{O} \ , \\
    \end{split}
\end{equation}
where $\mathcal{X} = \mathcal{B}\left(\mathfrak{H},\mathcal{D}\left(\mathcal{N}^{1/2}\right)\right)$ is endowed with the operator norm 
\begin{equation}
    \norm{\mathcal{O}}_{\mathcal{X}}:= \sup_{\norm{\psi}_{\mathfrak{H}} = 1} \norm{(\mathcal{N}+1)^{1/2}\mathcal{O}\psi}_{\mathfrak{H}} \ , \\
\end{equation}
and $\mathcal{Y} = \mathcal{B}\left(\mathfrak{H}\right)$ with the usual norm. Then, we show similarly that it is indeed a bounded linear operator, since we have 
\begin{equation}
    \begin{split}
    \norm{\mathcal{L}_{a(f)}}_{\mathcal{X} \rightarrow \mathcal{Y}}
    & \leq \sup_{\norm{\mathcal{O}}_{\mathcal{X}} = 1}\ \sup_{\norm{\psi}_{\mathfrak{H}} = 1}  \norm{f} \norm{\mathcal{N}^{1/2}\mathcal{O}\psi}_{\mathfrak{H}} \ . \\
    \end{split}
\end{equation}
Moreover, $\ket{\Psi_u^\hbar(t)} \bra{\Psi_u^\hbar(t)}$ is $\mathcal{X}$-valued and $\mu$-integrable, since by Cauchy-Schwarz, 
\begin{equation}
    \begin{split}
        \norm{\ket{\Psi_u^\hbar(t)} \bra{\Psi_u^\hbar(t)}}_{\mathcal{X}}  &\leq \norm{(\mathcal{N}+1)^{1/2}\Psi^\hbar_u(t)}_{\mathfrak{F}} \\
        & \lesssim_{\kappa,\sigma,t,\hbar} \norm{u}^{1/2}_{X^\sigma} + \norm{u}_{X^\sigma} + 1 \ . \\
    \end{split}
\end{equation}
Hence, a direct application of \eqref{theorem : Yosida bochner fubini} yields \eqref{eq : identity af rho}. Now, observe that $a(f)\rho_\hbar(t) \in \mathfrak{S}^1(\mathfrak{H})$. Indeed, using \eqref{eq : identity af rho} and the triangle inequality, we control
\begin{equation}
    \begin{split}
        \norm{a(f)\rho_\hbar(t)}_{\mathfrak{S}^1(\mathfrak{H})} & \leq \int_{X^\sigma} \norm{\ket{a(f) \Psi_u^\hbar(t)} \bra{\Psi_u^\hbar(t)} }_{\mathfrak{S}^1(\mathfrak{H})} \textnormal{d}\mu(u) \\
        & \leq \norm{f}_{\mathfrak{h}} \int_{X^\sigma} \norm{\mathcal{N}^{1/2}\Psi^\hbar_u(t)}_{\mathfrak{H}}\textnormal{d}\mu(u) \ , \\
    \end{split}
\end{equation}   
which is finite in view of \eqref{eq : relating norm N with beta c}. Hence by cyclicity of the trace,
\begin{equation}
    \tr{a^*(g)a(f)\rho_\hbar(t)} = \tr{a(f)\rho_\hbar(t)a^*(g)} \ . \\
\end{equation}
Furthermore, writing 
\begin{equation}
    \begin{split}
        \left(a(f) \rho_\hbar(t) a^*(g) \right)^* & = a(g) \int_{X^\sigma} \ket{\Psi^\hbar_u(t)} \bra{a(f)\Psi^\hbar_u(t)} \textnormal{d}\mu(u) \ , \\
    \end{split}
\end{equation}
we can apply the same reasoning as above for $a(g)$, since 
\begin{equation}
    \begin{split}
        \norm{\ket{\Psi_u^\hbar(t)} \bra{a(f)\Psi_u^\hbar(t)}}_{\mathcal{X}}  &\leq \norm{f}\norm{\mathcal{N}^{1/2}\Psi^\hbar_u(t)}^2 \\
        & \lesssim_{\kappa,\sigma,t,\hbar} \norm{u}_{X^\sigma} + \norm{u}^2_{X^\sigma} + 1 \ . \\
    \end{split}
\end{equation}
Hence, inserting $a(g)$ in the integral above and taking the adjoint yields
\begin{equation}
    a(f) \rho_\hbar(t) a^*(g) =  \int_{X^\sigma} \ket{a(f)\Psi^\hbar_u(t)} \bra{a(g)\Psi^\hbar_u(t)} \textnormal{d}\mu(u) \ . \\
\end{equation}
Now, proving \eqref{eq : identity for the trace a star a} requires to apply again \eqref{theorem : Yosida bochner fubini}, with the bounded linear operator $\textnormal{Tr} : \mathfrak{S}^1(\mathfrak{h}) \rightarrow \mathbb{C}$. We have :
\begin{equation}
    \begin{split}
        \textnormal{Tr} \left \lvert \ket{a(f)\Psi^\hbar_u(t)} \bra{a(g)\Psi^\hbar_u(t)}  \right \lvert & = \norm{a(f)\Psi^\hbar_u(t)} \norm{a(g)\Psi^\hbar_u(t)} _\hbar \\
        & \leq \norm{f} \norm{g} \norm{\mathcal{N}^{1/2}\Psi^\hbar_u(t)}^2 \\
        & \lesssim_{\kappa,\sigma,t,\hbar} \norm{f} \norm{g} \left( \norm{u}_{X^\sigma} + \norm{u}^2_{X^\sigma} + 1 \right) \ . \\
    \end{split}
\end{equation}
Hence $\ket{a(f)\Psi^\hbar_u(t)} \bra{a(g)\Psi^\hbar_u(t)}$ is $\mathfrak{S}^1(\mathfrak{h})$-valued and $\mu$-integrable, and the proof is complete. \\
\end{proof}

We can now turn to the 
\begin{proof}[Proof of Corollary \eqref{corollary : more general states}]
Let us start by proving \eqref{eq : more general states convergence}. Using Lemma \ref{lemma : Fubini for bochner with rho} and the fact that $\ket{\Psi^\hbar_u(t)} \bra{\Psi^\hbar_u(t)}$ has unit trace, we write :
\begin{equation}
\begin{split}
	\left \lvert \tr{f(\mathcal{J}) \rho_\hbar(t)} - \int_{X^\sigma} f(p(t)) \textnormal{d} \mu(u) \right \lvert & = \left \lvert \int_{X^\sigma} \tr{\left(f(\mathcal{J}) - f(J(t)) \right)\ket{\Psi^\hbar_u(t)} \bra{\Psi^\hbar_u(t)}} \textnormal{d} \mu(u)  \right \lvert \\
	& = \left \lvert \int_{X^\sigma} \ps{\Psi^\hbar_u(t)}{\left(f(\mathcal{J}) - f(J(t)) \right)\Psi^\hbar_u(t)} \textnormal{d} \mu(u) \right \lvert \\
	& = \left \lvert \int_{X^\sigma} \int_{\mathbb{R}} (f(s) - f(J(t))) \textnormal{d} \nu(s) \textnormal{d} \mu(u) \right \lvert \\
    & \leq \norm{\nabla f}_{L^\infty} \int_{X^\sigma} \norm{(\mathcal{J} - J(t))\Psi^\hbar_u(t)} \textnormal{d}\mu(u) \\
	& \leq  \norm{\nabla f}_{L^\infty} Ce^{ct^2} \hbar^{1/2} \ . \\
\end{split}
\end{equation}
Here we used similar arguments as in \eqref{eq : spectral measure and FTC to conclude corollary}, making use of the spectral measure $\nu = \nu_{\mathcal{J}, \Psi_u^\hbar(t)}$ and the fundamental theorem of calculus. \\

Let us now turn to the proof of \eqref{eq : more general states reduced density matrices convergence}. Using Lemma \ref{lemma : fubini with bochner integral for the kernels of RDM} and the triangle inequality, we write :
\begin{equation}
    \begin{split}
        \norm{ \Gamma_{\rho_\hbar(t)} - \int_{X^\sigma} \ket{\alpha_t}\bra{\alpha_t} \textnormal{d}\mu(u) }_{\mathfrak{S}^1(\mathfrak{h})}  & = \norm{ \int_{X^\sigma} \left( \gamma_{\Psi^\hbar_u(t)} - \ket{\alpha_t}\bra{\alpha_t} \right)\textnormal{d}\mu(u) }_{\mathfrak{S}^1(\mathfrak{h})} \\
        & \leq \int_{X^\sigma} \norm{\gamma_{\Psi^\hbar_u(t)} - \ket{\alpha_t}\bra{\alpha_t} }_{\mathfrak{S}^1(\mathfrak{h})}  \textnormal{d}\mu(u) \ . \\
    \end{split}
\end{equation}
We then conclude thanks to Corollary \ref{eq : cvge density matrices} and the fact that $\mu$ is a probability measure. The proof is complete. \\
\end{proof}

\newpage
\appendix

\section*{Appendix}

\section{Growth of the third functional} \label{appendix : Growth of the third functional}
Here we give the proof of Lemma \ref{lemma : growth of beta c} which, modulo a density argument, goes along the lines of \cite[Lemma 3.3]{falconi2023derivation}. For the sake of convenience and completeness, we reproduce it here with the needed modifications.
\begin{proof} [Proof of Lemma \ref{lemma : growth of beta c}]
We begin by proving this Lemma in the case $\sigma = 1$. The reason why we restrict ourselves to this case will be explained below. We will recover the general setting via a density argument. Let us then consider $u_0 \in X^1$, and $t \mapsto u(t) = (q(t), p(t), \alpha_t) \in \mathcal{C}(\mathbb{R}; X^1) \cap \mathcal{C}^1(\mathbb{R}; X^0)$ the associated solution to \eqref{eq : newton motion eqs PF}. Let us define the bounded operator $\mathcal{N}_\delta = \mathcal{N}e^{-\delta \mathcal{N}}$. We then set 
\begin{equation}
	\begin{split}
		\boldsymbol{\beta}^c_\delta(\Psi_t, \alpha_t) & = \hbar \ps{\Psi_t}{W(\hbar^{-1/2} \alpha_t) \mathcal{N}_\delta W^*(\hbar^{-1/2} \alpha_t) \Psi_t} \\
		& = \hbar \ps{\boldsymbol{\xi}_{t}}{\mathcal{N}_\delta\boldsymbol{\xi}_{t}}
	\end{split}
\end{equation}
Where we defined $\boldsymbol{\xi}_{t} = W^*(\hbar^{-1/2} \alpha_t) \Psi_t$ the fluctuation vector. First of all, let us note that one has 
\begin{equation}
	\boldsymbol{\beta}^c_\delta(\Psi_t, \alpha_t) \underset{\delta \rightarrow 0}{\longrightarrow} \boldsymbol{\beta}^c(\Psi_t, \alpha_t)
\end{equation}
This can be seen similarly as in \cite[Equation (3.16)]{falconi2023derivation}, by writing :
\begin{equation}
	\begin{split}
		\left \lvert \boldsymbol{\beta}^c_\delta(\Psi_t, \alpha_t) - \boldsymbol{\beta}^c(\Psi_t, \alpha_t) \right \lvert & \leq \hbar \norm{\left( \mathcal{N}_\delta - \mathcal{N} \right)\boldsymbol{\xi}_{t}}
	\end{split}
\end{equation}
and using the spectral theorem to write
\begin{equation}
	\norm{\left( \mathcal{N}_\delta - \mathcal{N} \right)\boldsymbol{\xi}_{t}}^2 = \int_0^\infty \lambda^2(e^{-\delta \lambda} - 1)^2 \textnormal{d}\mu_{\mathcal{N},\boldsymbol{\xi}_{t}}(\lambda)
\end{equation}
where $\textnormal{d}\mu_{\mathcal{N},\boldsymbol{\xi}_{t}}$ is the spectral measure associated to $\mathcal{N}$ and $\boldsymbol{\xi}_{t}$. Dominated convergence allows us to conclude. Now, a standard argument shows that the dynamics of $\boldsymbol{\xi}_{t}$ are given by
\begin{equation} \label{eq : dynamics chit}
    i \hbar \partial_t \boldsymbol{\xi}_{t} = \mathcal{G}(t) \boldsymbol{\xi}_{t} \ ,
\end{equation}
where the time dependant generator $\mathcal{G}(t)$ is given by
\begin{equation}
    \mathcal{G}(t) = i \hbar \dot{W}^*(\hbar^{-1/2} \alpha_t) W(\hbar^{-1/2} \alpha_t) + W^*(\hbar^{-1/2} \alpha_t) \mathbb{H}_{\hbar}W(\hbar^{-1/2} \alpha_t) \ . \\
\end{equation}
From \cite[Lemma 3.1]{ginibreveloclassicallimit}, we know that the map $t \mapsto W(\hbar^{-1/2} \alpha_t)$ is differentiable since $t \mapsto \alpha_t \in \mathcal{C}^1(\mathbb{R}; \mathfrak{h})$, with 
\begin{equation}
\dot{W}^*(\hbar^{-1/2} \alpha_t) = \left(\hbar^{-1/2} \left( a(\partial_t \alpha_t) - a^*(\partial_t \alpha_t) + i\hbar^{-1} \Imm \ps{\partial_t\alpha_t}{\alpha_t} \right) \right) W^*(\hbar^{-1/2} \alpha_t) \ . \\
\end{equation}
Hence, one has the following expression for the generator $\mathcal{G}(t)$ : 
\begin{equation}
		\begin{split}
\mathcal{G}(t) & = \hbar^{1/2} \left( a(|\cdot| \alpha_t - i\partial_t \alpha_t) + a^*(|\cdot| \alpha_t - i\partial_t \alpha_t) \right) + \Imm \ps{\partial_t\alpha_t}{\alpha_t} \\
        & \quad \quad + \sum_{j=1}^N \left(-i \hbar \nabla_j - \hbar^{1/2}\boldsymbol{\hat{\mathbb{A}}}(x_j) - \boldsymbol{\mathbb{A}}_{\alpha_t}(x_j) \right)^2 + \sum_{1 \leq j < k \leq N} V(x_j - x_k) + \hbar H_{\textnormal{f}} \\
        & \quad \quad + \sum_{\lambda = 1,2} \int_{\mathbb{R}^3} |k||\alpha_t(k,\lambda)|^2\textnormal{d}k \ . \\
        \end{split}
\end{equation}
Now, using the above expression, one finds :
\begin{equation}
	\begin{split}
		 \hbar^{-1/2}\com{\mathcal{G}(t)}{\mathcal{N}_\delta} & = \left( \com{a(|\cdot| \alpha_t - i\partial_t \alpha_t)}{\mathcal{N}_\delta} -  \sum_{j=1}^N\left(-i \hbar \nabla_j - \hbar^{1/2}\boldsymbol{\hat{\mathbb{A}}}(x_j) - \boldsymbol{\mathbb{A}}_{\alpha_t}(x_j) \right) \com{\boldsymbol{\hat{\mathbb{A}}}(x_j)}{\mathcal{N}_\delta} \right) - \textnormal{h.c.} \ , \\
	\end{split}
\end{equation}
which itself leads to :
\begin{equation} \label{eq : derivative beta c regularized}
	\begin{split}
		 \dt \boldsymbol{\beta}_\delta^c(\Psi_t,\alpha_t) & = i \ps{\boldsymbol{\xi}_{t}}{\com{\mathcal{G}(t)}{\mathcal{N}_\delta}\boldsymbol{\xi}_{t}} \\
		 & = 2 \hbar^{1/2} \Imm \sum_{j=1}^N \ps{\boldsymbol{\xi}_{t}}{\left(-i \hbar \nabla_j - \hbar^{1/2}\boldsymbol{\hat{\mathbb{A}}}(x_j) - \boldsymbol{\mathbb{A}}_{\alpha_t}(x_j) \right) \com{\boldsymbol{\hat{\mathbb{A}}}(x_j)}{\mathcal{N}_\delta}\boldsymbol{\xi}_{t}} \\
		 & \quad - 2 \hbar^{1/2} \Imm \ps{\boldsymbol{\xi}_{t}}{\com{a(|\cdot| \alpha_t - i\partial_t \alpha_t)}{\mathcal{N}_\delta}\boldsymbol{\xi}_{t}} \\
		 & \underset{\delta \rightarrow0}{\longrightarrow} 2 \hbar^{1/2} \Imm \sum_{j=1}^N \ps{\boldsymbol{\xi}_{t}}{\left(-i \hbar \nabla_j - \hbar^{1/2} \boldsymbol{\hat{\mathbb{A}}}(x_j) - \boldsymbol{\mathbb{A}}_{\alpha_t}(x_j) \right) \left(a(\mathbf{G}_{x_j}) - a^*(\mathbf{G}_{x_j}) \right) \boldsymbol{\xi}_{t}} \\
		 & \quad \quad - 2 \hbar^{1/2} \Imm \ps{\boldsymbol{\xi}_{t}}{a(|\cdot|\alpha_t - i\partial_t \alpha_t)\boldsymbol{\xi}_{t}} \\
		 & = 2 \hbar^{1/2} \Imm \sum_{j=1}^N \ps{\boldsymbol{\xi}_{t}}{\left(-i \hbar \nabla_j - \hbar^{1/2} \boldsymbol{\hat{\mathbb{A}}}(x_j) - \boldsymbol{\mathbb{A}}_{\alpha_t}(x_j) \right) \left(a(\mathbf{G}_{x_j}) - a^*(\mathbf{G}_{x_j}) \right) \boldsymbol{\xi}_{t}}  \\
		 & \quad - 4 \hbar^{1/2} \Imm \sum_{j=1}^N \ps{\boldsymbol{\xi}_{t}}{a(\mathbf{G}_{q_j(t)}) (p_j(t) - \boldsymbol{\mathbb{A}}_{\alpha_t}(q_j(t)))\boldsymbol{\xi}_{t}} \ . \\
	\end{split}
\end{equation}
Here, we first used the fact that
\begin{equation}
	\begin{split}
		\underset{\delta \rightarrow 0}{\lim} \ps{\boldsymbol{\xi}_{t}}{\left( \com{a(f)}{\mathcal{N}_\delta} - a(f) \right)\boldsymbol{\xi}_{t}} & = 0 \\
		\underset{\delta \rightarrow 0}{\lim} \norm{\left( \com{\boldsymbol{\hat{\mathbb{A}}}(x_j)}{\mathcal{N}_\delta} - \left(a(\mathbf{G}_{x_j}) - a^*(\mathbf{G}_{x_j}) \right) \right)\boldsymbol{\xi}_{t}} & = 0 \ , \\
	\end{split}
\end{equation}
see \cite[Equations (3.21) \& (3.22)]{falconi2023derivation}, and then we used \eqref{eq : newton motion eqs PF}. Now, by the fundamental theorem of calculus and dominated convergence we can write 
\begin{equation}
	\begin{split}
		\boldsymbol{\beta}^c(\Psi_t,\alpha_t) & = \boldsymbol{\beta}^c(\Psi_0,\alpha_0) + \int_0^t \underset{\delta \rightarrow 0}{\lim} \frac{\textnormal{d}}{\textnormal{d}s} \boldsymbol{\beta}_\delta^c(\Psi_s,\alpha_s) \textnormal{d}s \ . \\
	\end{split}
\end{equation}
Note that we can actually apply dominated convergence since it is possible to bound the quantities in the second line of \eqref{eq : derivative beta c regularized} uniformly in $\delta$, see \cite[Equation (3.13)]{falconi2023derivation}. Consequently, a computation using the shifting property \eqref{eq : shifting pty weyl} yields :
\begin{equation} \label{eq : integral formuma beta c 1}
	\begin{split}
		\boldsymbol{\beta}^c(\Psi_t,\alpha_t) - & \boldsymbol{\beta}^c(\Psi_0,\alpha_0) \\
        & = \int_0^t 2 \Imm \sum_{j=1}^N \sum_{\lambda=1,2} \int_{\mathbb{R}^3} \ps{\Psi_s}{ \left(-i \hbar \nabla_j - \hbar^{1/2} \boldsymbol{\hat{\mathbb{A}}}(x_j) \right)\overline{\mathbf{G}_{x_j}(k,\lambda)} \left(\hbar^{1/2}a_{k,\lambda} - \alpha_s(k,\lambda) \right) \Psi_s}\textnormal{d}k \textnormal{d}s \\
		& \quad -\int_0^t 2 \Imm \sum_{j=1}^N \ps{\Psi_s}{\left(-i \hbar \nabla_j - \hbar^{1/2} \boldsymbol{\hat{\mathbb{A}}}(x_j) \right)\mathbf{G}_{x_j}(k,\lambda) \left(\hbar^{1/2}a^*_{k,\lambda} - \overline{\alpha_s(k,\lambda)} \right)\Psi_s} \textnormal{d}k \textnormal{d}s \\
		& \quad - \int_0^t 4 \Imm \sum_{j=1}^N \sum_{\lambda=1,2} \int_{\mathbb{R}^3} \Tilde{p}_j(s) \ps{\Psi_s}{\overline{\mathbf{G}_{q_j(s)}(k,\lambda)}\left(\hbar^{1/2}a_{k,\lambda} -  \alpha_s(k,\lambda) \right)\Psi_s}\textnormal{d}k \textnormal{d}s \ . \\
	\end{split}
\end{equation}
Another calculation using the CCR gives :
\begin{equation}
	\com{-i \hbar \nabla_j - \hbar^{1/2} \boldsymbol{\hat{\mathbb{A}}}(x_j)}{\hbar^{1/2}a^*_{k,\lambda} - \overline{\alpha_t(k,\lambda)}} = - \hbar \overline{\mathbf{G}_{x_j}(k,\lambda)} \ , \\
\end{equation}
which combined with the above allows to recover Lemma \ref{lemma : growth of beta c}. \\

In order to prove Lemma \ref{lemma : growth of beta c} for a general $\sigma \in [1/2,1]$, note that $\mathfrak{h}^1$ is dense in $\mathfrak{h}^\sigma$. Hence, let us consider $u_0 = (q_0, p_0, \alpha_0) \in X^\sigma$, and let $u^{(n)}_0 = \left(q_0, p_0, \alpha_0^{(n)} \right) \in X^1$ such that 
\begin{equation}
    \norm{u_0 - u^{(n)}_0}_{X^\sigma} = \norm{\alpha_0 - \alpha_0^{(n)}}_{\mathfrak{h}^\sigma} \underset{n \rightarrow + \infty}{\longrightarrow} 0 \ . \\
\end{equation}
Let $t \mapsto u(t) = (q(t), p(t), \alpha_t) \in \mathcal{C}(\mathbb{R}; X^\sigma) \cap \mathcal{C}^1(\mathbb{R}; X^{\sigma-1})$ and $t \mapsto u^{(n)}(t) = \left(q^{(n)}(t), p^{(n)}(t), \alpha^{(n)}_t\right) \in \mathcal{C}(\mathbb{R}; X^1) \cap \mathcal{C}^1(\mathbb{R}; X^{0})$ be the solutions to \eqref{eq : newton motion eqs PF} with initial data $u_0$ and $u_0^{(n)}$ respectively. Then, from \eqref{eq : second bounds in a priori for NM}, we have 
\begin{equation}
    \begin{split}
        \norm{\alpha_t - \alpha^{(n)}_t}_{\mathfrak{h}^\sigma} & \leq \norm{u(t) - u^{(n)}(t)}_{X^\sigma} \\
        & \leq Ce^{ct} \norm{u_0 - u^{(n)}_0}_{X^\sigma}\\
        & \underset{n \rightarrow + \infty}{\longrightarrow} 0 \ . \\
    \end{split}
\end{equation}
Let us denote $\textnormal{I} = \eqref{eq : beta c term I}$ and $\textnormal{I} = \eqref{eq : beta c term I}$. From the beginning of the proof, we know that there holds
\begin{equation} \label{eq : change of beta c in time}
             \boldsymbol{\beta}^c\left(\Psi_t, \alpha^{(n)}_t\right) - \boldsymbol{\beta}^c\left(\Psi_0,\alpha^{(n)}_0\right) = \textnormal{I}_n + \textnormal{II}_n \ , 
    \end{equation}
    where $\textnormal{I}_n$ and $\textnormal{II}_n$ denote respectively $\textnormal{I}$ and $\textnormal{II}$ where $\alpha_t$ is replaced by $\alpha^{(n)}_t$. We thus conclude the proof by showing that for all $t \geq 0$,
    \begin{equation} \label{eq : points to prove to conclude beta c}
        \begin{split}
            \left \lvert \boldsymbol{\beta}^c\left(\Psi_t, \alpha^{(n)}_t\right) - \boldsymbol{\beta}^c\left(\Psi_t, \alpha_t\right)  \right \lvert & \underset{n \rightarrow + \infty}{\longrightarrow} 0 \\
            \left \lvert \textnormal{I}_n -  \textnormal{I} \right \lvert \ , \ \left \lvert \textnormal{II}_n -  \textnormal{II} \right \lvert & \underset{n \rightarrow + \infty}{\longrightarrow} 0 \ . \\
        \end{split}
    \end{equation}
    A straightforward computation using \eqref{eq : rewriting for beta c} yields
    \begin{equation}
         \boldsymbol{\beta}^c\left(\Psi_t, \alpha^{(n)}_t\right) - \boldsymbol{\beta}^c\left(\Psi_t, \alpha_t\right)  = 2 \Ree \  \hbar^{1/2} \sum_{\lambda = 1,2} \int_{\mathbb{R}^3} \ps{\Psi_t}{a^*_{k,\lambda}\Psi_t} \left( \alpha_t(k,\lambda) - \alpha_t^{(n)}(k,\lambda) \right) \textnormal{d}k + \norm{ \alpha^{(n)}_t}^2_{\mathfrak{h}^0} - \norm{ \alpha_t}^2_{\mathfrak{h}^0} \ . \\
    \end{equation}
    Hence, using the reversed triangle inequality, the Cauchy-Schwarz inequality and the fact that $\norm{\cdot}_{\mathfrak{h}^0} \leq \norm{\cdot}_{\mathfrak{h}^\sigma}$, we have 
    \begin{equation}
        \begin{split}
            \left \lvert \boldsymbol{\beta}^c\left(\Psi_t, \alpha^{(n)}_t\right) - \boldsymbol{\beta}^c\left(\Psi_t, \alpha_t\right)  \right \lvert & \lesssim \norm{\mathcal{N}^{1/2}\Psi_t}\norm{\alpha_t - \alpha^{(n)}_t}_{\mathfrak{h}^\sigma} + \norm{\alpha_t - \alpha^{(n)}_t}_{\mathfrak{h}^\sigma}^2 \\
             & \underset{n \rightarrow + \infty}{\longrightarrow} 0 \ . \\
        \end{split}
    \end{equation}
    The second line in \eqref{eq : points to prove to conclude beta c} follows from similar arguments. \\
\end{proof}

\section{Relating the variance in kinetic momentum with the variance in momentum} \label{appendix : equivalence beta b}
Here we prove that the second bound in \eqref{eq : cvgce in variance thm} can be obtained without estimating the growth of the LHS quantity, that is $\norm{ \left(-i \hbar \nabla - p(t)  \right)\Psi_t}$, but rather by introducing the functional $\boldsymbol{\beta}^b$ which is defined by means of the quantum \textit{kinetic momentum} $-i\hbar\nabla -  \hbar^{1/2} \boldsymbol{\hat{\mathbb{A}}}(x)$ and its classical counterpart.  As already pointed out, this is crucial to obtain important cancellations.  We define
\begin{equation} \label{eq : def beta b tilda}
            \boldsymbol{\Tilde{\beta}}^b \left(\Psi, p\right) =  \norm{ \left(-i \hbar \nabla - p  \right)\Psi}^2 \ .\\
\end{equation}
In analogy to Definition \ref{def : Definition of the functional},   we define the functional 
\begin{equation}
    \boldsymbol{\Tilde{\beta}}(\Psi,u) = \boldsymbol{\beta}^a(\Psi,q) + \boldsymbol{\Tilde{\beta}}^b(\Psi,p) + \boldsymbol{\beta}^c(\Psi,\alpha) \ , \\
\end{equation}
with the same domain assumptions as in Definition \eqref{def : Definition of the functional}. Then,   the following lemma holds.
\begin{lemma}  \label{eq : lemma equivalence beta b beta b tilda}
     Let $\Psi \in \mathcal{D} \left( \left( - \Delta \right)^{1/2} \right) \cap \mathcal{D}(\mathbf{x}) \cap \mathcal{D}\left(\mathcal{N}^{1/2}\right)$. Let $\sigma \in [1/2,1]$, and $u = (q,p,\alpha) \in X^\sigma$. Then, there exists a constant $C > 0$ such that
     \begin{equation} \label{eq : appendix beta b}
          \boldsymbol{\beta}^b(\Psi,u) \leq C ( \boldsymbol{\Tilde{\beta}}(\Psi,u) + \hbar ) \\
     \end{equation}
     and 
     \begin{equation} \label{eq : appendix beta b tilda}
          \boldsymbol{\Tilde{\beta}} ^b(\Psi,p) \leq C \left( \boldsymbol{\beta}(\Psi,u) + \hbar \right) \ . \\
     \end{equation}
\end{lemma}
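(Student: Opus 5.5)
The key identity is the decomposition
\begin{equation*}
P_j - \Tilde{p}_j = \left(-i\hbar\nabla_j - p_j\right) - \left(\hbar^{1/2}\boldsymbol{\hat{\mathbb{A}}}(x_j) - \boldsymbol{\mathbb{A}}_\alpha(q_j)\right) \ ,
\end{equation*}
which follows directly from \eqref{eq : def kinetic momentums N particles}. By the triangle inequality and $(a+b)^2 \leq 2a^2+2b^2$, both \eqref{eq : appendix beta b} and \eqref{eq : appendix beta b tilda} reduce to the single estimate
\begin{equation*}
\sum_{j=1}^N \norm{\left(\hbar^{1/2}\boldsymbol{\hat{\mathbb{A}}}(x_j) - \boldsymbol{\mathbb{A}}_\alpha(q_j)\right)\Psi}^2 \lesssim \boldsymbol{\beta}^a(\Psi,q) + \boldsymbol{\beta}^c(\Psi,\alpha) + \hbar \ .
\end{equation*}
Indeed, $\boldsymbol{\beta}^a$ and $\boldsymbol{\beta}^c$ appear in both $\boldsymbol{\beta}$ and $\boldsymbol{\Tilde{\beta}}$, so the same bound closes either direction. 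The domain assumption $\Psi \in \mathcal{D}((-\Delta)^{1/2})\cap\mathcal{D}(\mathbf{x})\cap\mathcal{D}(\mathcal{N}^{1/2})$, together with \eqref{eq : operator inequality that makes beta b wd}, ensures that all terms are finite.

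To prove the field estimate, I split, exactly as in the treatment of $\textnormal{I}_b$ in the proof of Lemma \ref{lemma : Growth of the functional}:
\begin{equation*}
\hbar^{1/2}\boldsymbol{\hat{\mathbb{A}}}(x_j) - \boldsymbol{\mathbb{A}}_\alpha(q_j) = \left(\hbar^{1/2}\boldsymbol{\hat{\mathbb{A}}}(x_j) - \boldsymbol{\mathbb{A}}_\alpha(x_j)\right) + \left(\boldsymbol{\mathbb{A}}_\alpha(x_j) - \boldsymbol{\mathbb{A}}_\alpha(q_j)\right) \ .
\end{equation*}

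\textbf{Second term.} Since $\alpha \in \mathfrak{h}^\sigma$ with $\sigma \geq 1/2$, Lemma \ref{lemma : estimates on the classical elec potential} gives $\norm{\nabla\boldsymbol{\mathbb{A}}_\alpha}_{L^\infty} \lesssim \norm{|\cdot|^{1/2-\sigma}\mathcal{F}[\kappa]}_{L^2}\norm{\alpha}_{\dot{\mathfrak{h}}^\sigma}$. The mean value theorem, as in \eqref{eq : taylor but no eq}, then yields the pointwise bound $|\boldsymbol{\mathbb{A}}_\alpha(x_j)-\boldsymbol{\mathbb{A}}_\alpha(q_j)| \lesssim \norm{\alpha}_{\mathfrak{h}^\sigma}|x_j-q_j|$. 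Hence this term contributes at most $C\boldsymbol{\beta}^a$, with $C$ depending on $\norm{u}_{X^\sigma}$.

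\textbf{First term.} Writing $\boldsymbol{\mathbb{A}}$ through $\mathbf{G}_{x_j}$, this term equals $\mathfrak{C}+\mathfrak{C}^*$, where
\begin{equation*}
\mathfrak{C} = \sum_{\lambda=1,2}\int_{\mathbb{R}^3} \overline{\mathbf{G}_{x_j}(k,\lambda)}\left(\hbar^{1/2}a_{k,\lambda} - \alpha(k,\lambda)\right)\textnormal{d}k \ .
\end{equation*}
I estimate the two pieces separately.
\begin{itemize}
\item By Cauchy--Schwarz in $(k,\lambda)$, using $\sup_x\norm{\mathbf{G}_x}_{\mathfrak{h}} \lesssim \norm{|\cdot|^{-1/2}\mathcal{F}[\kappa]}_{L^2} < \infty$ and \eqref{eq : rewriting for beta c}, I get $\norm{\mathfrak{C}\Psi}^2 \lesssim \boldsymbol{\beta}^c(\Psi,\alpha)$.
\item For the adjoint, the CCR computation \eqref{eq : CCR reasoning to obtain beta c} gives $\norm{\mathfrak{C}^*\Psi}^2 = \norm{\mathfrak{C}\Psi}^2 + \hbar\norm{\mathbf{G}_{x_j}}^2_{\mathfrak{h}}$, where the last norm is independent of $x_j$. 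This is precisely where the additive $\hbar$ comes from.
\end{itemize}
Summing over $j$ and combining the two terms closes the estimate.

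\textbf{Main obstacle.} The delicate point is justifying the CCR identity with $x_j$ an operator rather than a number. I would argue that $\mathbf{G}_{x_j}$ acts fiberwise in $\mathbf{x}$: $[\mathfrak{C},\mathfrak{C}^*] = \hbar\norm{\mathbf{G}_{x_j}}^2_{\mathfrak{h}}$ holds for each fixed $\mathbf{x}$, and the right-hand side is a constant because $|\mathbf{G}_x(k,\lambda)|$ does not depend on $x$. The identity is then legitimate on $\mathcal{D}(\mathcal{N}^{1/2})$, first for vectors with finitely many photons and then by density. The remaining care is to track that the constant $C$ depends only on $N$, $\kappa$, $\sigma$ and $\norm{u}_{X^\sigma}$. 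In the application, $u=u(t)$, and I would invoke Lemma \ref{lemma : A priori estimates for the Newton--Maxwell solutions}; this is consistent with how \eqref{eq : appendix beta b tilda} is used at time $t$ in the proof of Theorem \ref{thm : main result}.
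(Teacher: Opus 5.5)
Your proposal is correct and follows the paper's proof essentially step for step. You use the same triangle-inequality reduction of $P_j-\Tilde{p}_j$ versus $-i\hbar\nabla_j-p_j$, and the same split into $\hbar^{1/2}\boldsymbol{\hat{\mathbb{A}}}(x_j)-\boldsymbol{\mathbb{A}}_\alpha(x_j)$, bounded by $\boldsymbol{\beta}^c+\hbar$ via Cauchy--Schwarz plus the CCR identity for the adjoint, and $\boldsymbol{\mathbb{A}}_\alpha(x_j)-\boldsymbol{\mathbb{A}}_\alpha(q_j)$, bounded by $\boldsymbol{\beta}^a$ via the Lipschitz bound on $\boldsymbol{\mathbb{A}}_\alpha$; your only additions are details the paper cites by analogy, namely $\|\mathfrak{C}^*\Psi\|^2=\|\mathfrak{C}\Psi\|^2+\hbar\|\mathbf{G}_{x_j}\|_{\mathfrak{h}}^2$ for normalized $\Psi$ and the dependence of $C$ on $\|u\|_{X^\sigma}$.
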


\begin{proof}
Recall the notation $\Tilde{p}_j = p_j - \boldsymbol{\mathbb{A}}_{\alpha}(q_j)$. Using twice the triangle inequality, we have 
\begin{equation}
\begin{split}
&\left \lvert \norm{\left(-i\hbar \nabla_j - p_j \right)\Psi} - \norm{\left(-i\hbar \nabla_j - \hbar^{1/2} \boldsymbol{\hat{\mathbb{A}}}(x_j) -\Tilde{p}_j  \right)\Psi} \right \lvert
\\
&\quad 
 \leq \norm{\left(\hbar^{1/2} \boldsymbol{\hat{\mathbb{A}}}(x_j) - \boldsymbol{\mathbb{A}}_{\alpha}(x_j) \right)\Psi} +  \norm{\left(\boldsymbol{\mathbb{A}}_{\alpha}(x_j) - \boldsymbol{\mathbb{A}}_{\alpha}(q_j) \right)\Psi} \ . \\
\end{split}
\end{equation}
Hence applying the inequality $(a+b)^2 \leq 2(a^2 + b^2)$ yields
\begin{equation}
\begin{split}
	\boldsymbol{\beta}^b(\Psi, p) & \lesssim \boldsymbol{\Tilde{\beta}}^b(\Psi, p) + \sum_{j=1}^N \norm{\left(\hbar^{1/2} \boldsymbol{\hat{\mathbb{A}}}(x_j) - \boldsymbol{\mathbb{A}}_{\alpha}(x_j) \right)\Psi}^2 + \sum_{j=1}^N \norm{\left(\boldsymbol{\mathbb{A}}_{\alpha}(x_j) - \boldsymbol{\mathbb{A}}_{\alpha}(q_j) \right)\Psi}^2 \ . \\
\end{split}
\end{equation}
Now, in view of arguments already used before (see for example \eqref{eq : bound for T j ell m 1 in beta b term 1} and \eqref{eq : control for Ibetab2}), we can bound
\begin{equation} \label{eq : ineqs beta c beta a appendix}
\begin{split}
	\sum_{j=1}^N \norm{\left(\hbar^{1/2} \boldsymbol{\hat{\mathbb{A}}}(x_j) - \boldsymbol{\mathbb{A}}_{\alpha}(x_j) \right)\Psi}^2 & \lesssim  \boldsymbol{\beta}^c(\Psi,\alpha) + \hbar  \\
	\sum_{j=1}^N \norm{\left(\boldsymbol{\mathbb{A}}_{\alpha}(x_j) - \boldsymbol{\mathbb{A}}_{\alpha}(q_j) \right)\Psi}^2 & \lesssim \boldsymbol{\beta}^a(\Psi, q) \ , \\
\end{split}
\end{equation}
from which we deduce \eqref{eq : appendix beta b}. \eqref{eq : appendix beta b tilda} is obtained similarly. \\
\end{proof}

\section*{Acknowledgements}

Funded by the Deutsche Forschungsgemeinschaft (DFG, German Research Foundation) -- Project number 580841385.  The authors gratefully acknowledge this support.

\printbibliography

\end{document}

---------- DELETED IN THE MAIN SCRIPT / APPENDIX D ----------

\section{Check of Assumption \eqref{assumptions : init 1} for a Gaussian particle} \label{appendix : gaussian particle}

The standard Gaussian wave function or wave packet for given position and momentum $(q,p) \in \mathbb{R}^3 \times \mathbb{R}^3$ and with variance $\hbar$ is defined by :
\begin{equation} \label{eq : gaussian wave packet 3d}
\psi^\hbar_{q,p}(x) = (\pi \hbar)^{-3/4} \exp \left( - \frac{|x - q|^2}{2 \hbar} \right) \exp \left( \frac{i}{\hbar}p \cdot \left(x - q \right)\right) \ , \ \forall x \in \mathbb{R}^3 \ . \\
\end{equation}
We place ourselves in the setting where the initial data has the following structure :
\begin{equation} \label{eq : product stucture initial data}
    \Psi = \psi^\hbar_{q,p} \otimes W(\hbar^{-1/2}\alpha) \Omega \ . \\
\end{equation}
Note then that in view of this product form, we only have to show that the variance in position and in shifted momentum against $\psi^\hbar_{q,p}$ are small. Let us compute the variance in position. Since the second exponential in \eqref{eq : gaussian wave packet 3d} is of modulus $1$, we compute, with the change of variables $x \leftarrow \hbar^{-1/2} \left( x - q \right)$ :
\begin{equation}
\begin{split}
	\norm{(x - q)\psi^\hbar_{q,p}}^2_{L^2(\mathbb{R}^3)} = \pi^{-3/2} \hbar \int_{\mathbb{R}^3}|x|^2e^{-|x|^2} \textnormal{d}x = \frac{3}{2} \hbar \ .
\end{split}
\end{equation}
Here we used the standard result that $\int_{\mathbb{R}^3}|x|^2e^{-|x|^2} \textnormal{d}x = \frac{3}{2}\pi^{3/2} $. Next, we compute the variance in shifted momentum. Note first that there holds
\begin{equation}
\boldsymbol{\beta}^c(\Psi, \alpha) = \hbar \norm{\Psi_0}^2_{L^2(\mathbb{R}^{3N})} \ps{\Omega}{\mathcal{N}\Omega}_{\mathfrak{F}} = 0 \ .
\end{equation}
Hence, using Lemma \eqref{eq : lemma equivalence beta b beta b tilda} with $N=1$, $t=0$ and the specific choice of \eqref{eq : gaussian wave packet 3d}, we see that 
\begin{equation}
\norm{\left(-i \hbar \nabla - \hbar^{1/2}\boldsymbol{\hat{\mathbb{A}}}(x) - \left( p -\boldsymbol{\mathbb{A}}_{\alpha}(q) \right)\right) \psi^\hbar_{\mathbf{X}_0,p}}^2_{L^2(\mathbb{R}^{3})} \lesssim \norm{\left(-i \hbar \nabla - p \right)\psi^\hbar_{q,p}}^2_{L^2(\mathbb{R}^{3})} + \hbar
\end{equation}
Let us then compute the norm in the RHS. In view \eqref{eq : gaussian wave packet 3d} and using Parseval, we can write
\begin{equation}
\begin{split}
\norm{\left(-i \hbar \nabla - p \right)\psi^\hbar_{q,p}}^2_{L^2(\mathbb{R}^{3})} & = \norm{\left( \hbar k - p \right) \mathcal{F}\left(\psi^\hbar_{q,p}\right)}^2_{L^2(\mathbb{R}^{3})} \\
& = (\pi \hbar)^{-3/2} \norm{\left( \hbar k - p \right) \mathcal{F}\left(G_{1 / 2\hbar}\right)\left( \frac{p}{\hbar} -k \right)}^2_{L^2(\mathbb{R}^{3})} \\
\end{split}
\end{equation}
Here we defined the Gaussian function $G_\alpha(\xi) = e^{-\alpha |\xi|^2}$. Now, recall that for $\Ree(\alpha) > 0$, there holds
\begin{equation}
\mathcal{F}\left( G_\alpha \right)(x) = (2\alpha)^{-3/2} G_{1/4 \alpha}(x)
\end{equation}
Thus we find :
\begin{equation}
\begin{split}
	\norm{\left(-i \hbar \nabla - p \right)\psi^\hbar_{q,p}}^2_{L^2(\mathbb{R}^{3})} & = (\pi \hbar)^{-3/2} \hbar^3 \int_{\mathbb{R}^3} |\hbar k - p|^2 e^{- |\hbar k - p|^2/\hbar} \textnormal{d}k \\
	& = \frac{3}{2}\hbar
\end{split}
\end{equation}
In the last line, we used again the standard gaussian formula from above. This allows us to conclude. \\

---------- OLD DUHAMEL FORMULA FOR THE A PRIORI ESTIMATES ON THE NM SOLUTIONS ----------

Here instead of writing a Duhamel formula for each component of $u(t)$, we write the Duhamel for $u(t)$ as in \cite{ammari2022derivationclassicalelectrodynamicscharges} : \begin{equation} \label{eq : duhamel NM}
    u(t) = \mathcal{D}_0(t)u_0 + \int_0^t \mathcal{D}_0(t-s) F(u(s)) \textnormal{d}s
\end{equation}
where $\mathcal{D}_0(.)$ is the \textit{free field flow}, which action on $X^\sigma$ is given by 
\begin{equation}
    \mathcal{D}_0(t)(p,q,\alpha) = (p,q,e^{-it|.|}\alpha) \ , \\
\end{equation}
and $F$ is the \textit{nonlinearity} of the Newton--Maxwell system, given by :
\begin{equation}
    \begin{split}
        F(u)_{q_j} & = 2 \Tilde{p}_j \\
        F(u)_{p_j} & = 2 \sum_{\ell= 1}^d \Tilde{p}_j^\ell \ \nabla \boldsymbol{\mathbb{A}}_{\alpha}^\ell (q_j) - \sum_{k \neq j} \left( \nabla V \right) \left(q_j - q_k \right) \\
        F(u)_\alpha(k,\lambda) &= 2i \sum_{j=1}^N \mathbf{G}_{q_j}(k,\lambda) \cdot \Tilde{p}_j \\
    \end{split}
\end{equation}
where we used again the notation $\Tilde{p}_j$ for the kinetic momentum. A direct application of Lemma \eqref{lemma : estimates on the classical elec potential} gives the following bounds :
\begin{equation} \label{eq : estimates on the nonlinearity}
    \begin{split}
        \left \lvert F(u)_{q_j} \right \lvert & \lesssim_{\kappa} |p_j| + \norm{\alpha}_{\dot{\mathfrak{H}}^{1/2}} \\
         \left \lvert F(u)_{p_j} \right \lvert & \lesssim_{\kappa} \norm{\alpha}_{\dot{\mathfrak{h}}^{1/2}} \sum_{\ell=1}^d \left( |p^\ell_j| + \norm{\alpha}_{\dot{\mathfrak{h}}^{1/2}} \right) + 1\\
         \norm{F(u)_\alpha}_{\mathfrak{h}^\sigma} & \lesssim_{\kappa, \sigma}  \sum_{j=1}^N \left( |p_j| + \norm{\alpha}_{\dot{\mathfrak{h}}^{1/2}} \right) \ . \\
    \end{split}
\end{equation}
Now, as an immediate consequence of the conservation of the energy (see Lemma \ref{lemma : conservation of the energy}) and \eqref{eq : pointwise bounds for V and nabla V}, we have
\begin{equation}
    \begin{split}
        \norm{\alpha_t}_{\dot{\mathfrak{h}}^{1/2}} & \lesssim_\kappa \mathcal{H}(u_0)^{1/2} + 1 \\
        |p_j(t)| & \lesssim_\kappa \mathcal{H}(u_0)^{1/2} + 1 \ . \\
    \end{split}
\end{equation}
This together with \eqref{eq : estimates on the nonlinearity}, \eqref{eq : duhamel NM} and the triangle inequality yield :
\begin{equation}
    \begin{split}
        \norm{u(t)}_{X^\sigma} & \leq \norm{u_0}_{X^\sigma} + \int_0^t \left( \sum_{j=1}^N\left( \left \lvert F(u(s))_{q_j} \right \lvert^2 + \left \lvert F(u(s))_{p_j} \right \lvert^2 \right) + \norm{F(u(s))_\alpha}^2_{\mathfrak{h}^\sigma}  \right)^{1/2}\textnormal{d}s \\
        & \leq \norm{u_0}_{X^\sigma} + Ct \left( \mathcal{H}(u_0) + 1 \right) \\
    \end{split}
\end{equation}
Next, using $\norm{\alpha}_{\dot{\mathfrak{h}}^{1/2}} \leq \norm{\alpha}_{\mathfrak{h}^\sigma}$, we have 

from which follows \eqref{eq : first bounds in a priori for NM}. \\

---------- CONSERVATION OF THE ENERGY ----------

\section{Conservation of the energy for the Newton--Maxwell system}
Here we state and prove the conservation of the energy for the Newton--Maxwell system, by using the Duhamel formulation of the equations, with the subtelty that we write a Duhamel formula for $|\Tilde{p}_j(t)|^2$ rather than $p_j(t)$, which allows for a suitable cancellation. 

\begin{lemma}[Conservation of the energy] \label{lemma : conservation of the energy}
    Let $\sigma \in [1/2,1]$. Let $u_0 = (q_0, p_0, \alpha_0) \in X^\sigma$ and consider $u(t) = (q(t), p(t), \alpha_t)$ the associated solution to \eqref{eq : newton motion eqs PF}. Then, the corresponding energy functional $\mathcal{H}$ defined in \eqref{eq : energy NM} is conserved along the flow : for all $t \geq 0$, there holds
    \begin{equation} \label{eq : Conservation of the energy}
        \mathcal{H}(u(t)) = \mathcal{H}(u_0) \ . \\
    \end{equation}
\end{lemma}
\begin{proof}
First, let us note that, using the symmetry of $\nabla V$ and the chain rule, there holds
\begin{equation}
    \dt \sum_{1 \leq j < k \leq N} V(q_j(t) - q_k(t)) = 2 \sum_{j=1}^N \sum_{k \neq j} (\nabla V)(q_j(t) - q_k(t)) \cdot \Tilde{p}_j(t) \ , \\
\end{equation}
so that, from the fundamental theorem of calculus, 
\begin{equation}
     2 \int_0^t \sum_{j=1} \sum_{k \neq j} \Tilde{p}_j(s) \cdot (\nabla V)(q_j(s) - q_k(s)) \textnormal{d}s = \sum_{1 \leq j < k \leq N} V(q_j(t) - q_k(t)) - \sum_{1 \leq j < k \leq N} V(q_{j,0} - q_{k,0}) \ . \\
\end{equation}
Let us now recall the Duhamel formulas for $|\Tilde{p}_j(t)|^2 = \left \lvert p_j(t) - \kappa \ast \mathbf{A}_{\alpha_t}(q_j(t)) \right \lvert^2$ and $\alpha_t$, which are computed in the proof of Lemma \eqref{lemma : A priori estimates for the Newton--Maxwell solutions} :
    \begin{equation}
    \begin{split}
        |\Tilde{p}_j(t)|^2 & = |\Tilde{p}_{j,0}|^2 + 2 \int_0^t \Tilde{p}_j(s) \cdot  \boldsymbol{\mathbb{E}}_{\alpha_s}(q_j(s)) \textnormal{d}s - 2 \int_0^t \sum_{k \neq j} \Tilde{p}_j(s) \cdot (\nabla V)(q_j(s) - q_k(s)) \textnormal{d}s \\
        \alpha_t(k,\lambda) & = e^{-it|k|}\alpha_0(k,\lambda) + 2i \int_0^t e^{-i(t-s)|k|} \sum_{j=1}^N \mathbf{G}_{q_j(s)}(k,\lambda) \cdot \Tilde{p}_j(s)  \textnormal{d}s \ . \\ 
    \end{split}
\end{equation}
It will be useful to recall the Fourier expansion of the regularized classical $\mathbf{A}$ and $\mathbf{E}$ fields :
\begin{equation}
    \begin{split}
        \boldsymbol{\mathbb{A}}_{\alpha}(q) & = 2 \Ree \sum_{\lambda= 1,2} \int_{\mathbb{R}^3} \frac{\mathcal{F}[\kappa](k)}{\sqrt{2|k|}} \boldsymbol{\varepsilon}_\lambda(k) e^{i k \cdot q} \alpha(k,\lambda) \textnormal{d}k \\
        \boldsymbol{\mathbb{E}}_{\alpha}(q) & = 2 \Ree \ i \sum_{\lambda= 1,2} \int_{\mathbb{R}^3} \sqrt{\frac{|k|}{2}} \mathcal{F}[\kappa](k)\boldsymbol{\varepsilon}_\lambda(k) e^{i k \cdot q} \alpha(k,\lambda) \textnormal{d}k \ . \\
    \end{split}
\end{equation}
The energy for the Newton--Maxwell system, using the kinetic momentum $\Tilde{p}_j$, writes :
\begin{equation}
    \mathcal{H}(u) = \sum_{j=1}^N |\Tilde{p}_j|^2 + \sum_{1 < j \leq k < N}V(q_j - q_k) + \norm{\alpha}^2_{\dot{\mathfrak{h}}^{1/2}} \ . \\
\end{equation}
We thus insert the Duhamel formula for $\alpha_t$ in $\boldsymbol{\mathbb{E}}_{\alpha_t}(q_j(t))$. We have :
\begin{equation} \label{eq : expression for E field with duhamel alpha}
    \begin{split}
        \boldsymbol{\mathbb{E}}_{\alpha_t}(q_j(t)) & = 2 \Ree \ i \sum_{\lambda= 1,2} \int_{\mathbb{R}^3} \sqrt{\frac{|k|}{2}} \mathcal{F}[\kappa](k)\boldsymbol{\varepsilon}_\lambda(k) e^{i k \cdot q_j(t)} e^{-i |k|t} \alpha_0(k,\lambda) \textnormal{d}k \\
        & \quad - 2\Ree \sum_{\lambda= 1,2} \sum_{k=1}^N \int_{\mathbb{R}^3} \int_0^t |\mathcal{F}[\kappa](k)|^2 e^{i k \cdot (q_j(t) - q_k(s))} e^{-i |k|(t-s)} \boldsymbol{\varepsilon}_\lambda(k) \boldsymbol{\varepsilon}_\lambda(k) \cdot \Tilde{p}_k(s) \textnormal{d}s \textnormal{d}k \\
        & := \mathcal{T}_{j  t}^{(1)} + \mathcal{T}_{j t}^{(2)} \ . \\
    \end{split}
\end{equation}
Let us now compute $\norm{\alpha_t}^2_{\dot{\mathfrak{h}}^{1/2}}$. Using the Duhamel formula for $\alpha_t$ and expanding $|\alpha_t(k,\lambda)|^2$, we have :
\begin{equation}
    \begin{split}
        \norm{\alpha_t}^2_{\dot{\mathfrak{h}}^{1/2}} & = \norm{\alpha_0}^2_{\dot{\mathfrak{h}}^{1/2}} - 4 \Ree \ i \sum_{\lambda= 1,2} \sum_{j=1}^N \int_{\mathbb{R}^3} \int_0^t \sqrt{\frac{|k|}{2}} \mathcal{F}[\kappa](k)\boldsymbol{\varepsilon}_\lambda(k) \cdot \Tilde{p}_j(s) e^{i k \cdot q_j(s)} e^{-i |k|s} \alpha_0(k,\lambda) \textnormal{d}s \textnormal{d}k \\
        & \quad \quad + 2 \sum_{\lambda = 1,2} \sum_{j,r = 1}^N \int_{\mathbb{R}^3} \int_0^t\int_0^t |\mathcal{F}[\kappa](k)|^2 e^{i k \cdot (q_j(s) - q_r(\tau))} e^{-i |k|(\tau-s)} \boldsymbol{\varepsilon}_\lambda(k) \cdot \Tilde{p}_j(s) \boldsymbol{\varepsilon}_\lambda(k) \cdot \Tilde{p}_r(\tau) \textnormal{d}s \textnormal{d}\tau \textnormal{d}k \\
        & = \norm{\alpha_0}^2_{\dot{\mathfrak{h}}^{1/2}} - 2 \sum_{j=1}^N\int_0^t \Tilde{p}_j(s) \cdot \mathcal{T}^{(1)}_{js} \textnormal{d}s -  2 \sum_{j=1}^N\int_0^t \Tilde{p}_j(s) \cdot \mathcal{T}^{(2)}_{js} \textnormal{d}s \\
        & = \norm{\alpha_0}^2_{\dot{\mathfrak{h}}^{1/2}} - \sum_{j=1}^N \int_0^t \Tilde{p}_j(s) \cdot \boldsymbol{\mathbb{E}}_{\alpha_s}(q_j(s)) \textnormal{d}s \ , \\
    \end{split}
\end{equation}
where we used \eqref{eq : expression for E field with duhamel alpha} in the second and third lines. Gathering the above calculations, we see that all the non constant terms in $t$ cancel, leading to
\begin{equation}
    \begin{split}
        \mathcal{H}(u(t))  = \sum_{j=1}^N|p_{j,0}|^2 + \sum_{1 \leq j < k \leq N} V(q_{j,0} - q_{k,0}) + \norm{\alpha_0}^2_{\dot{\mathfrak{h}}^{1/2}} 
         = \mathcal{H}(u_0) \ , \\
    \end{split}
\end{equation}
as claimed. \\
\end{proof}

---------- OLD PROOF FOR CONVERGENCE OF TERMS IN BETA B ----------

 Using that $\Tilde{p}^m_j(t)$ and $F^{j \ell m}_t$ are scalars, expanding the inner product using the definitions of $\mathbb{G}^\ell_{jt}$ and $\hat{F}^{j  \ell m}$, we have, for all $s \in [0,t]$,
\begin{equation}
    \begin{split}
         \left \lvert \mathcal{E}^{(b,1)}_{n,s} - \mathcal{E}^{(b,1)}_s \right \lvert  \lesssim \sum_{j=1}^N \sum_{\ell, m=1}^3 C_{j \ell m}(s) \bigg (  & \left \lvert \ps{\Psi_s^{(n)}}{P_j^\ell  \Psi_s^{(n)}} - \ps{\Psi_s}{P_j^\ell  \Psi_s}
        \right \lvert + \left \lvert  \norm{\Psi_s^{(n)}}^2 - \norm{\Psi_s}^2  \right \lvert \\
        + &  \left \lvert \ps{\Psi_s^{(n)}}{P_j^\ell \partial_{jm} \boldsymbol{\hat{\mathbb{A}}}^\ell (x)  \Psi_s^{(n)}} - \ps{\Psi_s}{P_j^\ell \partial_{jm}  \boldsymbol{\hat{\mathbb{A}}}^\ell (x)  \Psi_s}
        \right \lvert \\
        + &  \left \lvert \ps{\Psi_s^{(n)}}{P_j^\ell \partial_{j\ell} \boldsymbol{\hat{\mathbb{A}}}^m (x)  \Psi_s^{(n)}} - \ps{\Psi_s}{P_j^\ell \partial_{j\ell}\boldsymbol{\hat{\mathbb{A}}}^m (x)  \Psi_s}
        \right \lvert \\
        + & \left \lvert \ps{\Psi_s^{(n)}}{\partial_{jm}  \boldsymbol{\hat{\mathbb{A}}}^\ell(x)  \Psi_s^{(n)}} - \ps{\Psi_s}{\partial_{jm}  \boldsymbol{\hat{\mathbb{A}}}^\ell (x)  \Psi_s}
        \right \lvert  + (\ell \longleftrightarrow m) \bigg) \\
    \end{split}
\end{equation}
where $\left( m \longleftrightarrow \ell \right)$ is a shorthand notation for the expression preceding it with the indices $\ell$ and $m$ exchanged. The constant $C_{j \ell m}(s)$ depends on $s$ via $\Tilde{p}^m_j(s)$, $\partial_{jm} \boldsymbol{\mathbb{A}}^\ell_{\alpha_s} (q_j(s))$, and the same quantities with $\ell$ and $m$ exchanged. Using \eqref{lemma : estimates on the classical elec potential} and \eqref{eq : first bounds in a priori for NM}, these quantities are uniformly bounded in time by a constant $C > 0$. Now, note that $\norm{\Psi_s^{(n)}}^2 - \norm{\Psi_s}^2$ goes uniformly to zero since $e^{-i\frac{s}{\hbar}\mathbb{H}_{\hbar}}$ is unitary. Next, using \eqref{eq : inequality to show cvgce beta b n beta b}, \eqref{eq : estimates in lemma to show cvgce} and the fact that $\com{\mathbb{H}_{\hbar}}{e^{-i\frac{s}{\hbar}\mathbb{H}_{\hbar}}} = 0$, we have the following bounds :
\begin{equation} \label{eq : unif bound for ps P ell for cvgce}
    \begin{split}
        & \sup_{s \in [0,t]} \left \lvert \ps{\Psi_s^{(n)}}{P^\ell  \Psi_s^{(n)}} - \ps{\Psi_s}{P^\ell  \Psi_s}
        \right \lvert \ , \ \sup_{s \in [0,t]} \left \lvert \ps{\Psi_s^{(n)}}{\partial_m  \boldsymbol{\hat{\mathbb{A}}}^\ell(x)  \Psi_s^{(n)}} - \ps{\Psi_s}{\partial_m  \boldsymbol{\hat{\mathbb{A}}}^\ell (x)  \Psi_s}
        \right \lvert \\
        & \quad \quad \quad \quad \quad \quad \quad \quad \quad \quad \lesssim_{N, \hbar, \kappa} \norm{\left( \mathbb{H}_{\hbar} + 1\right)^{1/2}\left(\Psi^{(n)} - \Psi_0 \right)} \norm{\Psi^{(n)}} + \norm{\left( \mathbb{H}_{\hbar} + 1\right)^{1/2}\Psi} \norm{\Psi^{(n)} - \Psi_0}  \ , \\
    \end{split}
\end{equation}
and 
\begin{equation}
    \begin{split}
                & \sup_{s \in [0,t]} \left \lvert \ps{\Psi_s^{(n)}}{P^\ell \partial_m  \boldsymbol{\hat{\mathbb{A}}}^\ell(x)  \Psi_s^{(n)}} - \ps{\Psi_s}{P^\ell \partial_m \boldsymbol{\hat{\mathbb{A}}}^\ell(x) \Psi_s}
        \right \lvert \\
        & \quad \quad \quad \quad \quad \quad \quad \quad \quad \quad \lesssim_{N, \hbar, \kappa} \norm{\left( \mathbb{H}_{\hbar} + 1\right)^{1/2}\left(\Psi^{(n)} - \Psi_0 \right)} \left( \norm{\left( \mathbb{H}_{\hbar} + 1\right)^{1/2}\Psi^{(n)}} + \norm{\left( \mathbb{H}_{\hbar} + 1\right)^{1/2}\Psi} \right) \ . \\
    \end{split}
\end{equation}
where we dropped the $j$ subscript for clarity. This allows us to conclude that 
\begin{equation}
    \sup_{s \in [0,t]}  \left \lvert \mathcal{E}^{(b,1)}_{n,s} - \mathcal{E}^{(b,1)}_s \right \lvert \underset{n \rightarrow + \infty}{\longrightarrow} 0 \ , \\
\end{equation}
as desired. \\